\theoremstyle{definition}
\newtheorem{defn}{\protect\definitionname}
\theoremstyle{definition}
 \newtheorem{example}{\protect\examplename}
\newcommand{\id}{\mathbbm{1}}
\let\oldsqrt\sqrt
\def\sqrt{\mathpalette\DHLhksqrt}
\def\DHLhksqrt#1#2{
\setbox0=\hbox{$#1\oldsqrt{#2\,}$}\dimen0=\ht0
\advance\dimen0-0.2\ht0
\setbox2=\hbox{\vrule height\ht0 depth -\dimen0}
{\box0\lower0.4pt\box2}}
\let\originalleft\left
\let\originalright\right
\renewcommand{\left}{\mathopen{}\mathclose\bgroup\originalleft}
\renewcommand{\right}{\aftergroup\egroup\originalright}
\renewcommand\bra[1]{{\langle{#1}|}}
\renewcommand\ket[1]{
  \@ifnextchar\bra{\k@t{#1}\!}{\k@t{#1}}
}
\newcommand\k@t[1]{{|{#1}\rangle}}
\newcommand{\Ab}[1]{ \left| #1 \, \right|}
\DeclareMathOperator{\Tr}{Tr}
\def\th@newremark{\th@remark\thm@headfont{\bfseries}}
\theoremstyle{newremark}
\newtheorem{observation}{Observation}
\providecommand{\definitionname}{Definition}
\providecommand{\examplename}{Example}
\theoremstyle{plain}
\newtheorem{thm}{\protect\theoremname}
\theoremstyle{plain}
\newtheorem{cor}{\protect\corollaryname}
\theoremstyle{plain}
\newtheorem{prop}{\protect\propositionname}
\providecommand{\corollaryname}{Corollary}
\providecommand{\propositionname}{Proposition}
\providecommand{\theoremname}{Theorem}
\begin{document}

\title{Blurred quantum Darwinism across quantum reference frames}

\author{Thao P. Le}

\email{thao.le.16@ucl.ac.uk}
\affiliation{Department of Physics and Astronomy, University College London, Gower Street, London WC1E 6BT}

\author{Piotr Mironowicz}

\affiliation{Department of Algorithms and System Modeling, Faculty of Electronics, Telecommunications and Informatics, Gda\'{n}sk University of Technology}
\affiliation{National Quantum Information Centre, University of Gda\'{n}sk, Wita Stwosza 57, 80-308 Gda\'{n}sk, Poland}
\affiliation{International Centre for Theory of Quantum Technologies, University of Gdansk, Wita Stwosza 63, 80-308 Gdansk, Poland}

\author{Pawe\l{} Horodecki}

\affiliation{National Quantum Information Centre, University of Gda\'{n}sk, Wita Stwosza 57, 80-308 Gda\'{n}sk, Poland}
\affiliation{International Centre for Theory of Quantum Technologies, University of Gdansk, Wita Stwosza 63, 80-308 Gdansk, Poland}

\date{\today}

\begin{abstract}
Quantum Darwinism describes objectivity of quantum systems via their correlations with their environment\textemdash information that hypothetical observers can recover by measuring the environments. However, observations are done with respect to a frame of reference. Here, we take the formalism of {[}Giacomini, Castro-Ruiz, \& Brukner. \emph{Nat Commun} \textbf{10}, 494 (2019){]}, and consider the repercussions on objectivity when changing quantum reference frames. We find that objectivity depends on non-degenerative relative separations, conditional state localisation, and environment macro-fractions. There is different objective information in different reference frames due to the interchangeability of entanglement and coherence, and of statistical mixing and classical correlations. As such, \textit{objectivity is subjective} across quantum reference frames.
\end{abstract}

\maketitle

\section{Introduction}

The emergence of the classical world from the underlying quantum mechanics remains a fundamental riddle. Quantum Darwinism is one particular approach that describes the emergence of \emph{objectivity} through the spread of information \citep{Zurek2003}. A system state $S$ is objective (or inter-subjective \citep{Mironowicz2017}) when many independent observers can determine the state of $S$ independently, without perturbing it, and arrive at the same result \citep{Ollivier2004,Horodecki2015}.

Quantum Darwinism can be seen as an extension of the decoherence theory. As systems interact with their surrounding environments, decoherence theory describes how quantum systems lose their coherence and decohere into a preferred pointer basis \citep{Joos1985,Schlosshauer2005,Schlosshauer2007}. The environment is not unchanged through this process\textemdash the system becomes correlated with the environment. Quantum Darwinism occurs if the information about the system has been proliferated into multiple fractions of the environment, such that many observers can access independent environments and gain equivalent information about the system. Objective states can be described either with \.Zurek's quantum Darwinism \citep{Zurek2009}, strong quantum Darwinism \citep{Le2019}, or spectrum broadcast structure \citep{Horodecki2015}. The emergence of these states have been studied extensively (for example, recent works include Refs.~\citep{Mironowicz2018,Roszak2019,Campbell2019,Le2019b,Chen2019,Unden2019,Lorenzo2019,Milazzo2019,Oliveira2019,Garcia-Perez2019,Tuziemski2019,Roszak2019a, Colafranceschi2020,Lorenzo2020,Qi2020}).

A key component of quantum Darwinism is the measurement performed by observers\textemdash which in physics, is done relative to some reference frame. However, in works thus far, one implicitly \emph{assumes} that all observers share the same classical frame. However, is the objectivity still consistent if observers do not share the same frame?

While classical reference frames are well established, there are numerous different proposals for describing quantum reference frames that focus on different aspects (for example, \citep{Bartlett2007,Gour2008,Palmer2014,Smith2016,Popescu2018,Giacomini2019,Loveridge2018}). In this paper, we apply the framework of \citet{Giacomini2019}, in which quantum reference frames are associated with a physical quantum state and vice-versa.

We examine objective states in different quantum reference frames. The method of \citet{Giacomini2019} allows us to move to the reference frame associated with any particular environment state which in turn is associated with the hypothetical observer frame. Entanglement and coherence have become interchangeable frame-dependent properties; as are statistical mixing and classical correlations. Such correlations are an intrinsic part of quantum Darwinism, hence, in general, objectivity \emph{does not} remain the same in different quantum reference frames. However, there \emph{are} certain conditions in which objectivity \emph{is} consistent, and conditions in which some kind of objectivity exists. To clearly show this, we consider static particles, such that changing quantum reference frames requires only changes in relative position, and we use the clear state structure afforded by spectrum broadcasting \citep{Horodecki2015}.

We show that, if all system and environment positions are exactly localised and randomly distributed (say, due random noise) then objectivity is consistent in all frames. We demonstrate that non-matching relative positions between all states is key to this consistency.

However, by allowing the system and environments to have a non-zero, continuous spread, objectivity distorts and blurs when changing quantum reference frames. The internal statistical mixedness and coherences of the environment states now play a crucial role in distributing new correlations. We find that the distinguishability of the other environment states depends on an interplay of relative distance separations and relative spreads; and how large macro-fractions of environments may be required to recover objectivity.

Finally, we analyse scenarios with a system interacting with environments to show how objectivity can arise dynamically, and to show how these factors---coherences, spectrum broadcasting, mixedness, and state separation---affect the level of objectivity in different frames.

This paper is structured as follows. In Sec. \ref{sec:Preliminaries}, we describe the frameworks of spectrum broadcast structure and quantum reference frames. In Sec. \ref{sec:Perfect-localisation-and-consistent-objectivity}, we depict some states that have consistent objectivity in all relevant quantum reference frames. In Sec. \ref{sec:Continuous-spread-and-blurred-objectivity} we examine objective states with a Gaussian-like spread. We describe the distortion of objectivity, and investigate the requirements for environment-state distinguishability that is a necessary component of quantum Darwinism. In Sec. \ref{sec:perfect_objectivity_theorems}, we prove the precise conditions for perfect objectivity in all quantum reference frames. In Sec. \ref{sec:Numerical-simulations}, we numerical investigate a fully coherent model involving a dynamic interaction between a system and two environments. We conclude in Sec. \ref{sec:Conclusion}.

\section{Preliminaries\label{sec:Preliminaries}}

\subsection{Spectrum broadcast structure\label{subsec:Quantum-Darwinism}}

In quantum Darwinism, we consider a central system $S$ that interacts and becomes correlated with its surrounding environment $E$. Typically, only a fragment, $F\subset E$ of the environment is measured and evaluated against the conditions for objectivity\textemdash as the full pure system-environment will retain coherences and entanglement under a global unitary evolution \citep{Roszak2019}. There are number of different frameworks that describe the properties of an objective state \citep{Zurek2009,Le2019,Horodecki2015}, each corresponding to slightly different strengths of objectivity. In this paper, we are focused on spectrum broadcast structure \citep{Horodecki2015}, because it has a clear state structure that allows us to explicitly calculate how the state changes under quantum reference frame transformations. Note that from here, when we speak of ``environment'', we refer to the \emph{observed} environment.
\begin{defn}
\textbf{Spectrum broadcast structure (SBS)} \citep{Horodecki2015}. \label{def:SBS}
A system-environment has spectrum broadcast structure when the joint state can be written as
\begin{align}
\rho_{SE} & =\sum_ip_i\ket{i}\bra{i}_S\otimes\rho_{E_1|i}\otimes\cdots\otimes\rho_{E_N|i},\label{eq:SBS}
\end{align}
where $\left\{ \ket{i}_S\right\} $ is the pointer basis, $p_i$ are probabilities, and all states $\rho_{E_k|i}$ are perfectly distinguishable, \emph{i.e.} $\Tr \left( \rho_{E_k|i}\rho_{E_k|j} \right) = 0\,\forall\,i\neq j,$ for each observed environment $E_k$.
\end{defn}
These states have zero discord~\cite{Ollivier2001} between the system and environments, feature maximal classical correlations between the system and environments, and satisfy strong independence (see Definition~\ref{defn:strongIndependence}).
All states with spectrum broadcast structure are objective, though not all objective states have spectrum broadcast structure \citep{Horodecki2015, Le2019}.

\subsection{Quantum reference frames\label{subsec:Quantum-reference-frames}}

As we noted, there is a number of different prescriptions for reference frames and quantum information (e.g. Refs. \citep{Bartlett2007,Gour2008,Palmer2014,Smith2016}). In this paper, we apply the framework of \citet{Giacomini2019}, which is inherently relational.

We consider the system and environments to be static (\emph{i.e.} without momentum) and distributed across space. Thus, a general reference frame transformation, $\hat{S}_{\text{position}}^{\left(C\rightarrow A\right)}$, is defined here as position only, as follows \citep{Giacomini2019}:
\begin{multline}
\hat{S}_{\text{position}}^{\left(C\rightarrow A\right)}\int dx_A dx_B\Psi\left(x_A,x_B\right)\ket{x_A}_A\ket{x_B}_B\\
=\int dq_B dq_C\Psi\left(-q_C,q_B-q_C\right)\ket{q_B}_B\ket{q_C}_C,
\end{multline}
\emph{i.e.} there is a coordinate transformation,  $x_A\rightarrow-q_C$, $x_B\rightarrow q_B-q_C$. We will always start in an implicit laboratory reference frame $(C)$, and move to the quantum reference frames centered on a particular quantum state. 

For our purposes,  SBS  is inherently mixed. Hence, if the initial state $\rho_{SE_1\cdots E_N}^{(C)}$ in the $(C)$ reference frame (laboratory frame) is
\begin{widetext}
\begin{equation}
\rho_{SE_1\cdots E_N}^{(C)}=\int dx_Sdx_S^{\prime} \left(\prod_{i=1}^{N}\int dx_{E_i}dx_{E_i}^{\prime} \right)\rho\left(x_S,x_{E_1},\ldots,x_{E_N},x_S^{\prime},x_{E_1}^{\prime},\ldots,x_{E_N}^{\prime}\right)\ket{x_S}\bra{x_S^{\prime}}_S\otimes\bigotimes_{j=1}^{N}\ket{x_{E_j}}\bra{x_{E_j}^{\prime}}_{E_j},
\end{equation}
then the transformation to the environment $E_1$ reference frame (without loss of generality) is:
\begin{align}
\label{eq:transform}
\rho_{SCE_2 \cdots E_N}^{\left(E_1\right)}  & =\int dq_S dq_S^{\prime}\int dq_Cdq_C^{\prime}\left(\prod_{i=2}^{N}\int dq_{E_i}dq_{E_i}^{\prime} \right) \ket{q_S}\bra{q_S^{\prime}}_S\otimes\ket{q_C}\bra{q_C^{\prime}}_C\otimes\bigotimes_{j=2}^{N}\ket{q_{E_j}}\bra{q_{E_j}^{\prime}}_{E_j}\nonumber \\
 & \qquad \times\rho\left(q_S-q_C,-q_C,q_{E_2}-q_C,\ldots,q_{E_N}-q_C,q_S^{\prime}-q_C^{\prime},-q_C^{\prime},q_{E_2}^{\prime}-q_C^{\prime},\ldots,q_{E_N}^{\prime}-q_C^{\prime}\right).
\end{align}

\end{widetext}

Entanglement and coherences in the position basis are quantum reference frame dependent \citep{Giacomini2019}. Furthermore, statistical (incoherent) mixtures and classical correlations are also frame dependent. Given that objectivity is built up from correlations between system and environment, and given that the environment states can contain coherences and statistical mixture, changing reference frames can have a serious effect on the objectivity of the system.

\section{Perfect localisation and objectivity in all quantum reference frames \label{sec:Perfect-localisation-and-consistent-objectivity}}

We consider a system $S$ and collection of environments $\left\{ E_i\right\} _{i=1,\ldots,N}$, such that they are objective in the laboratory frame $C$, and in particular have spectrum broadcast structure. The system and environments are located in a one-dimensional, continuous space, with positions $x_{X}$, $X=S,E_i$. In the idealised situation, these positions are perfectly localised, \emph{i.e.} existing at isolated points in space, and this allows us to gain insight into one of factors that contribute to consistent objectivity in all quantum reference frames\textemdash \emph{non-degenerative relative positions.}

We begin with section  \ref{subsec:GHZ-like-objective-states}, where we first examine the simplest, illustrative situation where the objective states have GHZ-like structure. In section \ref{subsec:Perfectly-localised-SBS}, we consider general perfectly localised objective SBS states.

\subsection{GHZ-like objective states\label{subsec:GHZ-like-objective-states}}

One of the simplest objective states possible is the reduced Greenberger\textendash Horne\textendash Zeilinger state (GHZ state); it is simpler yet again if its elements are incoherent in the position basis as follows:
\begin{equation}
\rho_{SE_1\cdots E_N}^{(C)}=\sum_ip_i\ket{x_i^S}\bra{x_i^S}_S\otimes\bigotimes_{j=1}^{N}\ket{x_i^{E_j}}\bra{x_i^{E_j}}_{E_j},\label{eq:GHZ_ISBS_state}
\end{equation}
which is objective provided that all $\left\{ x_i^S\right\}_i$, $\{ x_i^{E_j}\}_i$ are distinct. The implicit laboratory reference frame $(C)$ is perfectly localised and product with the system and environments. The objective information is characterised by the probability distribution $\left\{ p_i\right\}_i$.

In the frame of any of the environments\textemdash we take $E_1$ without loss of generality\textemdash the joint state now involves the laboratory $C$ as one of its subsystems, and now $E_1$ is implicit, perfectly localised and product with all other subsystems:
\begin{align}
\rho_{SCE_2\cdots E_N}^{\left(E_1\right)} & =\sum_ip_i\ket{x_i^S-x_i^{E_1}}\bra{x_i^S-x_i^{E_1}}_S\otimes\ket{-x_i^{E_1}}\bra{-x_i^{E_1}}_C\nonumber \\
 & \phantom{=}\otimes\bigotimes_{j=2}^{N}\ket{x_i^{E_j}-x_i^{E_1}}\bra{x_i^{E_j}-x_i^{E_1}}_{E_j}.\label{eq:GHZ_ISBS_E1frame}
\end{align}
In order for this to still be objective, \emph{and} with the same information as in the laboratory frame $C$, $\{ p_i\} _i$, we require that all $\{ x_i^S-x_i^{E_1}\} _i$ are distinct, and all $\{ x_i^{E_j}-x_i^{E_1}\}_i$ are distinct\textemdash that is, these terms are \emph{non-matching} or \emph{non-degenerate}. 

The majority of states of the form Eq.~(\ref{eq:GHZ_ISBS_state}) remain consistently objective in all quantum reference frames, in the following sense: If all the various positions $\{ x_i^S\}_i$, $\{ x_i^{E_1}\}_i$, etc. are \emph{randomly chosen} from a continuous interval, for example with probability mass function $f_{\text{uni}}\left(x\right)=1$, $x\in\left[0,1\right]$, then the probability that any two are equal is zero: $\mathbb{P}\left(x_i=x_j\right)=0$, due to the nature of discrete sampled numbers from uncountably infinite interval. Hence, any randomly drawn $\left\{ x_i^X\right\} _{i,X}$, $X=S,E_1,\ldots,E_N$ will produce an objective state for Eq.~(\ref{eq:GHZ_ISBS_state}). By the same argument, the probability that any relative separations $\{ x_i^S-x_i^{E_1}\}_i$ are equal is zero: $\mathbb{P}(x_i^S-x_i^{E_1}=x_j^S-x_j^{E_1})=0$, and hence all the terms in the system-environment state in any quantum reference frame, Eq.~(\ref{eq:GHZ_ISBS_E1frame}), are distinct and hence remains objective with the same spectrum probabilities $\left\{ p_i\right\}_i$.

Randomly sampled positions of the system and environment describe disorganised and noisy scenarios and models. However, solid state materials and lattices can have a rigid structure and hence potentially degenerate distances between state positions. In these situations, SBS and objectivity may become trivial in certain quantum reference frames.
\begin{example}
\label{eg:degenerate_localised}Consider the typical reduced GHZ state, where $x_i=i$ for $i=0,1$:
\begin{equation}
\rho_{SE_1\cdots E_N}^{(C)}= p_0 \ket{0}\bra{0}^{\otimes N+1}+p_1 \ket{1}\bra{1}^{\otimes N+1}.
\end{equation}
In the quantum reference frame of environment $E_1$, the state has the form
\begin{align}
\rho_{SCE_2\cdots E_N}^{\left(E_1\right)}= & \ket{0\cdots0}\bra{0\cdots0}_{SE_2\cdots E_N} \nonumber\\
 & \otimes\left(p_0\ket{0}\bra{0}_C+p_1\ket{-1}\bra{-1}_C\right).
\end{align}
The system and the remaining environments are trivially ``objective'' and uncorrelated. Meanwhile, the old information about the system has been shifted into the quantum system of the laboratory reference frame $C$.
\end{example}

\begin{observation}
If all positions are perfectly localised, the \emph{non-degeneracy} of the relative positions of the system and environments is crucial in ensuring the consistent objectivity in all quantum reference frames. If some of the relative distances between positions are not distinct, then the corresponding states can become non-distinguishable and thus degrade the original objectivity.
\end{observation}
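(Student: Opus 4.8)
The plan is to read the Observation off the transformation rule Eq.~(\ref{eq:transform}) applied to perfectly localised SBS states. First I would reduce to the incoherent, single-point-per-branch form of Eq.~(\ref{eq:GHZ_ISBS_state}): in a general perfectly localised SBS state each conditional state $\rho_{E_k|i}$ is a finite mixture of position eigenstates, and perfect distinguishability of branches forces these point-supports to be pairwise disjoint; splitting each branch $i$ into sub-branches on which \emph{every} environment (including the one destined to become the frame) sits at a single point, and relabelling $\{p_i\}$ accordingly, does not alter which pairs of branches are mutually distinguishable. So it is enough to analyse states of the form Eq.~(\ref{eq:GHZ_ISBS_state}).

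Next I would substitute the (product of) delta-function matrix elements of Eq.~(\ref{eq:GHZ_ISBS_state}) into $\rho(q_S-q_C,-q_C,q_{E_2}-q_C,\ldots)$; this pins $q_C=-x_i^{E_1}$, $q_S=x_i^S-x_i^{E_1}$ and $q_{E_j}=x_i^{E_j}-x_i^{E_1}$ in branch $i$, reproducing Eq.~(\ref{eq:GHZ_ISBS_E1frame}). Matching this against Definition~\ref{def:SBS} with $S$ the system and $C,E_2,\ldots,E_N$ the observed environments, the pointer kets $\ket{x_i^S-x_i^{E_1}}$ are orthonormal iff $i\mapsto x_i^S-x_i^{E_1}$ is injective; the $C$-conditional states are perfectly distinguishable iff $i\mapsto x_i^{E_1}$ is injective, which already holds because $E_1$ was a good environment in frame $C$; and for each $j\ge 2$ the $E_j$-conditional states are perfectly distinguishable iff $i\mapsto x_i^{E_j}-x_i^{E_1}$ is injective. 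Running this over all choices of frame particle, the sufficient condition is precisely that for every pair of particles $X\neq Y$---the current frame particle counted as the fixed origin $0$---the relative coordinate $i\mapsto x_i^{X}-x_i^{Y}$ is injective, i.e.\ the relative positions are non-degenerate. Each violation is a single algebraic equation among continuously distributed positions, hence holds only on a measure-zero set; this both re-derives the ``majority of states'' claim before Example~\ref{eg:degenerate_localised} and shows that under non-degeneracy the transformed state is again SBS with the \emph{same} spectrum $\{p_i\}$, so the objective information is consistent across all the quantum reference frames reached this way.

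For the failure direction I would argue that a single degeneracy genuinely degrades the objectivity rather than merely relocating it within the observed registers. Suppose that in the frame of $E_k$ one has $x_i^{X}-x_i^{E_k}=x_j^{X}-x_j^{E_k}$ for some $i\neq j$ and some particle $X$: if $X=S$ the pointer kets for $i$ and $j$ coincide, so the system can carry only the coarse-grained index $\bar{i}$ obtained by merging all such branches, and Eq.~(\ref{eq:GHZ_ISBS_E1frame}) collapses to an SBS state with the strictly coarser distribution $P_{\bar{i}}=\sum_{k\in\bar{i}}p_k$; if $X=E_\ell$ the two conditional states on $E_\ell$ coincide and that environment drops out of the list of faithful records. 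Either way the recoverable objective information about the original branch index is strictly less than $H(\{p_i\})$ once a genuine merge occurs, while the discriminating information survives only in $C$, the register carrying the \emph{old} laboratory frame, which is not part of the observed environment---Example~\ref{eg:degenerate_localised} being the extreme case where all relative coordinates degenerate and $SE_2\cdots E_N$ becomes a trivial product state. I expect the main obstacle to be the reduction step for genuinely mixed conditional states, and the related point of giving ``the frame of $E_k$'' an unambiguous meaning when $E_k$ is spread over several points inside a branch; I would resolve both by first refining the pointer index by the frame particle's position (equivalently, conditioning on its outcome), after which every branch is single-valued and the clean computation above applies.
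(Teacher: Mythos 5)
Your core argument is the same as the paper's: Sec.~\ref{subsec:GHZ-like-objective-states} justifies this Observation exactly by pushing the state of Eq.~(\ref{eq:GHZ_ISBS_state}) through the transformation (\ref{eq:transform}) to obtain Eq.~(\ref{eq:GHZ_ISBS_E1frame}), reading off the injectivity conditions on $i\mapsto x_i^S-x_i^{E_1}$ and $i\mapsto x_i^{E_j}-x_i^{E_1}$, invoking the measure-zero argument for randomly drawn positions, and exhibiting Example~\ref{eg:degenerate_localised} as the degenerate case. What you add beyond the paper is (i) an explicit failure-direction analysis showing that a degeneracy merges pointer branches or deletes an environment from the list of faithful records, so the recoverable information is strictly coarser than $H(\{p_i\})$ with the discriminating data stranded in the unobserved register $C$ --- the paper leaves this at the level of the single Example --- and (ii) a reduction from general perfectly localised SBS states to the single-point-per-branch form. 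On (ii) you should be careful: refining each branch $i$ into sub-branches indexed by the positions inside the mixed conditional states \emph{does} preserve which conditional states are distinguishable, but it replaces the spectrum $\{p_i\}$ by the refined distribution $\{p_i t_{k,i}\}$, which is precisely the ``new objective information'' of Sec.~\ref{subsec:Perfectly-localised-SBS} and Proposition~\ref{prop:A-discrete-SBS_with_new_information}; consistency of the \emph{same} information $\{p_i\}$ across frames additionally forces each conditional environment state to be a single position eigenstate, which is the content of Theorem~\ref{thm:perfect_objectivity_discrete}. So your reduction proves the Observation as stated (non-degeneracy governs distinguishability), but it should not be read as showing that mixed-but-localised states retain the \emph{same} objective information; the paper deliberately separates those two claims.
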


In Appendix \ref{app:Objectivity-with-continuous}, we consider GHZ-like states with continuous objective probabilities, leading to an analogous requirement of non-degeneracy (in particular, continued injectivity of the functions mapping the continuous positions of the system and environment).

If instead the various positions $\{ x_i^X\}_{i,X} $ are picked uniformly from a finite set of $N$ positions, then the probability of two being the same is $\mathbb{P}\left(x_i=x_j\right)=1/N$. This goes to zero as $N\rightarrow\infty$. This situation can correspond to the case when there is a finite precision of a measurement device, and where any spread in the positions is much smaller than the device precision. In Sec.~\ref{sec:Continuous-spread-and-blurred-objectivity}, we will consider when there \emph{is} an inherent spread in the position, and in Sec.~\ref{sec:Numerical-simulations}, the positions of the system and environment are limit to a finite set. But firstly, in the following subsection, we consider general coherent\textemdash albeit still localised\textemdash objective states with spectrum broadcast structure.

\subsection{Perfectly localised spectrum broadcast states and new objectivity\label{subsec:Perfectly-localised-SBS}}

States with the SBS form typically contain coherences and mixtures in the conditional environment states. Under transformations of quantum reference frames, these can turn into global correlations. Combined with perfect localisation, we show how this produces a new, more complex objective information in different frames.

In general, a perfectly localised objective state with the SBS can be written as 
\begin{align}
\rho_{SE_1\cdots E_N}^{(C)} & =\sum_i p_i\ket{\psi_i^S}\bra{\psi_i^S}_S\otimes\bigotimes_{j=1}^{N}\rho_{E_j|i},
\end{align}
where we have general coherent states:
\begin{align}
\ket{\psi_i^S} & =\sum_k q_{k,i}\ket{x_{k|i}^S}_S, \\
\rho_{E_j|i} & =\sum_{k_j}t_{k_j,i}\ket{\varphi_{i,k_j}^{E_j}}\bra{\varphi_{i,k_j}^{E_j}}_{E_j}, \\
\ket{\varphi_{i,k_j}^{E_j}} & =\sum_{a_{ij}}r_{a_{ij},i,j,k_j}\ket{x_{a_{ij},k_j|i}^{E_j}}_{E_j}.
\end{align}
Objectivity requires that these states are orthogonal: $\braket{\psi_i^S|\psi_{i^{\prime}}^S}=0\,\forall\,i\neq i^{\prime}$ and $\braket{\varphi_{i,k_j}^{E_j}|\varphi_{i^{\prime},k_j^{\prime}}^{E_j}}=0\,\forall\,\left(i,k_j\right)\neq\left(i^{\prime},k_j^{\prime}\right)$. It is sufficient (though not necessary) if we let all the values $\left\{ x_{k|i}\right\} _{i,k}$, $\{ x_{a_{ij},k_j|i}^{E_j}\}_{i,k_j , a_{ij}}$ be randomly chosen numbers from a continuous interval, in which case the probability that any are equal is zero, hence all terms are orthogonal.

In the frame of environment $E_1$, the joint state has the following form: 
\begin{align}
 & \rho_{SCE_2\cdots E_N}^{\left(E_1\right)}\nonumber \\
 & =\sum_{i,k_1}p_i t_{k_1,i}\sum_{a_{i1},a_{i1}^{\prime}}r_{a_{i1},i,1,k_1}r_{a_{i1}^{\prime},i,1,k_1}^{*}\ket{\tilde{\psi}_{i,k_1,a_{i1}}^S}\bra{\tilde{\psi}_{i,k_1,a_{i1}^{\prime}}^S}_S\nonumber \\
 & \phantom{=}\otimes\ket{-x_{a_{i1},k_1|i}^{E_1}}\bra{-x_{a_{i1}^{\prime},k_1|i}^{E_1}}_C\nonumber \\
 & \phantom{=}\otimes\bigotimes_{j=2}^{N}\sum_{k_j}t_{k_j,i}\ket{\tilde{\varphi}_{i,k_j, a_{i1}}^{E_j}}\bra{\tilde{\varphi}_{i,k_j, a_{i1}^{\prime}}^{E_j}}_{E_j},
\end{align}
where 
\begin{align}
\ket{\tilde{\psi}_{i,k_1,a_{i1}}^S} & =\sum_k q_{k,i}\ket{x_{k|i}^S-x_{a_{i1},k_1|i}^{E_1}}_S\\
\ket{\tilde{\varphi}_{i,k_j, a_{i1}}^{E_j}} & =\sum_{a_{ij}} r_{a_{ij},i,j,k_j}\ket{x_{a_{ij},k_j|i}^{E_j}-x_{a_{i1},k_1|i}^{E_1}}.
\end{align}
Due to the coherences and statistical mixedness of the original environment $E_1$ state, there is now entanglement and correlations between the system and the environment in the $\left(E_1\right)$ frame. In particular, much of the entanglement is tied with the laboratory subsystem $C$\textemdash and to the indices $a_{i1}$ and $a_{i1}^{\prime}$ that came from the original $E_1$ state. Hence \emph{if} the positions $\left\{ x_{a_{i1},k_1|i}^{E_1}\right\} _{i,k_1, a_{i1}}$ are distinct, then we can trace out $C$ and remove the system-environment entanglement:
\begin{align}
\rho_{SE_2\cdots E_N}^{\left(E_1\right)} & =\sum_{i,k_1,a_{i1}}p_it_{k_1,i}\left|r_{a_{i1},i,1,k_1}\right|^2 \nonumber \\
 & \phantom{=}\times\ket{\tilde{\psi}_{i,k_1,a_{i1}}^S}\bra{\tilde{\psi}_{i,k_1,a_{i1}}^S}_S\otimes\bigotimes_{j=2}^{N}\tilde{\rho}_{E_j|i,k_1,a_{i1}},\\
\tilde{\rho}_{E_j|i,k_1,a_{i1}} & \coloneqq\sum_{k_j}t_{k_j,i}\ket{\tilde{\varphi}_{i,k_j,  a_{i1}}^{E_j}}\bra{\tilde{\varphi}_{i,k_j, a_{i1}}^{E_j}}_{E_j}.
\end{align}
From the assumption that all the $\left\{ x_{\cdots}\right\} _{\cdots}$ are randomly sampled from a continuous distribution, all the relative differences $\{ x_{k|i}^S-x_{a_{i1},k_1|i}^{E_1}\}_{k,i,k_1}$, $\left\{ x_{a_{ij},k_j|i}^{E_j}-x_{a_{i1},k_1|i}^{E_1}\right\} _{i,j,k_j,k_1}$ are unique, hence the conditional states of the system and the environments are perfectly distinguishable, and the reduced state $\rho_{SE_2\cdots E_N}^{\left(E_1\right)}$ has the SBS. However, the objective information is now encoded by the probabilities $\left\{ p_i t_{k_1,i}\left|r_{a_{i1},i,1,k_1}\right|^2\right\} _{i,k_1,a_{i1}}$. Although the original system information can still be recovered by taking the relevant marginal distribution, we see that in each different reference frame corresponding to environment $E_j$, we will have a different set of objective information. 

\begin{observation}
Coherences in the environment can create entanglement between the system, lab, and remainder environments. This can typically be ``removed'' by tracing out the laboratory subsystem.
\end{observation}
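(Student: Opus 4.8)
The plan is to read the claim off the transformation law (\ref{eq:transform}) applied to the perfectly localised SBS state of the previous subsection, isolating which terms produce the new entanglement and why tracing out $C$ disposes of them. First I would insert $\rho_{SE_1\cdots E_N}^{(C)}=\sum_i p_i\ket{\psi_i^S}\bra{\psi_i^S}_S\otimes\bigotimes_j\rho_{E_j|i}$, with the coherent $E_1$ components $\ket{\varphi_{i,k_1}^{E_1}}=\sum_{a_{i1}}r_{a_{i1},i,1,k_1}\ket{x_{a_{i1},k_1|i}^{E_1}}$, into (\ref{eq:transform}) and do the bookkeeping: the diagonal part of the $E_1$-sum, $a_{i1}=a_{i1}^{\prime}$, reproduces a classically correlated SBS skeleton, whereas each off-diagonal amplitude $r_{a_{i1},i,1,k_1}r_{a_{i1}^{\prime},i,1,k_1}^{*}$ with $a_{i1}\neq a_{i1}^{\prime}$ carries the coordinate shift $-x_{a_{i1},k_1|i}^{E_1}$ simultaneously into $S$, into $C$, and into every $E_j$, producing entries of the form $\ket{\tilde\psi_{i,k_1,a_{i1}}^S}\bra{\tilde\psi_{i,k_1,a_{i1}^{\prime}}^S}_S\otimes\ket{-x_{a_{i1},k_1|i}^{E_1}}\bra{-x_{a_{i1}^{\prime},k_1|i}^{E_1}}_C\otimes(\cdots)_{E_j}$. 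These are exactly the terms that make the $(E_1)$-frame state non-separable across every partition containing $C$.

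To see that this is genuine entanglement and not a removable artefact, I would specialise to a single pointer value $i$ and pure conditional states $\rho_{E_j|i}=\ket{\varphi_i^{E_j}}\bra{\varphi_i^{E_j}}$, with $\ket{\varphi_i^{E_1}}$ supported on at least two distinct positions. Then the joint $(E_1)$-frame state is the pure GHZ-like vector $\sum_{a}r_{a}\,\ket{\tilde\psi_{a}^S}_S\ket{-x_{a}^{E_1}}_C\bigotimes_{j\geq2}\ket{\tilde\varphi_{a}^{E_j}}_{E_j}$; because the $C$-registers $\ket{-x_{a}^{E_1}}_C$ are mutually orthogonal (distinct positions), the Schmidt rank across the cut $C\,|\,SE_2\cdots E_N$ equals the number of nonzero $r_{a}$, hence exceeds one. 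So coherence of the $E_1$ conditional state in two or more positions \emph{necessarily} entangles $C$ with $S$ and the $E_j$ — establishing the first half of the Observation.

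Next I would show $\Tr_C$ removes it. Since the positions $\{x_{a_{i1},k_1|i}^{E_1}\}$ are (generically, by random sampling from a continuum) pairwise distinct, the $C$-states $\{\ket{-x_{a_{i1},k_1|i}^{E_1}}_C\}$ are orthonormal, so $\Tr_C$ annihilates every $a_{i1}\neq a_{i1}^{\prime}$ cross term and leaves precisely the diagonal convex mixture $\rho_{SE_2\cdots E_N}^{(E_1)}=\sum_{i,k_1,a_{i1}}p_i t_{k_1,i}|r_{a_{i1},i,1,k_1}|^2\,\ket{\tilde\psi_{i,k_1,a_{i1}}^S}\bra{\tilde\psi_{i,k_1,a_{i1}}^S}_S\otimes\bigotimes_{j\geq2}\tilde\rho_{E_j|i,k_1,a_{i1}}$ displayed above — manifestly separable, and in fact satisfying Definition~\ref{def:SBS} over the enlarged index $(i,k_1,a_{i1})$, since all relative coordinates $x_{k|i}^S-x_{a_{i1},k_1|i}^{E_1}$ and $x_{a_{ij},k_j|i}^{E_j}-x_{a_{i1},k_1|i}^{E_1}$ are distinct. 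This is the second half of the Observation.

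I expect the main obstacle — and the reason for the hedge ``typically'' — to be exactly the step that makes $\Tr_C$ work: it needs the $E_1$ conditional state to have all its position components at \emph{non-degenerate} locations, so that $C$ faithfully records the branch label $a_{i1}$. In the degenerate case (the analogue of Example~\ref{eg:degenerate_localised}), distinct $a_{i1}$ map to the same $C$-position, $\Tr_C$ no longer kills all the off-diagonal terms, residual coherences survive, and the reduced state need not be separable. A secondary point I would flag is that removal is not ``free'' even when it succeeds: tracing $C$ discards the relative-position record and refines the objective spectrum $\{p_i\}$ into $\{p_i t_{k_1,i}|r_{a_{i1},i,1,k_1}|^2\}$, so the price of eliminating the entanglement is a change in which information is objective.
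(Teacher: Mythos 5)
Your proposal is correct and follows essentially the same route as the paper: substituting the coherent localised SBS state into the frame transformation, identifying the $a_{i1}\neq a_{i1}^{\prime}$ cross terms as the source of entanglement with $C$, and showing that $\Tr_C$ kills them precisely when the $E_1$ positions are non-degenerate, yielding the diagonal SBS mixture with refined spectrum $\{p_i t_{k_1,i}|r_{a_{i1},i,1,k_1}|^2\}$. Your added Schmidt-rank check and your reading of ``typically'' as the non-degeneracy caveat are consistent with, and slightly more explicit than, the paper's own presentation.
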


\begin{observation}
Incoherent mixedness in the environment creates new classical correlations between the system, lab, and remainder environments. This can lead to \emph{new} objective information, which includes the original information which can be recovered from the marginals by summing over terms associated with the environment.
\end{observation}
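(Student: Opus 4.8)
The plan is to read the claim off the explicit transformed states already displayed in Sec.~\ref{subsec:Perfectly-localised-SBS}, after isolating the contribution of incoherent mixedness. First I would take a generic perfectly localised SBS state $\rho^{(C)}_{SE_1\cdots E_N}=\sum_i p_i\ket{\psi_i^S}\bra{\psi_i^S}_S\otimes\bigotimes_{j=1}^N\rho_{E_j|i}$ and momentarily switch off the coherences (a single term in each $\ket{\psi_i^S}$ and each $\ket{\varphi_{i,k_j}^{E_j}}$), so that $\rho_{E_1|i}=\sum_{k_1}t_{k_1,i}\ket{x_{k_1|i}^{E_1}}\bra{x_{k_1|i}^{E_1}}$ is a genuine statistical mixture and nothing more. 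Applying the frame change Eq.~(\ref{eq:transform}) (taking $E_1$ without loss of generality; any $E_k$ is identical after relabelling) and carrying out the $q_C$ integrations produces a state in which the label $k_1$ --- hitherto a hidden index summed over inside the single system $E_1$ --- appears as a classical register correlated at once with $S$, with the lab $C$ through the factor $\ket{-x_{k_1|i}^{E_1}}\bra{-x_{k_1|i}^{E_1}}_C$, and with every remaining $E_j$, each of whose conditional state is rigidly translated by $x_{k_1|i}^{E_1}$. This is precisely the statement that incoherent mixedness in the environment becomes new classical correlation shared by system, lab, and the other environments; reinstating the coherences only adds a further register $a_{i1}$ which, after the trace-out below, is again classical (the genuine entanglement the coherences generate being the content of the preceding observation).

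Second, I would verify that the enlarged correlation is objective in the sense of Definition~\ref{def:SBS}. Under the standing genericity hypothesis all position values, and hence all of their pairwise differences, are distinct, so: (i) the shifted system vectors $\ket{\tilde{\psi}_{i,k_1,a_{i1}}^S}$ are mutually orthogonal across distinct triples $(i,k_1,a_{i1})$; (ii) for each $j\geq 2$ the shifted conditional states $\tilde{\rho}_{E_j|i,k_1,a_{i1}}$ have pairwise disjoint supports, hence $\Tr(\tilde{\rho}_{E_j|i,k_1,a_{i1}}\tilde{\rho}_{E_j|i',k_1',a_{i1}'})=0$; and (iii) the lab states $\ket{-x_{a_{i1},k_1|i}^{E_1}}_C$ are orthogonal in $a_{i1}$, so that tracing out $C$ removes the residual $S$--$E$ entanglement and leaves $\rho^{(E_1)}_{SE_2\cdots E_N}=\sum_{i,k_1,a_{i1}}p_i t_{k_1,i}|r_{a_{i1},i,1,k_1}|^2\,\ket{\tilde{\psi}_{i,k_1,a_{i1}}^S}\bra{\tilde{\psi}_{i,k_1,a_{i1}}^S}_S\otimes\bigotimes_{j=2}^N\tilde{\rho}_{E_j|i,k_1,a_{i1}}$, which is of the form Eq.~(\ref{eq:SBS}) with pointer label $(i,k_1,a_{i1})$ and spectrum $\{p_i t_{k_1,i}|r_{a_{i1},i,1,k_1}|^2\}$. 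Steps (i)--(iii) use the same measure-zero argument already invoked in the GHZ case, applied now to the full list of differences simultaneously.

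Third, for the clause that the new objective information \emph{contains} the old one, I would coarse-grain the new spectrum over the new indices: $\sum_{k_1,a_{i1}}p_i t_{k_1,i}|r_{a_{i1},i,1,k_1}|^2=p_i$, because $\sum_{a_{i1}}|r_{a_{i1},i,1,k_1}|^2=1$ by normalisation of $\ket{\varphi_{i,k_1}^{E_1}}$ and $\sum_{k_1}t_{k_1,i}=1$ by normalisation of the mixture $\rho_{E_1|i}$. Hence an observer in the $E_1$ frame who bins together all outcomes sharing a given $i$ recovers exactly the original laboratory-frame distribution $\{p_i\}$; that is, the old system information sits inside the new. I expect the only real obstacle to be the orthogonality bookkeeping of the second step: one must check that the common rigid shift by $x_{a_{i1},k_1|i}^{E_1}$ cannot accidentally send two previously distinct support points of some $E_j$ (or of $S$, or of $C$) onto each other, and that the measure-zero exclusion of bad coincidences is taken over the single finite union of all such events across every subsystem and every index pair at once, so that one generic choice of positions makes all the conditional families distinguishable simultaneously --- once this is arranged the remaining computation is routine.
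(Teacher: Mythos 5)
Your proposal is correct and follows essentially the same route as the paper's own treatment in Sec.~\ref{subsec:Perfectly-localised-SBS}: transform to the $E_1$ frame, observe that the mixture index $k_1$ (and, after tracing out $C$, the coherence index $a_{i1}$) becomes a classical label correlated with $S$, $C$, and the remaining $E_j$, invoke the generic non-degeneracy of the shifted positions to obtain the SBS form with spectrum $\left\{ p_i t_{k_1,i}\left|r_{a_{i1},i,1,k_1}\right|^2\right\}$, and recover $\{p_i\}$ by marginalising over $(k_1,a_{i1})$. Your added care about taking the measure-zero exclusion over the union of all coincidence events at once, and the explicit normalisation check for the marginal, are sound refinements of the argument the paper leaves implicit.
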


Hence, while entanglement and coherence are frame-dependent properties, it is equally relevant that incoherent mixedness and classical correlations are also frame-dependent. Only a very small class of objective states retain the same objectivity in different quantum reference frames: and more generally, the system objectivity transforms to a more complicated objectivity, of which the original system information is embedded within.

\section{Continuous spread and blurred objectivity\label{sec:Continuous-spread-and-blurred-objectivity}}

Thus far, we have shown how non-degeneracy of relative positions plays a crucial role in objectivity, when positions are perfectly localised. However, in general, systems and environments have a non-zero spread. In this section, we examine systems and environments with a continuous spread described by Gaussian distributions across space, characterised by mean ${\mu}$ and standard deviation ${\sigma}$. Objectivity becomes blurred and distinguishability reduces as states become ``smeared'' across space in different reference frames.

In Section \ref{subsec:P_error_and_fidelity}, we describe the error probability of distinguishing conditional states, and how that is bounded by the fidelity. This fidelity becomes our measure for a perceived objectivity. In Section \ref{subsec:Incoherent-unmixed-SBS}, we consider incoherent objective states, in which the conditional environment states are single Gaussians for simplicity, and in Section \ref{subsec:Coherent-SBS}, we consider general coherent objective states.

\subsection{Effective perceived objectivity and the fidelity of measurement\label{subsec:P_error_and_fidelity}}

One method to quantify compliance with the SBS is with a distance measure to the set of the SBS states. For example, some of us \citep{Mironowicz2017} have developed a computable tight bound $\eta\left[\rho_{SF}\right]$ on the trace distance (where $F$ denotes a subset of environment states). For a predefined basis, the system can be written as $\rho_{S}=\sum_i p_i\ket{i}\bra{i}+\sum_{i\neq j} p_{ij}\ket{i}\bra{j}$, and then SBS distance bound is:
\begin{align}
T^{\text{SBS}}\left(\rho_{SF}\right) & =\dfrac{1}{2}\min_{\rho_{SF}^{\text{SBS}}}\left\Vert \rho_{SF}-\rho_{SF}^{\text{SBS}}\right\Vert _{1}\leq\eta\left[\rho_{SF}\right]\label{eq:SBStracedistance}\\
\eta\left[\rho_{SF}\right] & \equiv\Gamma+\sum_{i\neq j}\sqrt{p_ip_{j}}\sum_{k=1}^{F}B\left(\rho_{E_k|i},\rho_{E_k|j}\right),\label{eq:eta}
\end{align}
where $\Gamma$ describes the coherence of the system relative to a predefined basis, \emph{i.e.} encoding the $\{p_{ij}\}_{i\neq j}$ terms, and
\begin{equation}
	\label{eq:B}
	B\left(\rho_i,\rho_j\right)=\left\Vert \sqrt{\rho_i}\sqrt{\rho_j}\right\Vert _1
\end{equation}
is the fidelity describing the distinguishability of the conditional environment states, and $\{p_i\}_i$ are probabilities of the system in the predefined basis. The above bound, however, implicitly assumes strong independence of the environments:
\begin{defn}
	\label{defn:strongIndependence}
\textbf{Strong independence} \citep{Horodecki2015}. Sub-environments $\left\{ E_k\right\} _k$ have strong independence relative to the system $S$ if their conditional mutual information is vanishing:
\begin{equation}
I\left(E_j:E_k|S\right)=0,\quad\forall j\neq k.
\end{equation}
\end{defn}
Unlike the work in \citep{Mironowicz2017}, strong independence is \emph{not} maintained in general when changing quantum reference frames. However, strong independence is not required for a more general objectivity \citep{Le2019}.

Here, we focus on the distinguishability of the conditional states. For an ensemble $\left\{ p_i,\rho_i\right\}_i $, and a set of measurement operators $\left\{ \Pi_i\right\}_i$, $\sum_i\Pi_i=\id$, to pick out $i$, the average probability of successful measurement is:
\begin{equation}
\mathbb{P}\left(\text{success}\right)=\sum_i p_i\Tr\left[\rho_i\Pi_i\right],
\end{equation}
and the average probability of failure is then $\mathbb{P}\left(\text{error}\right)=1-\mathbb{P}\left(\text{success}\right)$. The minimum error of distinguishing the states is bounded by the fidelity of the conditional states \citep{Barnum2002,Montanaro2008}:
\begin{equation}
\label{eq:errBound}
\sum_{i<j}p_i p_{j}\left\Vert \sqrt{\rho_i}\sqrt{\rho_{j}}\right\Vert _{1}^2 \leq\mathbb{P}\left(\text{error}\right)\leq\sum_{i\neq j}\sqrt{p_i p_{j}}\left\Vert \sqrt{\rho_i}\sqrt{\rho_{j}}\right\Vert _{1}.
\end{equation}

Hence, the fidelity $\left\Vert \sqrt{\rho_i}\sqrt{\rho_{j}}\right\Vert _{1}$ is the key term that we will be calculating in this section. We will also occasionally calculate the overlap between two states, 
\begin{equation}
L\left(\rho_i,\rho_{j}\right)=\Tr\left[\rho_i\rho_{j}\right],
\end{equation}
which gives a lower bound on the fidelity, $L\left(\rho_i,\rho_{j}\right)\leq\left\Vert \sqrt{\rho_i}\sqrt{\rho_{j}}\right\Vert _{1}^2 $.

For perfect objectivity, it is necessary (but not sufficient) for $\mathbb{P}\left(\text{error}\right)=0$, hence the lower-bound to $\mathbb{P}\left(\text{error}\right)$ in turn gives a minimum distance from objectivity. 

\subsection{Incoherent, unmixed objective states and blurred objectivity\label{subsec:Incoherent-unmixed-SBS}}

Consider the following incoherent objective state with the SBS,
\begin{equation}
\rho_{SE_1\cdots E_N}^{(C)}=\sum_ip_i\ket{x_i^S}\bra{x_i^S}_S\otimes\bigotimes_{j=1}^{N}\rho_{E_j|i},\label{eq:incoherent_cont_SBS}
\end{equation}
where the environment states are unmixed (in the sense of consisting of a single Gaussian-distributed state rather than a discrete sum of Gaussians):
\begin{align}
\rho_{E_j|i} & =\int dx_{E_j}f\left(x_{E_j}|\mu_{E_j|i},\sigma_{E_j|i}\right)\ket{x_{E_j}}\bra{x_{E_j}}.
\end{align}
We have defined the Gaussian (normal) probability density
\begin{equation}
f\left(x|\mu,\sigma\right)=\dfrac{1}{\sqrt{2\pi}\sigma}\exp\left[-\dfrac{1}{2}\left(\dfrac{x-\mu}{\sigma}\right)^2 \right].
\end{equation}
This allows us to focus on the effects of the Gaussian spread on the objectivity. From the very beginning, there is no perfect objectivity: the fidelity between two conditional environment states for $i,i^{\prime}$ is:
\begin{align}
\left\Vert \sqrt{\rho_{E_j|i}}\sqrt{\rho_{E_j|i^{\prime}}}\right\Vert _{1} & =\dfrac{\exp\left[-\dfrac{\left(\mu_{E_j|i}-\mu_{E_j|i^{\prime}}\right)^2 }{ 4 \left(\sigma_{E_j|i}^2 +\sigma_{E_j|i^{\prime}}^2 \right)}\right]}{\dfrac{\sqrt{\sigma_{E_j|i}^2 +\sigma_{E_j|i^{\prime}}^2 }}{\sqrt{2\sigma_{E_j|i}\sigma_{E_j|i^{\prime}}}}},\label{eq:incoh_SBS_E_fidelity}
\end{align}
which is always non-zero. As we impose that our original state in the laboratory frame is objective, this fidelity must be sufficiently small for all $i\neq i^{\prime}$. Hence, for any pair of $i\neq i^{\prime}$, we must either have $\mu_{E_j|i}-\mu_{E_j|i^{\prime}}\gg\sqrt{\sigma_{E_j|i}^2 +\sigma_{E_j|i^{\prime}}^2 }$, \emph{i.e.} the peak separations are larger than the standard deviation, or $\sigma_{E_j|i}\gg\sigma_{E_j|i^{\prime}}$ (or vice-versa), \emph{i.e.} one conditional state must have a larger spread than the others\textemdash this allows for the detection of the wide-spread distribution \emph{outside} the bulk to the sharper distribution. These two cases are depicted in Fig.~\ref{fig:separated_means_or_different_SDs}.

\begin{observation}
Objectivity requires the distinguishability of conditional states. If the conditional states are described with a Gaussian distribution, then the distinguishability requires a combination of sufficiently far separated peaks $\{\mu\}$, or otherwise sufficiently different spreads $\{\sigma\}$.
\end{observation}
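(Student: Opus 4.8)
The statement is a corollary of the closed form for the conditional-state fidelity, Eq.~(\ref{eq:incoh_SBS_E_fidelity}), combined with the error bound Eq.~(\ref{eq:errBound}), so the plan is: (i) justify Eq.~(\ref{eq:incoh_SBS_E_fidelity}); (ii) read off the two regimes from its factored form; and (iii) use Eq.~(\ref{eq:errBound}) to turn the dichotomy into a genuine necessary condition rather than a mere sufficient one. Throughout I abbreviate $\mu\equiv\mu_{E_j|i}$, $\mu'\equiv\mu_{E_j|i'}$, $\sigma\equiv\sigma_{E_j|i}$, $\sigma'\equiv\sigma_{E_j|i'}$.

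First I would establish Eq.~(\ref{eq:incoh_SBS_E_fidelity}). Each conditional state $\rho_{E_j|i}$ in Eq.~(\ref{eq:incoherent_cont_SBS}) is diagonal in the (continuous) position basis with ``eigenvalue density'' $f(x|\mu,\sigma)$, so $\sqrt{\rho_{E_j|i}}$ is diagonal with density $\sqrt{f(x|\mu,\sigma)}$ and the product $\sqrt{\rho_{E_j|i}}\sqrt{\rho_{E_j|i'}}$ is itself positive, diagonal with density $\sqrt{f(x|\mu,\sigma)\,f(x|\mu',\sigma')}$. Hence $\bigl\Vert\sqrt{\rho_{E_j|i}}\sqrt{\rho_{E_j|i'}}\bigr\Vert_1=\Tr\bigl[\sqrt{\rho_{E_j|i}}\sqrt{\rho_{E_j|i'}}\bigr]=\int dx\,\sqrt{f(x|\mu,\sigma)f(x|\mu',\sigma')}$, i.e. the quantum fidelity collapses to the classical Bhattacharyya coefficient of the two Gaussians. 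Completing the square in the exponent and performing the remaining Gaussian integral is then routine and reproduces Eq.~(\ref{eq:incoh_SBS_E_fidelity}) exactly: a product of a \emph{separation factor} $\exp[-(\mu-\mu')^2/4(\sigma^2+\sigma'^2)]$ and a \emph{spread-ratio factor} $\sqrt{2\sigma\sigma'}/\sqrt{\sigma^2+\sigma'^2}$.

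Next I would analyse this product. Both factors lie in $(0,1]$, so the fidelity is strictly positive (hence perfect distinguishability, and with it perfect objectivity, is unattainable) and it is small exactly when at least one factor is small. The separation factor is small iff $|\mu-\mu'|\gg\sqrt{\sigma^2+\sigma'^2}$. For the spread-ratio factor, AM--GM gives $\sigma^2+\sigma'^2\ge2\sigma\sigma'$; writing $r=\sigma/\sigma'$ it equals $\sqrt{2/(r+r^{-1})}$, which attains its maximum $1$ at $r=1$ and tends to $0$ only as $r\to0$ or $r\to\infty$, so it is small iff one spread vastly exceeds the other. This is precisely the claimed dichotomy: well-separated peaks \emph{or} very different spreads. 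To upgrade it to a necessary condition I would invoke the lower bound of Eq.~(\ref{eq:errBound}): if for some pair $i\neq i'$ neither regime holds, i.e. $|\mu-\mu'|\lesssim\sqrt{\sigma^2+\sigma'^2}$ and $\sigma,\sigma'$ are comparable, then the pairwise fidelity is of order one, $\mathbb{P}(\text{error})$ is bounded away from zero, and the state lies a fixed distance from every SBS state; hence distinguishability of the conditional states forces one of the two regimes for \emph{every} pair $i\neq i'$, which is the ``combination'' in the observation.

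The only genuinely delicate point is step~(i), the handling of square roots of a continuous position-basis mixture: to be rigorous one should either regard $\rho_{E_j|i}$ as a classical probability measure and use that the quantum fidelity of jointly diagonalisable states equals the classical fidelity, or work with finitely supported discretisations of $f$ and pass to the continuum limit. Once that is in place, the remainder is a standard Gaussian integral together with the AM--GM inequality, and I anticipate no further obstacles.
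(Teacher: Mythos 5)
Your proposal follows the paper's own route: the paper likewise bases this observation on the closed-form fidelity of Eq.~(\ref{eq:incoh_SBS_E_fidelity}) (the Bhattacharyya coefficient of the two Gaussians, since the conditional states commute in the position basis), notes it is always nonzero, and reads off the same dichotomy\textemdash either $\mu_{E_j|i}-\mu_{E_j|i'}\gg\sqrt{\sigma_{E_j|i}^2+\sigma_{E_j|i'}^2}$ or one spread dominating the other\textemdash with the error bound Eq.~(\ref{eq:errBound}) from the preceding subsection supplying the link between fidelity and distinguishability. Your write-up merely fills in the routine details the paper omits (the Gaussian integral, the AM--GM analysis of the spread-ratio factor, and the explicit use of the lower bound for necessity), and is correct.
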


\begin{figure}
\begin{centering}
\includegraphics[width=0.45\textwidth]{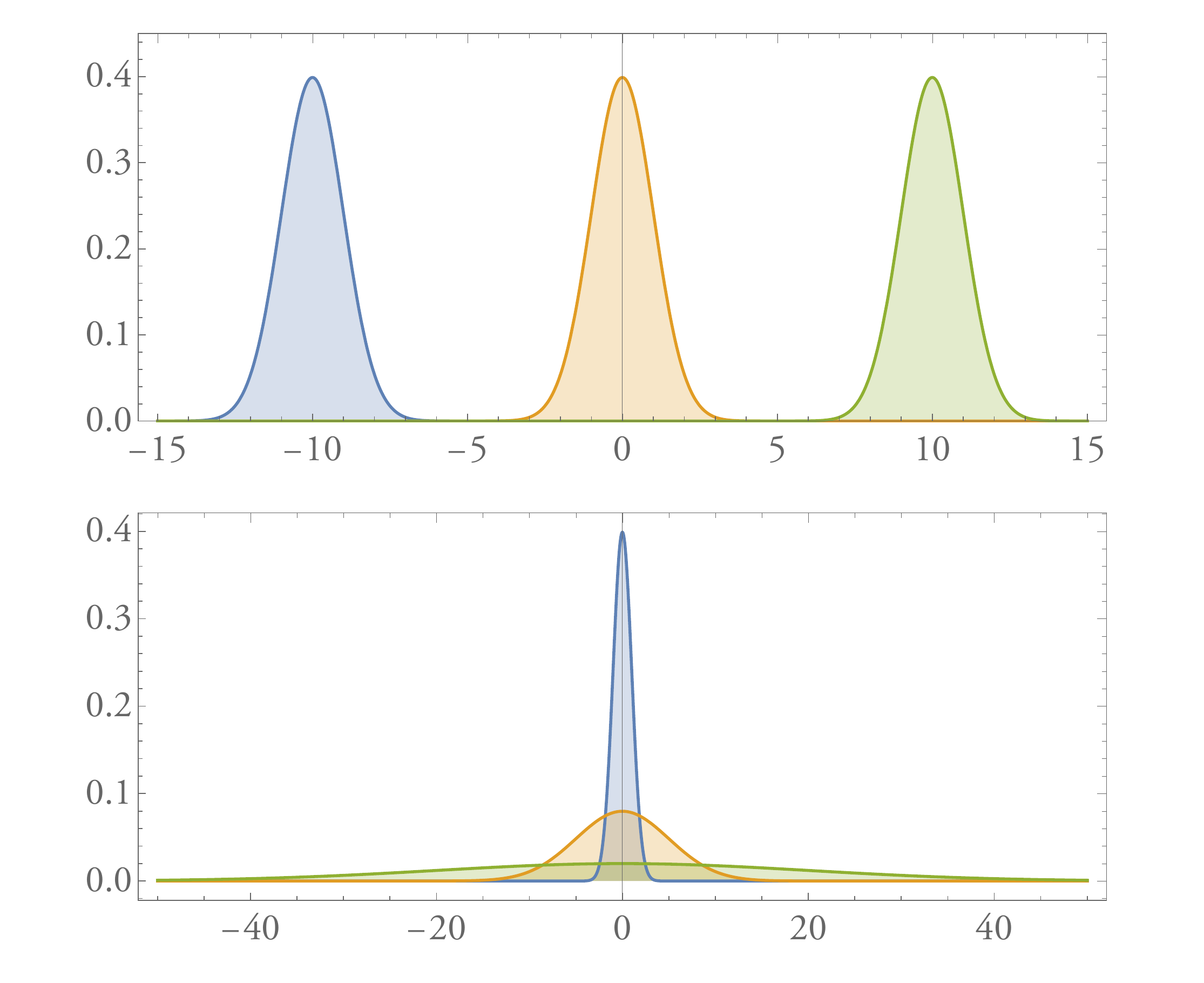}
\par\end{centering}
\caption{(Color on-line) Top: if the peaks for the states are separated much further than their standard deviations (here, $\Delta\mu=10\sigma$), then there is very little overlap and these states are distinguishable. Bottom: Alternatively, if the central peaks are the same or very close, varying greatly standard deviations (here, five times or more) allows for good distinguishability, as measurement at further locations will, with high probability, correspond to the wider distributions.\label{fig:separated_means_or_different_SDs}}
\end{figure}

Without loss of generality, we change to the quantum reference frame of the first environment $E_1$:
\begin{align}
 & \rho_{SC E_2\cdots E_N}^{\left(E_1\right)}\nonumber \\
 & =\sum_i p_i\int dq_Cf\left(-q_C|\mu_{E_1|i},\sigma_{E_1|i}\right)\nonumber \\
 & \phantom{=}\times\ket{x_i^S+q_C}\bra{x_i^S+q_C}_S\otimes\ket{q_C}\bra{q_C}\nonumber \\
 & \phantom{=}\otimes\bigotimes_{j=2}^{N}\int dq_{E_j}f\left(q_{E_j}-q_C|\mu_{E_j|i},\sigma_{E_j|i}\right)\ket{q_{E_j}}\bra{q_{E_j}}.\label{eq:incoh_SBS_state_E1frame}
\end{align}
In the new frame, the system is centered around $x_i^S-\mu_{E_1|i}$, the old laboratory $C$ is centered around $-\mu_{E_1|i}$, and the other environments have a complex distribution with a continuum of multiple peaks, at $q_C+\mu_{E_j|i}$, where $q_C$ is centered around $-\mu_{E_1|i}$. While the original system-objective information still exists, there are now extra classical correlations given across by $\int dq_C$. This continuum across $q_C$ means that we \emph{do not} have objectivity for the continuous distribution $\left\{ p_if\left(-q_C|\mu_{E_1|i},\sigma_{E_1|i}\right)\right\} _{i,q_C}$, as the states given by $q_C$ versus $q_C+\delta$ are not well distinguished. This is depicted in Fig.~\ref{fig:smooth_continuum_of_peaks}

\begin{figure}
\begin{centering}
\includegraphics[width=1\columnwidth]{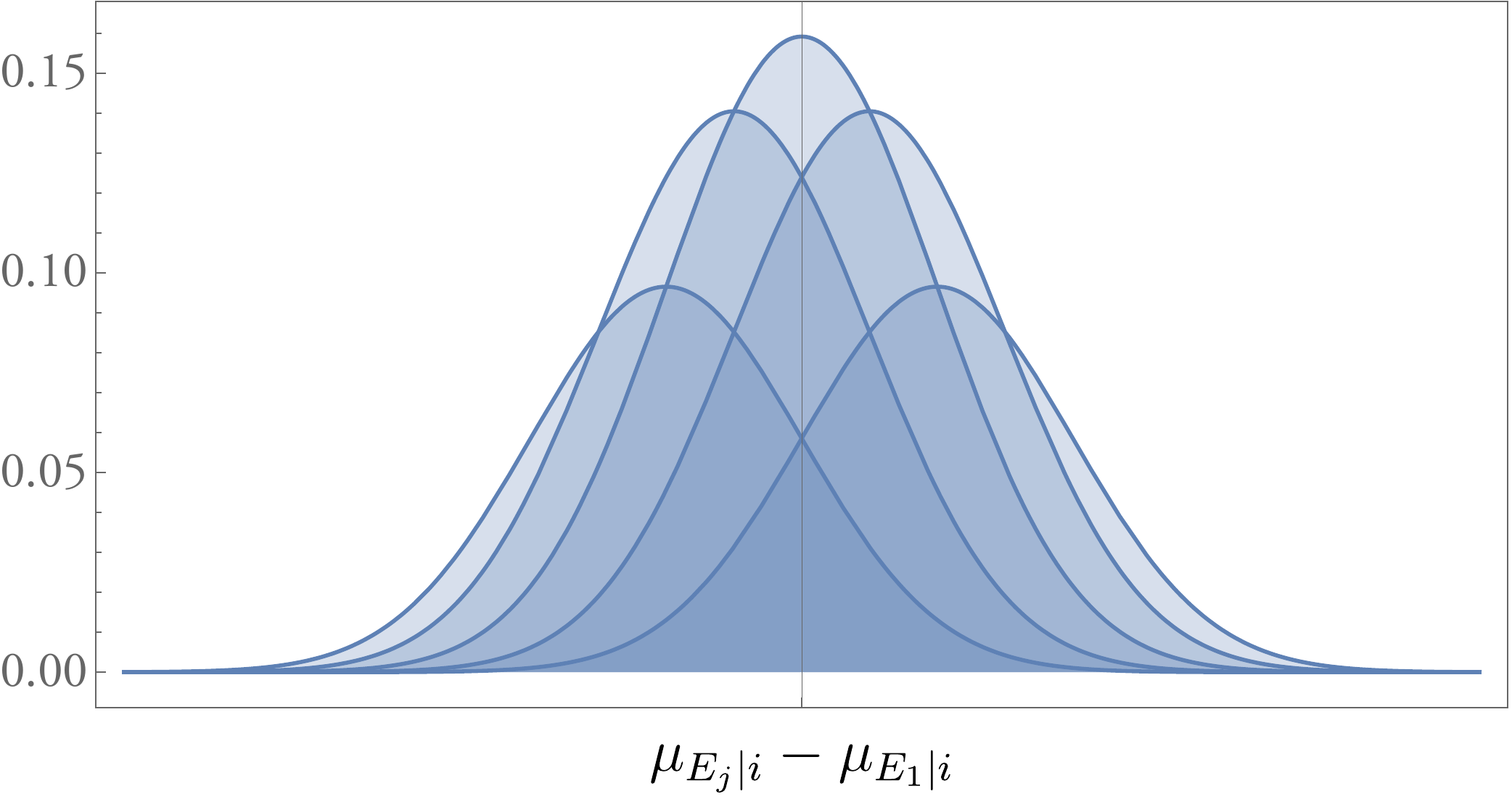}
\par\end{centering}
\caption{(Color on-line) In the frame of $\left(E_1\right)$, the original peaks of environment $E_j$, at $\mu_{E_j|i}$ are shifted by $q_C$, which ranges over the entire space, but with a Gaussian envelope centered at $-\mu_{E_1|i}$. Every curve corresponds to a different environment state conditioned on a different $q_C$ (for a fixed $i$). Different $q_C$ give curves that overlap a lot, and hence are not distinguishable. \label{fig:smooth_continuum_of_peaks}}
\end{figure}

Thus, the most immediate, and preferred, candidate for objectivity is the original information indexed by $i$. Firstly, the new conditional system states must be distinguishable. The local system state is $\rho_S^{\left(E_1\right)}=\sum_ip_i\rho_{S|i}^{\left(E_1\right)}$, where the conditional states are:
\begin{align}
\label{eq:condStates}
\rho_{S|i}^{\left(E_1\right)} & \coloneqq\int dq_Cf\left(-q_C|\mu_{E_1|i},\sigma_{E_1|i}\right)\ket{x_i^S+q_C}\bra{x_i^S+q_C}_S.
\end{align}
The fidelity of conditional system states is
\begin{align}
 & \left\Vert \sqrt{\rho_{S|i}^{\left(E_1\right)}}\sqrt{\rho_{S|i^{\prime}}^{\left(E_1\right)}}\right\Vert _{1}\nonumber \\
 & =\dfrac{\exp\left[-\dfrac{\left(x_i^S-\mu_{E_1|i}-x_{i^{\prime}}^S+\mu_{E_1|i^{\prime}}\right)^2 }{4\left(\sigma_{E_1|i}^2 +\sigma_{E_1|i^{\prime}}^2 \right)}\right]}{\left.\sqrt{\sigma_{E_1|i}^2 +\sigma_{E_1|i^{\prime}}^2 }\middle/\sqrt{2\sigma_{E_1|i}\sigma_{E_1|i^{\prime}}}\right.}.
\end{align}
Distinguishability requires a low fidelity, which occurs either if the shifted distances are non-degenerate with a sufficiently large separation, or if \emph{one} of $\sigma_{E_1|i}\gg\sigma_{E_1|i^{\prime}}$.

The reduced state on environment $E_j$ is $\rho_{E_j}^{\left(E_1\right)}=\sum_i p_i\rho_{E_j|i}^{\left(E_1\right)}$, with conditional states
\begin{align}
\rho_{E_j|i}^{\left(E_1\right)}\coloneqq & \int dq_C\int dq_{E_j}f\left(-q_C|\mu_{E_1|i},\sigma_{E_1|i}\right)\nonumber \\
 & \times f\left(q_{E_j}-q_C|\mu_{E_j|i},\sigma_{E_j|i}\right)\ket{q_{E_j}}\bra{q_{E_j}}.
\end{align}
The fidelity of the conditional states is:

\begin{widetext}
\begin{equation}
\left\Vert \sqrt{\rho_{E_j|i}^{\left(E_1\right)}}\sqrt{\rho_{E_j|i^{\prime}}^{\left(E_1\right)}}\right\Vert _{1}=\dfrac{\sqrt{2}\left[\left(\sigma_{E_1|i}^2 +\sigma_{E_1|i^{\prime}}^2 \right)\left(\sigma_{E_j|i}^2 +\sigma_{E_j|i^{\prime}}^2 \right)\right]^{1/4}}{\sqrt{\sigma_{E_1|i}^2 +\sigma_{E_1|i^{\prime}}^2 +\sigma_{E_j|i}^2 +\sigma_{E_j|i^{\prime}}^2 }}\exp\left[-\dfrac{\left(\mu_{E_1|i}-\mu_{E_1|i^{\prime}}-\mu_{E_j|i}+\mu_{E_j|i^{\prime}}\right)^2 }{4\left(\sigma_{E_1|i}^2 +\sigma_{E_1|i^{\prime}}^2 +\sigma_{E_j|i}^2 +\sigma_{E_j|i^{\prime}}^2 \right)}\right].\label{eq:SBS_incoh_env_fidelity}
\end{equation}

\end{widetext}

Once again, distinguishability requires low fidelity, which occurs if the shifted differences are very non-degenerate: $\mu_{E_1|i}-\mu_{E_1|i^{\prime}}-\mu_{E_j|i}+\mu_{E_j|i^{\prime}}\gg\sqrt{\sigma_{E_1|i}^2+\sigma_{E_1|i^{\prime}}^2+\sigma_{E_j|i}^2+\sigma_{E_j|i^{\prime}}^2}$, \emph{or} if at least one of the standard deviations $\sigma\in\left\{ \sigma_{E_1|i},\sigma_{E_1|i^{\prime}},\sigma_{E_j|i}\sigma_{E_j|i^{\prime}}\right\} $ is separated from the others by orders of magnitude, so that $\left\Vert \sqrt{\rho_{E_j|i}^{\left(E_1\right)}}\sqrt{\rho_{E_j|i^{\prime}}^{\left(E_1\right)}}\right\Vert _{1}\sim1/\sqrt{\sigma}\rightarrow0$ for $\sigma\rightarrow\infty$.

\begin{observation}
For the original information to remain objective in all frames, a necessary condition is good local distinguishability (local perceived objectivity). This requires a combination of very non-degenerate relative separations and very localised conditional states; or conditional spreads that vary by orders of magnitude.
\end{observation}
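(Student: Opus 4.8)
The plan is to obtain the statement directly from the definition of SBS (Definition~\ref{def:SBS}) together with the fidelity expressions already derived in this subsection, using the convention (justified by the symmetry among the environments and the ``without loss of generality'' choice of $E_1$) that it suffices to work in the $(E_1)$ frame, since the laboratory frame $C$ is already objective by hypothesis.

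First I would establish necessity of local distinguishability. Suppose the original system information, carried by the spectrum $\{p_i\}_i$, is still objective in the $(E_1)$ frame; then some reduced state of $\rho_{SCE_2\cdots E_N}^{(E_1)}$ must possess an (at least approximate) SBS block structure indexed by $i$. By Definition~\ref{def:SBS} any such structure forces the conditional system states $\rho_{S|i}^{(E_1)}$ of Eq.~\eqref{eq:condStates} to be mutually orthogonal, and, for each $j$, the conditional environment states $\rho_{E_j|i}^{(E_1)}$ to be mutually perfectly distinguishable. Since trace distance and fidelity are contractive under partial trace, these marginal conditions are implied by \emph{any} candidate joint SBS form, so they are genuinely necessary. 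Quantitatively, perfect objectivity requires $\mathbb{P}(\text{error})=0$ in distinguishing $\{p_i,\rho_{S|i}^{(E_1)}\}_i$ and each $\{p_i,\rho_{E_j|i}^{(E_1)}\}_i$, and by the lower bound of Eq.~\eqref{eq:errBound} this is possible only if all the fidelities $\|\sqrt{\rho_{S|i}^{(E_1)}}\sqrt{\rho_{S|i^\prime}^{(E_1)}}\|_1$ and $\|\sqrt{\rho_{E_j|i}^{(E_1)}}\sqrt{\rho_{E_j|i^\prime}^{(E_1)}}\|_1$ vanish; for approximate objectivity the same bounds (and the $\eta$-type bound) demand that these be small, which is the ``good local distinguishability'' of the statement.

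Next I would read off the two regimes from the closed forms. Both relevant fidelities — the conditional-system one and the environment one, Eq.~\eqref{eq:SBS_incoh_env_fidelity} — factor as an algebraic prefactor depending only on the standard deviations times a Gaussian of the \emph{relative shifted mean separation} ($x_i^S-\mu_{E_1|i}-x_{i^\prime}^S+\mu_{E_1|i^\prime}$ for $S$, and $\mu_{E_1|i}-\mu_{E_1|i^\prime}-\mu_{E_j|i}+\mu_{E_j|i^\prime}$ for $E_j$) over the sum of the variances. The Gaussian factor never exceeds $1$, so the fidelity can be sent to zero in only two ways: (i) the relative shifted separation greatly exceeds $\sqrt{\sigma_{E_1|i}^2+\sigma_{E_1|i^\prime}^2+\sigma_{E_j|i}^2+\sigma_{E_j|i^\prime}^2}$ (the two-$\sigma$ version for $S$) — that is, non-degenerate \emph{and} well-separated relative positions together with narrow conditional spreads; or (ii) one standard deviation dominates the rest, in which case the prefactor scales like $(\sigma^2)^{1/4}/\sqrt{\sigma^2}=1/\sqrt{\sigma}\to0$ while the exponential stays bounded. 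These are precisely the two cases in the statement; moreover exact degeneracy of the relative positions annihilates the numerator in (i), leaving (ii) as the only possibility, which recovers the perfect-localisation non-degeneracy observation of Sec.~\ref{sec:Perfect-localisation-and-consistent-objectivity} at the level of spreads.

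I expect the only genuine subtlety to be pinning down \emph{which} reduced state is the candidate objective state and with respect to \emph{which} spectrum, since, as noted below Eq.~\eqref{eq:incoh_SBS_state_E1frame}, the naive continuous family $\{p_i f(-q_C|\mu_{E_1|i},\sigma_{E_1|i})\}_{i,q_C}$ is never objective. Because the statement asserts only a \emph{necessary} condition, this is handled by the weakest possible reading: whatever reduced state on $S$ together with a subset of $\{C,E_2,\dots,E_N\}$ is proposed to carry the information $\{p_i\}_i$, its marginals on $S$ and on each single $E_j$ must already be (near-)perfectly distinguishable in $i$, and those marginals are exactly the states $\rho_{S|i}^{(E_1)}$ and $\rho_{E_j|i}^{(E_1)}$ whose fidelities have been computed; the remaining asymptotic reading of the two formulas is routine. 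A minor expository point is to make ``orders of magnitude'' precise as the limiting statement that the fidelity tends to zero iff the relative shifted separation over the combined spread diverges, or the ratio of the largest conditional standard deviation to the others diverges.
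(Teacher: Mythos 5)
Your proposal is correct and follows essentially the same route as the paper: the paper's justification for this observation is precisely the computation of the conditional fidelities $\left\Vert \sqrt{\rho_{S|i}^{\left(E_1\right)}}\sqrt{\rho_{S|i^{\prime}}^{\left(E_1\right)}}\right\Vert _1$ and Eq.~(\ref{eq:SBS_incoh_env_fidelity}) in the $(E_1)$ frame, combined with the necessity of vanishing fidelity via the lower bound in Eq.~(\ref{eq:errBound}), and the two regimes (well-separated shifted means relative to the combined spread, or a dominant standard deviation giving the $1/\sqrt{\sigma}\to0$ prefactor) are read off exactly as you describe. Your added care about contractivity under partial trace and about which reduced state carries the candidate information makes explicit what the paper leaves implicit, but it is the same argument.
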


Suppose the conditional fidelity $\left\Vert \sqrt{\rho_{E_j|i}^{\left(E_1\right)}}\sqrt{\rho_{E_j|i^{\prime}}^{\left(E_1\right)}}\right\Vert _1$ is \emph{not} close to zero. In this case, we can take \emph{macro-fractions} in order to increase distinguishability. Suppose we have a fraction $F=\left\{ E_j\right\} _{j\in F}$. Then, the conditional fidelity is
\begin{align}
\left\Vert \sqrt{\rho_{F|i}^{\left(E_1\right)}}\sqrt{\rho_{F|i^{\prime}}^{\left(E_1\right)}}\right\Vert _{1} & =\prod_{j\in F}\left\Vert \sqrt{\rho_{E_j|i}^{\left(E_1\right)}}\sqrt{\rho_{E_j|i^{\prime}}^{\left(E_1\right)}}\right\Vert _{1}.
\end{align}
Provided $\left\Vert \sqrt{\rho_{E_j|i}^{\left(E_1\right)}}\sqrt{\rho_{E_j|i^{\prime}}^{\left(E_1\right)}}\right\Vert _{1}<1$, which is true provided that there is non-degeneracy in the relative positions, $\mu_{E_1|i}-\mu_{E_1|i^{\prime}}-\mu_{E_j|i}+\mu_{E_j|i^{\prime}}\neq0$, then the product of increasingly many of them takes $\left\Vert \sqrt{\rho_{F|i}^{\left(E_1\right)}}\sqrt{\rho_{F|i^{\prime}}^{\left(E_1\right)}}\right\Vert _{1}\rightarrow 0$.

\begin{observation}
Information becomes less distinguishable in different frames. Provided that there is non-degeneracy in the relative peak-positions, distinguishability can be achieved by taking a suitably large collection of sub-environments (macrofractions).
\end{observation}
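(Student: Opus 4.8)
The plan is to treat the observation's two parts in turn, leaning on the formulas and the multiplicativity of the conditional fidelity already derived in this subsection.

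For \emph{``information becomes less distinguishable in different frames''}, I would compare the single-sub-environment conditional fidelity in the $(C)$ frame, Eq.~(\ref{eq:incoh_SBS_E_fidelity}), with the one in the $(E_1)$ frame, Eq.~(\ref{eq:SBS_incoh_env_fidelity}). Writing $a=\sigma_{E_1|i}^2+\sigma_{E_1|i'}^2$ and $b=\sigma_{E_j|i}^2+\sigma_{E_j|i'}^2$, passing to $(E_1)$ convolves each conditional environment distribution with the $E_1$ Gaussian, which (i) enlarges the variance sum in the Gaussian exponent from $b$ to $a+b$, weakening the exponential suppression for any fixed peak mismatch, and (ii) turns the prefactor into $\sqrt{2\sqrt{ab}/(a+b)}$, which is $\le1$ by the AM--GM inequality $a+b\ge2\sqrt{ab}$ (equality iff $a=b$). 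On top of this, the $(E_1)$ frame carries an entire $q_C$-indexed continuum of conditional states which, as noted around Fig.~\ref{fig:smooth_continuum_of_peaks}, is not distinguishable at all; together these support the qualitative claim.

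For the substantive part I would start from the multiplicativity $\Vert\sqrt{\rho_{F|i}^{(E_1)}}\sqrt{\rho_{F|i'}^{(E_1)}}\Vert_1=\prod_{j\in F}\Vert\sqrt{\rho_{E_j|i}^{(E_1)}}\sqrt{\rho_{E_j|i'}^{(E_1)}}\Vert_1$ together with the explicit value~(\ref{eq:SBS_incoh_env_fidelity}). Two elementary facts do the work: each factor is $\le1$ (prefactor $\le1$ by AM--GM, exponential $\le1$), and each factor is \emph{strictly} $<1$ exactly when the relative peak mismatch $\Delta_j:=\mu_{E_1|i}-\mu_{E_1|i'}-\mu_{E_j|i}+\mu_{E_j|i'}$ is nonzero, since the exponent is then strictly negative --- this is the non-degeneracy hypothesis. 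Writing the $j$-th factor as $1-\epsilon_j$ with $\epsilon_j>0$, the standard infinite-product criterion gives $\prod_{j\in F}(1-\epsilon_j)\to0$ as the macro-fraction grows precisely when $\sum_j\epsilon_j$ diverges; in particular, if $|\Delta_j|$ is bounded below and the variances bounded above over the family (automatic for identically prepared sub-environments) then $\epsilon_j\ge\epsilon_0>0$ and the fidelity decays at least like $(1-\epsilon_0)^{|F|}\to0$, exponentially in the fraction size. Finally, feeding this into the operational error bound~(\ref{eq:errBound}) for the ensemble $\{p_i,\rho_{F|i}^{(E_1)}\}_i$ gives $\mathbb{P}(\text{error})\le\sum_{i\neq i'}\sqrt{p_ip_{i'}}\prod_{j\in F}\Vert\sqrt{\rho_{E_j|i}^{(E_1)}}\sqrt{\rho_{E_j|i'}^{(E_1)}}\Vert_1\to0$, so the conditional states of a sufficiently large macro-fraction become distinguishable to any desired precision, which is the assertion.

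The main obstacle I anticipate is purely in fixing the quantifiers: bare non-degeneracy ($\Delta_j\neq0$ for each $j$) does \emph{not} by itself force the product to vanish, since the $\epsilon_j$ may be summable if the mismatches shrink or the variances grow along the family of sub-environments. The clean statement therefore needs an extra uniformity assumption --- a uniform lower bound on $|\Delta_j|$ and a uniform upper bound on the relevant variances, as holds trivially for i.i.d.\ sub-environments --- or, more generally, the divergence of $\sum_j\epsilon_j$. Pinning down and stating this hypothesis is the only delicate point; the rest is the elementary $\prod(1-\epsilon_j)$ estimate plus the formulas already in hand.
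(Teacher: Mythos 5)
Your proposal is correct and follows essentially the same route as the paper: the paper's own justification is exactly the multiplicativity $\Vert\sqrt{\rho_{F|i}^{(E_1)}}\sqrt{\rho_{F|i'}^{(E_1)}}\Vert_1=\prod_{j\in F}\Vert\sqrt{\rho_{E_j|i}^{(E_1)}}\sqrt{\rho_{E_j|i'}^{(E_1)}}\Vert_1$ combined with the claim that each factor is strictly below $1$ whenever $\mu_{E_1|i}-\mu_{E_1|i'}-\mu_{E_j|i}+\mu_{E_j|i'}\neq0$, so the product tends to zero as $|F|$ grows; your first-sentence argument (variance sum $b\to a+b$ in the exponent, prefactor $\le 1$ by AM--GM) likewise matches the paper's implicit comparison of Eq.~(\ref{eq:incoh_SBS_E_fidelity}) with Eq.~(\ref{eq:SBS_incoh_env_fidelity}). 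The one place you go beyond the paper is the quantifier issue you flag at the end: the paper asserts that each factor being $<1$ suffices for the product to vanish, which is literally false without a uniformity condition (the product $\prod_j(1-\epsilon_j)$ converges to a nonzero limit if $\sum_j\epsilon_j<\infty$, e.g.\ if the peak mismatches shrink or the variances grow along the family). Your added hypothesis --- a uniform lower bound on $|\Delta_j|$ and upper bound on the variances, or more generally divergence of $\sum_j\epsilon_j$ --- is a genuine and correct tightening of the paper's informal claim, and is implicitly what the paper relies on (its numerics in Fig.~\ref{fig:Amount-of-localisation_fragment_distinguishability} use i.i.d.-style sampling where this holds). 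So: same approach, correctly executed, with a worthwhile repair of the paper's loose limiting step.
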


In Fig.~\ref{fig:Amount-of-localisation_fragment_distinguishability}, we demonstrate the interplay between localisation and macrofraction size and their contribution to the distinguishability of two conditional states. 

\begin{figure}
\begin{centering}
\includegraphics[width=1\columnwidth]{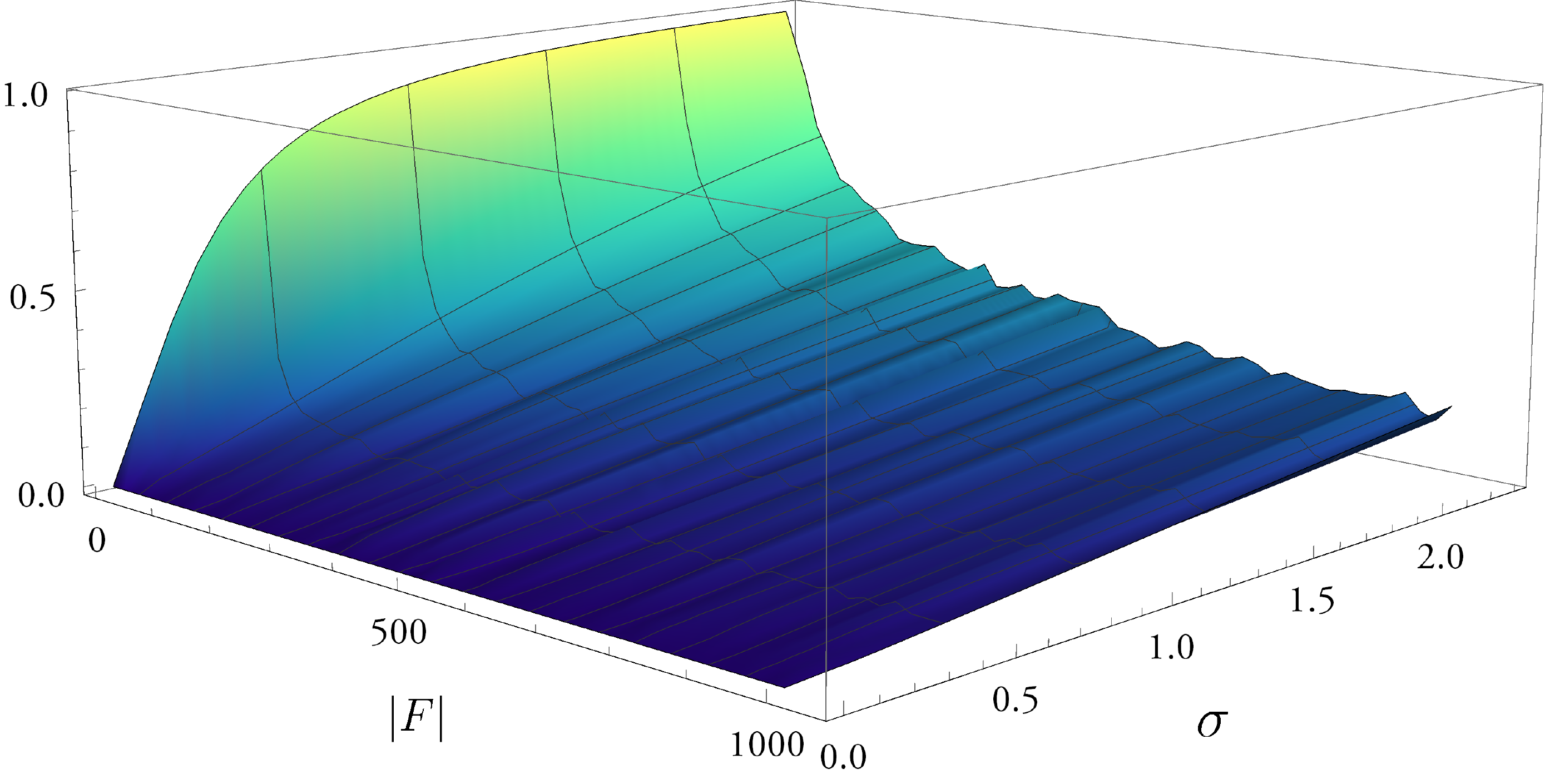}
\par\end{centering}
\caption{(Color on-line) Plot of the conditional state fidelity $\left\Vert \sqrt{\rho_{F|0}^{\left(E_1\right)}}\sqrt{\rho_{F|1}^{\left(E_1\right)}}\right\Vert _{1}$, versus the amount of localisation ($\sigma$ the same for all Gaussian states) and macrofraction size $|F|$, for the case of the state in Eq.~(\ref{eq:incoherent_cont_SBS}). Here, the peak positions $\left\{ \mu_{E_i|0},\mu_{E_i|1}\right\}_i$ are picked randomly from the interval $\left[-1,1\right]$ and the graph is averaged over $400$ collections of random samples. Sharp localisation $\sigma\rightarrow 0$, and large macrofractions $|F|$ lead to low conditional state fidelity and hence greater distinguishability.\label{fig:Amount-of-localisation_fragment_distinguishability}}
\end{figure}

In general, the conditional environment states can be mixed, e.g. $\rho_{E_1|i}=\sum_k q_k \rho_{E_1|i,k}$ from Eq.~(\ref{eq:incoherent_cont_SBS}) can be a mixture of distinguishable Gaussian states. As previously determined in Section \ref{sec:Perfect-localisation-and-consistent-objectivity}, this leads to a new objective information given by the distribution $\left\{ p_i q_k\right\} _{i,k}$, where the original information is recovered through the marginal obtained by summing over all values of $q_k$.

\subsection{Coherent objectivity states and the rise of new classical and quantum correlations\label{subsec:Coherent-SBS}}

In general, objective states have coherence. When moving to the reference frame of one of those environments, this coherence turns into entanglement between the other subsystems. Consider the following state, in which the system and environments are coherent relative to the position basis:
\begin{equation}
\rho_{SE_1\cdots E_N}^{(C)}=\sum_i p_i\ket{\psi_i^S}\bra{\psi_i^S}_S\otimes\bigotimes_{j=1}^{N}\rho_{E_j|i},
\end{equation}
where all the pure states are Gaussian wave-packets:
\begin{align}
\ket{\psi_i^S} & =\int dx_S f^{\frac{1}{2}}\left(x_S|\mu_{S|i},\sigma_{S|i}\right)\ket{x_S}_S\\
\rho_{E_j|i} & =\sum_{k_j}t_{k_j,i}\ket{\varphi_{i,k_j}^{E_j}}\bra{\varphi_{i,k_j}^{E_j}}_{E_j}\\
\ket{\varphi_{i,k_j}^{E_j}} & =\int dx_{E_j}f^{\frac{1}{2}}\left(x_{E_j}|\mu_{E_j|i,k_j},\sigma_{E_j|i,k_j}\right)\ket{x_{E_j}}_{E_j}.
\end{align}
Note that $f^{\frac{1}{2}}\left(\cdot\right)=\sqrt{f\left(\cdot\right)}$ is the square-root of a Gaussian (which may include a potential phase). In the reference frame of environment $E_1$, 
\begin{align}
 & \rho_{SCE_2\cdots E_N}^{\left(E_1\right)}\nonumber \\
 & =\sum_{i,k_1}p_it_{k_1,i}\int dq_S dq_S^{\prime}dq_Cdq_C^{\prime}f^{\frac{1}{2}}\left(q_S-q_C|\mu_{S|i},\sigma_{S|i}\right)\nonumber \\
 & \phantom{=}\times f^{\frac{1}{2}*}\left(q_S^{\prime}-q_C^{\prime}|\mu_{S|i},\sigma_{S|i}\right)f^{\frac{1}{2}}\left(-q_C|\mu_{E_1|i,k_1},\sigma_{E_1|i,k_1}\right)\nonumber \\
 & \phantom{=}\times f^{\frac{1}{2}*}\left(-q_C^{\prime}|\mu_{E_1|i,k_1},\sigma_{E_1|i,k_1}\right)\ket{q_S}\bra{q_S^{\prime}}_S\otimes\ket{q_C}\bra{q_C^{\prime}}_C\nonumber \\
 & \phantom{=}\otimes\bigotimes_{j=2}^{N}\sum_{k_j}t_{k_j,i}\int dq_{E_j}dq_{E_j}^{\prime}f^{\frac{1}{2}}\left(q_{E_j}-q_C|\mu_{E_j|i,k_j},\sigma_{E_j|i,k_j}\right)\nonumber \\
 & \qquad\quad f^{\frac{1}{2}*}\left(q_{E_j}^{\prime}-q_C^{\prime}|\mu_{E_j|i,k_j},\sigma_{E_j|i,k_j}\right)\ket{q_{E_j}}\bra{q_{E_j}^{\prime}}_{E_j}.
\end{align}
Coherence in the environment states (relative to the position basis in which we change reference frames) leads to an entanglement between the laboratory frame state and the system-environments. This entanglement can be removed by tracing out the laboratory state. The small changes in $q_C$ will not be distinguishable. Instead, the best candidate for the perceived objective information is $\left\{ p_it_{k_1,i}\right\} _{i,k_1}$---\emph{i.e.}, the original objectivity information mixed with the $E_1$ incoherent statistical mixedness that has now turned into classical correlations in the new frame as we have seen with previous examples.

The local system state is $\rho_S^{\left(E_1\right)}=\sum_{i,k_1}p_it_{k_1,i}\rho_{S|i,k_1}^{\left(E_1\right)}$, where the conditional states are:
\begin{align}
\rho_{S|i,k_1}^{\left(E_1\right)} & \coloneqq\int dq_Cf\left(-q_C|\mu_{E_1|i,k_1},\sigma_{E_1|i,k_1}\right)\nonumber \\
 & \qquad\left[\int dq_Sf^{\frac{1}{2}}\left(q_S-q_C|\mu_{S|i},\sigma_{S|i}\right)\ket{q_S}\right]\nonumber \\
 & \qquad\left[\int dq_S^{\prime}f^{\frac{1}{2}*}\left(q_S^{\prime}-q_C|\mu_{S|i},\sigma_{S|i}\right)\bra{q_S^{\prime}}_S\right].
\end{align}
The system is conditionally centered around $\mu_{S|i}-\mu_{E_1|i,k_1}$, with a spread of approximately $\sqrt{\sigma_{S|i}^2 +\sigma_{E_1|i,k_1}^2 }$.

\begin{observation}
In other reference frames, the conditional system states is typically no longer pure, but they can still be distinguishable. We can consider this a \emph{generalised objectivity}, in which the conditional system states are mixed (instead of conditionally pure) and perfectly distinguishable in the manner the environment states are.
\end{observation}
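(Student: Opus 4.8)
The plan is to verify the two parts of the observation — that the conditional system states $\rho_{S|i,k_1}^{(E_1)}$ are generically mixed, and that they can nonetheless be rendered (near-)perfectly distinguishable — by transcribing, for the system, the argument already carried out for the conditional environment states $\rho_{E_j|i}^{(E_1)}$ in Sec.~\ref{subsec:Incoherent-unmixed-SBS}. The relevant label is now the pair $(i,k_1)$, so throughout one discriminates the ensemble $\big\{p_i t_{k_1,i},\,\rho_{S|i,k_1}^{(E_1)}\big\}_{i,k_1}$; note that the shifted system centre $\mu_{S|i}-\mu_{E_1|i,k_1}$ already separates distinct $k_1$ at fixed $i$, since $\mu_{S|i}$ does not depend on $k_1$.

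\emph{Non-purity.} The displayed $\rho_{S|i,k_1}^{(E_1)}$ is by construction the convex combination $\int dq_C\, f\!\left(-q_C|\mu_{E_1|i,k_1},\sigma_{E_1|i,k_1}\right)\ket{\chi_{q_C}}\bra{\chi_{q_C}}$ of the pure Gaussian wave-packets $\ket{\chi_{q_C}}\coloneqq\int dq_S\, f^{\frac{1}{2}}\!\left(q_S-q_C|\mu_{S|i},\sigma_{S|i}\right)\ket{q_S}$, whose position densities are Gaussians centred at the distinct points $q_C+\mu_{S|i}$. These wave-packets are therefore pairwise non-parallel, and the convolving weight has full support, so the mixture is genuine and $\Tr\!\big[(\rho_{S|i,k_1}^{(E_1)})^2\big]<1$ except in the perfectly-localised limit $\sigma_{E_1|i,k_1}\to 0$ of Sec.~\ref{sec:Perfect-localisation-and-consistent-objectivity}. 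Concretely $\rho_{S|i,k_1}^{(E_1)}$ is a single-mode Gaussian state: its position density is the convolution $f\!\left(q_S|\mu_{S|i}-\mu_{E_1|i,k_1},\sqrt{\sigma_{S|i}^2+\sigma_{E_1|i,k_1}^2}\right)$, which confirms the stated centre and spread, while its momentum variance stays that of the original wave-packet; I would then read off its purity, which is controlled by $\sigma_{E_1|i,k_1}/\sigma_{S|i}$ (a linear phase in $f^{\frac{1}{2}}$ only boosts the state in momentum and affects none of the above).

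\emph{Distinguishability.} Generalised objectivity of the label $(i,k_1)$ requires the minimum-error probability of discriminating the $\rho_{S|i,k_1}^{(E_1)}$ to be small, and by the bound in Eq.~(\ref{eq:errBound}) it suffices that $\big\Vert\sqrt{\rho_{S|i,k_1}^{(E_1)}}\sqrt{\rho_{S|i',k_1'}^{(E_1)}}\big\Vert_1$ be small for every $(i,k_1)\neq(i',k_1')$. Since each conditional state is Gaussian, I would evaluate this fidelity either from the closed-form single-mode Gaussian fidelity or, to stay elementary, by diagonalising each $\rho_{S|i,k_1}^{(E_1)}$ (its eigenfunctions being Hermite functions of the appropriate width) and reducing to overlap integrals of Gaussians, exactly as was done to obtain Eq.~(\ref{eq:SBS_incoh_env_fidelity}) for $\rho_{E_j|i}^{(E_1)}$. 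The result has the same structure: a Gaussian factor $\exp\!\big[-\Delta^2/4(\sigma_{S|i}^2+\sigma_{E_1|i,k_1}^2+\sigma_{S|i'}^2+\sigma_{E_1|i',k_1'}^2)\big]$ in the shifted separation $\Delta=\mu_{S|i}-\mu_{E_1|i,k_1}-\mu_{S|i'}+\mu_{E_1|i',k_1'}$, times a prefactor $\sim\big[(\sigma_{S|i}^2+\sigma_{E_1|i,k_1}^2)(\sigma_{S|i'}^2+\sigma_{E_1|i',k_1'}^2)\big]^{1/4}\big/\sqrt{\sigma_{S|i}^2+\sigma_{E_1|i,k_1}^2+\sigma_{S|i'}^2+\sigma_{E_1|i',k_1'}^2}$, up to a bounded factor set by the conditional momentum spreads when $i\neq i'$. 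This is small precisely when the shifted centres are non-degenerate with $\Delta$ large relative to the combined spread, or when one spread dominates the others — the same criteria that controlled $\rho_{E_j|i}^{(E_1)}$; and where a given pair of conditional system states is not itself sharp enough, one promotes the remaining environments $E_2,\ldots,E_N$ to a macro-fraction and uses the product factorisation of the fidelity, as in the macro-fraction observations above, to drive the overlap to zero.

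\emph{Main obstacle.} The one step beyond bookkeeping is that the conditional system states are now \emph{mixed}: the relevant fidelity is no longer a bare wave-function overlap, so one must either import the closed-form Gaussian-state fidelity or carry out the spectral decomposition of a continuous Gaussian density operator before the Gaussian integrals can be performed. Once that is in place, the genuine mixedness (from the full support of the convolving Gaussian) and the distinguishability and macro-fraction criteria are direct transcriptions of the analysis already established for the conditional environment states.
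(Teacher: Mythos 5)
Your proposal reaches the same conclusions as the paper and rests on the same two pillars: the conditional system state $\rho_{S|i,k_1}^{(E_1)}$ is a genuine convex mixture over $q_C$ of displaced Gaussian wave-packets, hence mixed, centred at $\mu_{S|i}-\mu_{E_1|i,k_1}$ with spread $\sqrt{\sigma_{S|i}^2+\sigma_{E_1|i,k_1}^2}$; and distinguishability of the labels $(i,k_1)$ is controlled by the separation of the shifted centres relative to the combined spreads (or by widely differing spreads), enhanced where necessary by macro-fractions. Where you genuinely diverge is the technical device for the distinguishability half. The paper explicitly declines to diagonalise $\rho_{S|i,k_1}^{(E_1)}$ (``since we cannot calculate the eigendecomposition\ldots in general'') and instead computes only the overlap $\Tr\bigl[\rho_{S|i,k_1}^{(E_1)}\rho_{S|i^{\prime},k_1^{\prime}}^{(E_1)}\bigr]$, which is an elementary position-basis Gaussian integral but is only a \emph{lower} bound on the squared fidelity\textemdash so it can certify indistinguishability, and the distinguishability claim of the Observation remains heuristic. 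You instead propose evaluating the true fidelity $\bigl\Vert\sqrt{\rho}\sqrt{\rho^{\prime}}\bigr\Vert_1$ via the closed-form single-mode Gaussian-state fidelity or a Hermite-function eigendecomposition; this is legitimate (these states are indeed single-mode Gaussian states) and, if carried out, would be strictly stronger, since it feeds directly into the upper bound of Eq.~\eqref{eq:errBound} and actually certifies a small error probability. The trade-off is that your route imports machinery the paper deliberately avoids, and the explicit form you assert for the result is stated rather than derived: unlike the conditional environment states of Sec.~\ref{subsec:Incoherent-unmixed-SBS}, which are diagonal in position so that Eq.~\eqref{eq:SBS_incoh_env_fidelity} reduces to a classical Bhattacharyya coefficient, the states here carry off-diagonal position coherences, so the fidelity cannot simply be transcribed from that computation and the claimed prefactor and ``bounded momentum-spread factor'' would need to be checked against the Gaussian-state formula (for comparison, the paper's computed overlap carries $\exp[-\Delta^2/2(\cdots)]$ rather than your $\exp[-\Delta^2/4(\cdots)]$, consistent with it being an overlap rather than a fidelity).
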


Heuristically, provided that these new peaks are sufficiently separated, or that different standard deviations separated by orders of magnitude, then the conditional states will be distinguishable. Since we cannot calculate the eigendecomposition for $\rho_{S|i,k_1}^{\left(E_1\right)}$ in general, we will instead calculate the overlap/linear fidelity, which is a lower bound to the fidelity: 
\begin{align}
 & \Tr\left[\rho_{S|i,k_1}^{\left(E_1\right)}\rho_{S|i^{\prime},k_1^{\prime}}^{\left(E_1\right)}\right]\nonumber \\
 & =\dfrac{2\sigma_{S|i}\sigma_{S|i^{\prime}}\exp\left[-\dfrac{\left(\mu_{E_1|i,k_1}-\mu_{S|i}-\mu_{E_1|i^{\prime},k_1^{\prime}}+\mu_{S|i^{\prime}}\right)^2 }{2\left(\sigma_{E_1|i,k_1}^2 +\sigma_{E_1|i^{\prime},k_1^{\prime}}^2 +\sigma_{S|i}^2 +\sigma_{S|i^{\prime}}^2 \right)}\right]}{\sqrt{\left(\sigma_{S|i}^2 +\sigma_{S|i^{\prime}}^2 \right)\left(\sigma_{E_1|i,k_1}^2 +\sigma_{E_1|i^{\prime},k_1^{\prime}}^2 +\sigma_{S|i}^2 +\sigma_{S|i^{\prime}}^2 \right)}}.
\end{align}
The linear fidelity is small when the relative differences are greater than the standard deviations, or if $\sigma_{E_1|i,k_1}$ are large compared to $\sigma_{S|i}$. 

Similarly, the environment states, $\rho_{E_j}^{\left(E_1\right)}=\sum_{i,k_1}p_it_{k_1,i}\rho_{E_j|i,k_1}^{\left(E_1\right)}$, have  conditional states
\begin{align}
\rho_{E_j|i,k_1}^{\left(E_1\right)} & \coloneqq \sum_{k_j}t_{k_j,i}\int dq_Cf\left(-q_C|\mu_{E_1|i,k_1},\sigma_{E_1|i,k_1}\right)\nonumber \\
 & \phantom{=}\left[\int dq_{E_j}f^{\frac{1}{2}}\left(q_{E_j}-q_C|\mu_{E_j|i,k_j},\sigma_{E_j|i,k_j}\right)\ket{q_{E_j}} \right]\nonumber \\
 & \phantom{=}\left[\int dq_{E_j}^{\prime}f^{\frac{1}{2}*}\left(q_{E_j}^{\prime}-q_C^{\prime}|\mu_{E_j|i,k_j},\sigma_{E_j|i,k_j}\right)\bra{q_{E_j}^{\prime}}_{E_j} \right].
\end{align}
We could calculate their linear fidelity (not shown here): provided they are separated in position, or if their standard deviations are very different, then the conditional environment states will be distinguishable.

When the environment states have coherence, the full system-environment state gains entanglement in other quantum reference frames. However, this entanglement can be decohered into classical correlations by tracing out the (transformed) laboratory system. Distinguishability requires the locations of the new peaks in the new reference frame to be sufficiently separated, or that the size of the spreads in the new reference frame be sufficiently different, and can be enhanced with macrofractions. Once distinguishable, the information $\left\{ p_i t_{k_1,i}\right\} _{i,k_1}$ can be recovered from the environments, and in turn the original system information. However, the $ t_{k_1,i}$ component is unique to the $(E_1)$ frame.

\begin{observation}
The system information $\left\{ p_i\right\}_i $ from the laboratory frame is unique, in that it is recoverable in all frames.
\end{observation}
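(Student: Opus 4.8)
The plan is to track what the quantum reference frame change does to the \emph{label} carrying the system information. The claim is that although moving to an environment's frame replaces the pointer index $i$ by a finer composite index, the probability attached to each finer index always factorises as $p_i$ times an internal distribution that sums to one; hence marginalising over everything except $i$ returns $\{p_i\}_i$ exactly, and this marginal is operationally accessible because the $i$-blocks stay distinguishable on the environment.

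First I would take the general SBS state Eq.~(\ref{eq:SBS}), expand each conditional environment state into its eigendecomposition $\rho_{E_j|i}=\sum_{k_j}t_{k_j,i}\ket{\varphi_{i,k_j}^{E_j}}\bra{\varphi_{i,k_j}^{E_j}}$, and, in the coherent case, each pure component into the position basis, exactly as set up in Sec.~\ref{subsec:Perfectly-localised-SBS} and Sec.~\ref{subsec:Coherent-SBS}. Applying the transformation Eq.~(\ref{eq:transform}) into the frame of $E_1$ (without loss of generality, since the map onto any $E_k$ has the same form up to relabelling) and tracing out the transformed laboratory $C$---legitimate here because we only need the statistics seen by an observer measuring the remaining environments, and because this trace merely removes the coherence-induced entanglement identified in the earlier observations---produces a (generalised) spectrum broadcast structure with conditional labels $(i,k_1,a_{i1})$ (or $(i,k_1)$ together with a non-distinguishable, hence effectively integrated, continuous shift $q_C$ in the Gaussian case) and probability weights $p_i\,t_{k_1,i}\,|r_{a_{i1},i,1,k_1}|^2$.

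The central step is then the collapse of this weight: summing over $a_{i1}$ uses $\sum_{a_{i1}}|r_{a_{i1},i,1,k_1}|^2=1$ (normalisation of $\ket{\varphi_{i,k_1}^{E_1}}$), summing over $k_1$ uses $\sum_{k_1}t_{k_1,i}=1$ (unit trace of $\rho_{E_1|i}$), and in the Gaussian case $\int dq_C\, f(-q_C\mid\mu_{E_1|i},\sigma_{E_1|i})=1$ does the same job; so the $i$-marginal of the new objective distribution is precisely $\{p_i\}_i$. To turn this into a recoverability statement I would then invoke the distinguishability results already established: for perfectly localised, generically placed positions the blocks of conditional states with distinct $i$ are exactly orthogonal by the non-degeneracy of the relative separations (Sec.~\ref{sec:Perfect-localisation-and-consistent-objectivity}), while in the continuous-spread case the conditional fidelity Eq.~(\ref{eq:SBS_incoh_env_fidelity}) together with the error bound Eq.~(\ref{eq:errBound}) shows the $i$-blocks become arbitrarily distinguishable over a macrofraction whenever the cross-$i$ relative peak positions do not coincide. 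Repeating the computation for an arbitrary choice of reference environment $E_k$ then establishes the claim for all the frames in question, with the extra data $t_{k_1,i}$ and $|r_{\cdots}|^2$ being frame-specific artefacts of the chosen reference environment's internal decomposition.

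The main obstacle is the distinguishability bookkeeping rather than the probability collapse: one must check that coarse-graining the composite label down to $i$ preserves orthogonality, i.e. that no relative position formed from two \emph{different} pointer values $i\neq i'$ accidentally coincides with one formed within a single pointer value, and---in the Gaussian setting---that ``recoverable'' is understood in the perceived, approximate sense quantified by the fidelities, with the macrofraction argument of Sec.~\ref{subsec:Incoherent-unmixed-SBS} supplying the zero-error limit. Generic (randomly sampled) positions remove the first worry with probability one, so the real work is assembling the frame transformation, the trace over $C$, the two normalisation identities, and the earlier distinguishability lemmas into a single statement.
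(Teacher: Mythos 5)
Your proposal is correct and follows essentially the same route as the paper: transform to the frame of $E_k$, trace out the laboratory subsystem $C$, identify the new weights $p_i\,t_{k_1,i}\,|r_{a_{i1},i,1,k_1}|^2$ (or $p_i t_{k_1,i}$ with the $q_C$ integral in the Gaussian case), and collapse them back to $\{p_i\}_i$ via the normalisation of the conditional environment states, with the distinguishability of the $i$-blocks supplied by the non-degeneracy and macrofraction arguments of Secs.~\ref{sec:Perfect-localisation-and-consistent-objectivity} and \ref{subsec:Incoherent-unmixed-SBS}. Your added bookkeeping on cross-$i$ coincidences of relative positions is a caveat the paper handles the same way (generic sampling), and your closing remark that recoverability is only approximate in the continuous-spread setting matches the paper's own qualification that this observation does not assert objectivity of $\{p_i\}_i$ in all frames.
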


Note though this is \emph{not} the same as saying that  $\left\{ p_i\right\}_i $  is \emph{objective} in all frames, as all the previous and following examples have shown.

\begin{observation}
All the information in the system-environment remains when changing reference frames. However, this information can become scrambled and prevent the system information $\left\{ p_i\right\}_i $ from being the only objective information in the new frames. Instead, the internal information of the new environment frame (\emph{i.e.} mixedness and coherence in the conditional states) produces new correlations that augment the original objective system information. Thus, to keep the exact same objective information, there should be as little internal conditional information in the environment as possible.
\end{observation}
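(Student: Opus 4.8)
The plan is to establish the statement in three movements, one for each of its assertions. First, ``all the information remains'' is a consequence of the quantum reference frame transformation being unitary: Eq.~(\ref{eq:transform}) is induced by the map $\hat{S}_{\text{position}}^{(C\to E_1)}$, which carries the orthonormal family $\{\ket{x_S}_S\ket{x_{E_1}}_{E_1}\cdots\}$ onto the orthonormal family $\{\ket{q_S}_S\ket{q_C}_C\cdots\}$ through the linear coordinate change $x_{E_1}\mapsto -q_C$, $x_X\mapsto q_X-q_C$ for $X=S,E_2,\dots,E_N$. A unitary preserves the spectrum of the global density matrix, hence its von Neumann entropy and every spectral invariant, so the change of frame neither creates nor destroys information; it only redistributes it among the subsystems $S,C,E_2,\dots,E_N$.

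Second, to see how the objective content is reshuffled, I would take the general perfectly distinguishable SBS state of Sec.~\ref{subsec:Perfectly-localised-SBS} (and its Gaussian counterpart of Sec.~\ref{subsec:Coherent-SBS}), apply Eq.~(\ref{eq:transform}), and track the three sources of ``internal conditional information'' in the $E_1$ state separately. (i) The system coherence amplitudes $q_{k,i}$ and the environment wave-packet amplitudes $r_{a_{ij},i,j,k_j}$: because the $q_C,q_C'$ integrations couple the shifted subsystems to the new laboratory register $C$, these become terms that are off-diagonal in $C$, i.e.\ genuine entanglement between $C$ and $S$ and between $C$ and each $E_j$, and are therefore removed by decohering (equivalently, tracing out) $C$. (ii) The incoherent mixing weights $t_{k_j,i}$ of $\rho_{E_j|i}$: once $C$ is diagonalised these survive as a \emph{classical} label correlated across $S,C,E_2,\dots,E_N$. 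Collecting the diagonal terms of $\rho_{SE_2\cdots E_N}^{(E_1)}$ and invoking the non-degeneracy hypotheses --- generic randomly-sampled positions, as in Secs.~\ref{subsec:Perfectly-localised-SBS} and \ref{subsec:Coherent-SBS} --- so that the refined conditional states $\ket{\tilde{\psi}_{i,k_1,a_{i1}}^S}$ and $\tilde{\rho}_{E_j|i,k_1,a_{i1}}$ are mutually orthogonal (or, in the Gaussian case, made arbitrarily distinguishable by macrofractions), one reads off that the objective information in the $E_1$ frame is the refined distribution $\{p_i t_{k_1,i}|r_{a_{i1},i,1,k_1}|^2\}_{i,k_1,a_{i1}}$, whose marginal over $(k_1,a_{i1})$ is precisely $\{p_i\}$. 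This is the exact sense in which the original system information is \emph{augmented}, not lost.

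Third, for the converse --- ``to keep the exact same objective information there should be as little internal conditional information in the environment as possible'' --- I would argue contrapositively. If $\rho_{E_1|i}$ carries non-trivial internal information for some $i$, i.e.\ it is either mixed ($\exists\,k_1\neq k_1'$ with $t_{k_1,i},t_{k_1',i}>0$) or coherent in the position basis ($\exists\,a_{i1}\neq a_{i1}'$ with $r_{a_{i1},i,1,k_1},r_{a_{i1}',i,1,k_1}\neq0$), then after transforming and decohering $C$ the distribution above has strictly larger support than $\{p_i\}$: in the mixed case $k_1$ is a bona fide classical variable correlated with every remaining environment, so the finest spectrum broadcast structure of $\rho_{SE_2\cdots E_N}^{(E_1)}$ is strictly finer than the original; in the coherent case the weights $|r_{a_{i1},i,1,k_1}|^2$ split each $p_i t_{k_1,i}$ across the new $C$-conditioned branches. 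Either way the resulting SBS is a \emph{strict refinement} of the original, so its objective distribution is not $\{p_i\}$. Conversely, if every $\rho_{E_1|i}$ is a pure position eigenstate $\ket{x_i^{E_1}}\bra{x_i^{E_1}}$ (no internal information) and the relative positions are non-degenerate, the computation of Sec.~\ref{subsec:GHZ-like-objective-states} shows the transformed state still has SBS with the \emph{same} $\{p_i\}$.

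The main obstacle is the second movement: making ``objective information'' sharp enough that ``the objective distribution of the transformed state is a strict refinement of $\{p_i\}$'' is a genuine claim and not a heuristic. One must (a) handle the continuous $q_C$-integral --- the family $\{p_i f(-q_C\mid\mu_{E_1|i},\sigma_{E_1|i})\}_{i,q_C}$ is \emph{not} an SBS because neighbouring $q_C$ are indistinguishable, so one has to argue that the correct coarse-graining is exactly ``trace out $C$'' rather than some intermediate partition; and (b) in the Gaussian case, replace ``perfectly distinguishable'' by the fidelity/macrofraction criterion of Sec.~\ref{subsec:P_error_and_fidelity} and check that the refined labels stay asymptotically distinguishable --- via $\Vert\sqrt{\rho_{F|i}^{(E_1)}}\sqrt{\rho_{F|i'}^{(E_1)}}\Vert_1\to0$ for large $|F|$ --- while the coarse labels still are. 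Both steps reduce to bookkeeping layered on the fidelity formulas already computed in Secs.~\ref{subsec:P_error_and_fidelity}--\ref{subsec:Coherent-SBS}, but that is where the rigour has to be spent.
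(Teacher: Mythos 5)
Your proposal is correct and follows essentially the same route as the paper: the paper's justification for this observation is precisely the cumulative analysis of Secs.~\ref{subsec:Perfectly-localised-SBS}--\ref{subsec:Coherent-SBS} together with Theorem~\ref{thm:perfect_objectivity_discrete} --- unitarity of $\hat{S}_{\text{position}}$ preserving the global spectrum, environment coherences becoming entanglement with $C$ that is removed by tracing out $C$, mixing weights becoming classical labels yielding the refined distribution $\{p_i t_{k_1,i}|r_{a_{i1},i,1,k_1}|^2\}$ with marginal $\{p_i\}$, and purity plus localisation of the conditional environment states as the condition for retaining exactly $\{p_i\}$. The caveats you raise about the continuous $q_C$ integral and the Gaussian case are the same ones the paper itself handles via the trace-out of $C$ and the fidelity/macrofraction criterion.
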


\section{Precise conditions for Perfect Objectivity in all Quantum Reference Frames \label{sec:perfect_objectivity_theorems}}

In general, a discrete SBS state (i.e. containing countably many terms) can be written as follows: 
\begin{align}
\rho_{SE_{1}\cdots E_{N}}^{\left(C\right)} & =\sum_{i}p_{i}\ket{\psi_{i}^{S}}\bra{\psi_{i}^{S}}_{S}\otimes\bigotimes_{j=1}^{N}\rho_{E_{j}|i},\label{eq:general_discrete_SBS}\\
\braket{\psi_{i}^{S}|\psi_{i^{\prime}}^{S}} & =0,\quad\forall\,i\neq i^{\prime}\label{eq:general_discrete_condition_1}\\
\rho_{E_{j}|i}\rho_{E_{j}|i^{\prime}} & =0,\quad\forall\,i\neq i^{\prime},\forall\,j,\label{eq:general_discrete_condition_2}
\end{align}
where we have general coherent states on the system and mixed states on the environment that are perfectly distinguishable under different index $i$. We can write $\ket{\psi_{i}^{S}}$ and $\rho_{E_{j}|i}$ in general as:
\begin{align}
\ket{\psi_{i}^{S}} & =\sum_{x_{S}}\psi\left(x_{S}\Big|i\right)\ket{x_{S}}_{S},\\
\rho_{E_{j}|i} & =\sum_{x_{E_{j}},x_{E_{j}}^{\prime}}t\left(x_{E_{j}},x_{E_{j}}^{\prime}\Big|i,j\right)\ket{x_{E_{j}}}\bra{x_{E_{j}}^{\prime}}_{E_{j}}.
\end{align}
The objective information here is $\left\{ p_{i}\right\} _{i}$. However, as the cases above show, SBS states do not always remain SBS in different frames, and if they do, they will often have a \emph{different} objective information. In the following theorem, we give the particular SBS structure required for the same objective information in all relevant frames:
\begin{thm}
\label{thm:perfect_objectivity_discrete}A discrete SBS state $\rho_{SE_{1}E_{2}\cdots E_{N}}^{\left(C\right)}$ [Eq.~(\ref{eq:general_discrete_SBS})] is perfectly objective, with the same objective information $\left\{ p_{i}\right\} _{i}$, in all lab and environment reference frames if and only if it can be written in the following reduced form:
\begin{align}
\rho_{SE_{1}\cdots E_{N}}^{\left(C\right)} & =\sum_{i}p_{i}\ket{\psi_{i}^{S}}\bra{\psi_{i}^{S}}_{S}\otimes\bigotimes_{j=1}^{N}\ket{x_{E_{j}|i}}\bra{x_{E_{j}|i}}_{E_{j}},
\end{align}
and satisfying the perfect distinguishability conditions in the original lab frame:
\begin{align}
\braket{\psi_{i}^{S}|\psi_{i^{\prime}}^{S}} & =0,\quad\forall i\neq i^{\prime},\\
\braket{x_{E_{j}|i}|x_{E_{j}|i^{\prime}}} & =0,\quad\forall i\neq i^{\prime},\,\forall j,
\end{align}
and all environment frames:
\begin{align}
\braket{\tilde{\psi}_{i,j}^{S}|\tilde{\psi}_{i^{\prime},j}^{S}} & =0,\quad\forall i\neq i^{\prime},\,\forall j\label{eq:general_discrete_system_shifts}\\
\braket{x_{E_{j}|i}-x_{E_{k}|i}|x_{E_{j}|i^{\prime}}-x_{E_{k}|i^{\prime}}} & =0,\quad\forall i\neq i^{\prime},\,\forall j\neq k,\label{eq:general_discrete_environment_nondegen}
\end{align}
where $\ket{\tilde{\psi}_{i,j}^{S}}=\sum_{q_{S}}\psi\left(q_{S}+x_{E_{j}|i}|i\right)\ket{q_{S}}_{S}$.
\end{thm}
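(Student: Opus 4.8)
The plan is to prove the two implications separately, with the ``if'' direction a direct transformation calculation and the ``only if'' direction carrying the real content.

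\textbf{Sufficiency.} Starting from the reduced form, I would apply the reference-frame transformation Eq.~(\ref{eq:transform}) to the frame of an arbitrary environment $E_m$ (relabelling so that $m=1$ without loss of generality). Since each $\rho_{E_j|i}=\ket{x_{E_j|i}}\bra{x_{E_j|i}}$ is a sharp position within branch $i$, the transformation reduces to a branch-wise classical translation: the $E_1$-coordinate integrals are collapsed by the delta functions and no cross-branch coherence is generated, so one obtains $\rho_{SCE_2\cdots E_N}^{(E_1)}=\sum_i p_i\ket{\tilde\psi_{i,1}^S}\bra{\tilde\psi_{i,1}^S}_S\otimes\ket{-x_{E_1|i}}\bra{-x_{E_1|i}}_C\otimes\bigotimes_{j\ge2}\ket{x_{E_j|i}-x_{E_1|i}}\bra{x_{E_j|i}-x_{E_1|i}}_{E_j}$. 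This matches Definition~\ref{def:SBS} with objective information $\{p_i\}_i$ exactly when the shifted system states, the $C$-positions, and the shifted $E_j$-positions are mutually distinguishable for $i\ne i'$; these three requirements are, respectively, Eq.~(\ref{eq:general_discrete_system_shifts}) with $j=1$, the original lab condition $\braket{x_{E_1|i}|x_{E_1|i'}}=0$, and Eq.~(\ref{eq:general_discrete_environment_nondegen}) with $k=1$. Repeating this for every $m$ and recalling that the lab frame $C$ is SBS by hypothesis, the union over the $N$ environment frames of all the distinguishability conditions needed is precisely the list in the statement.

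\textbf{Necessity.} I would argue contrapositively that any ``internal structure'' in a conditional environment state breaks same-information objectivity. Transforming a general SBS state Eq.~(\ref{eq:general_discrete_SBS}) to the $E_1$ frame and using the computation of Sec.~\ref{subsec:Perfectly-localised-SBS}, coherence in $\rho_{E_1|i}$ produces $S$--$C$ entanglement while incoherent mixedness produces $S$--$C$ classical correlation, the control variable being the (negated) $E_1$-position now carried by the new subsystem $C$. Explicitly, $\rho_{SC}^{(E_1)}=\sum_i p_i\Gamma_i$ with $\Gamma_i$ unitarily equivalent, through a position-controlled translation, to $\ket{\psi_i^S}\bra{\psi_i^S}\otimes\rho_{E_1|i}$; hence within branch $i$ the system reduces to the mixture $\sum_x\bra{x}\rho_{E_1|i}\ket{x}\,\ket{\tilde\psi_i^S(x)}\bra{\tilde\psi_i^S(x)}$ over the positions in the support of $\rho_{E_1|i}$, where $\ket{\tilde\psi_i^S(x)}$ is $\ket{\psi_i^S}$ translated by $-x$. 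The key lemma is that distinct translates of a nonzero normalisable wavefunction are linearly independent (a periodic or exponentially growing modulus is incompatible with normalisability), so this mixture is a pure state only if $\rho_{E_1|i}$ is supported on a single position, i.e. $\rho_{E_1|i}=\ket{x_{E_1|i}}\bra{x_{E_1|i}}$. Since an SBS decomposition in the $E_1$ frame must be at least as fine as the sectorisation forced by the distinguishable $C$ (and the remaining transformed environments), and merging a pointer sector would make the branch-$i$ system state mixed, same-information objectivity forces $\rho_{E_1|i}$ localised for every $i$, hence (relabelling $E_1\to E_m$) every $\rho_{E_m|i}$ localised --- the reduced form. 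Once the reduced form is in hand, the sufficiency computation shows that the environment-frame orthogonality relations are precisely the SBS distinguishability conditions there, so they are necessary as well.

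The step I expect to be the main obstacle is the rigidity argument inside the necessity proof: ruling out that some alternative SBS decomposition in the $E_1$ frame could conceal the refinement imposed by the transformed environments, so that objectivity with information strictly $\{p_i\}_i$ --- rather than a genuine refinement such as $\{p_i\bra{x}\rho_{E_1|i}\ket{x}\}_{i,x}$ --- really does fail whenever a $\rho_{E_1|i}$ is not a position eigenstate. This requires care with the non-uniqueness of ensemble decompositions (any decomposition with mutually orthogonal environment supports refines the $C$-sectorisation, and merging distinct pointer components destroys purity), together with the translate-independence lemma and a treatment of possible coincidences among the $p_i$.
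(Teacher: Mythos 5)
Your proposal is correct and follows essentially the same route as the paper's Appendix~\ref{app:proof_theorem}: transform to an arbitrary environment frame, demand that the branch-$i$ conditional system state remain pure with spectrum $\left\{ p_{i}\right\}_{i}$, conclude that each $\rho_{E_{j}|i}$ must be a position eigenstate, and then read off the orthogonality conditions as exactly the SBS distinguishability requirements in each frame. The one place you go beyond the paper is welcome rather than divergent: your lemma that distinct translates of a normalisable wavefunction are non-proportional supplies the justification for the step the paper merely asserts (that the sum over $q_{C}$ factorises into $T\left(q_{S}\right)T^{*}\left(q_{S}^{\prime}\right)$ only if a single term survives), and your explicit worry about coarser alternative SBS decompositions and degenerate $p_{i}$ is the right caveat, resolved as you indicate by the rank of $\rho_{S}^{\left(E_{1}\right)}$ exceeding the number of branches whenever some $\rho_{E_{1}|i}$ is delocalised.
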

The proof is given in Appendix \ref{app:proof_theorem}: it proceeds by considering the general transformed state of the system-environment in frame $E_1$ (without loss of generality) and imposes that the system spectrum remains $\{p_i\}_i$ (which enforces the environment states $\rho_{E_{j}|i}$  conditioned on $i$ be pure) and that SBS is preserved (which gives the distinguishability conditions).

Thus, not only are the conditional environment states pure, they  must also non-degenerate separations [Eq.~(\ref{eq:general_discrete_environment_nondegen})]. This can be easily achieved by introducing randomness to the precise $\left\{ x_{E_{j}|i}\right\} $ terms.  An example of perfect objective states is given in Sec.~\ref{subsec:GHZ-like-objective-states}.

However, the orthogonality conditions Eq.~(\ref{eq:general_discrete_system_shifts}) for the system states $\left\{ \ket{\tilde{\psi}_{i,j}^{S}}\right\} _{i}$ in frame $E_{j}$ are much more nontrivial. It is possible that the shifts in the wavefunction from $\psi\left(x_{S}|i\right)\rightarrow\psi\left(x_{S}+x_{E_{j}|i}|i\right)$
can cause overlaps in the conditional system states in the new frame.
We depict this in Fig. \ref{fig:system_states_distinguishability}.

\begin{figure}
\begin{centering}
\includegraphics[width=0.9\columnwidth]{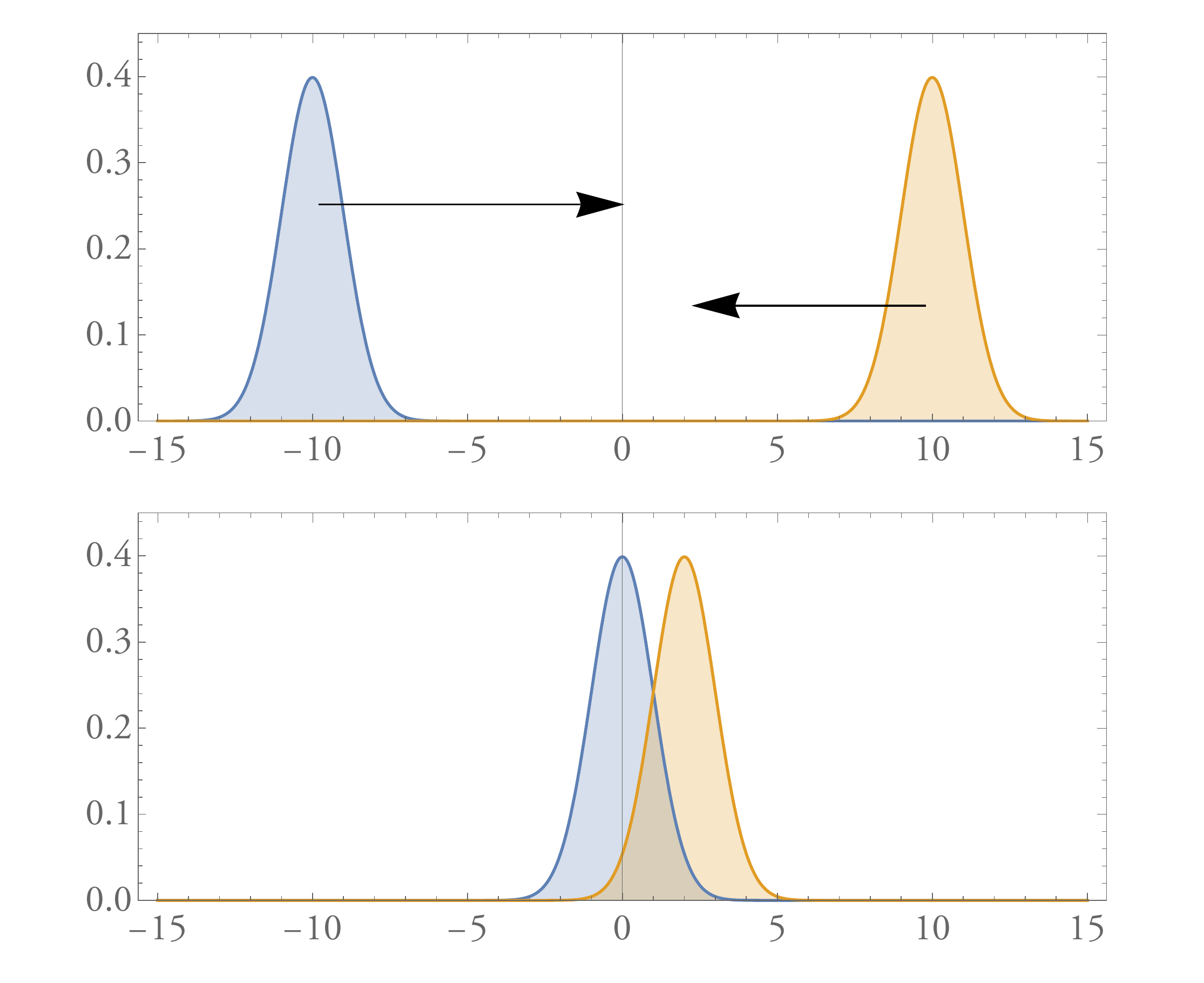}
\par\end{centering}
\caption{(Color on-line) Top: Curves representing the system conditional states in the original lab frame: they are separated and hence distinguishable. Bottom: Both curves are shifted by a \emph{different amount} as we move to an environment frame, yet the curves overlap and are no longer distinguishable. \label{fig:system_states_distinguishability}}
\end{figure}

Suppose we start of with a general continuous SBS state instead of a discrete one:
\begin{align}
\rho_{SE_{1}\cdots E_{N}}^{\left(C\right)} & =\int di\cdot p_{i}\ket{\psi_{i}^{S}}\bra{\psi_{i}^{S}}_{S}\otimes\bigotimes_{j=1}^{N}\rho_{E_{j}|i},\label{eq:continuous_state}\\
\ket{\psi_{i}^{S}} & =\int dx_{S}\cdot\psi\left(x_{S}\Big|i\right)\ket{x_{S}}_{S},\\
\rho_{E_{j}|i} & =\int dx_{E_{j}}dx_{E_{j}}^{\prime}\cdot t\left(x_{E_{j}},x_{E_{j}}^{\prime}\Big|i,j\right)\ket{x_{E_{j}}}\bra{x_{E_{j}}^{\prime}}_{E_{j}},
\end{align}
satisfying distinguishability conditions:
\begin{align}
\braket{\psi_{i}^{S}|\psi_{i^{\prime}}^{S}} & =0,\qquad\forall\,i\neq i^{\prime},\\
\rho_{E_{j}|i}\rho_{E_{j}|i^{\prime}} & =0,\qquad\forall\,i\neq i^{\prime},\forall\,j.
\end{align}

\begin{cor}
A continuous SBS state is perfectly objective, with the same (possibly continuous) objective information $\left\{ p_{i}\right\} _{i}$, in all lab and environment reference frames if and only if it satisfies the \textbf{same }state structure as given in Theorem \ref{thm:perfect_objectivity_discrete} (up to a continuous $i$), that is, with form
\begin{align}
\rho_{SE_{1}\cdots E_{N}}^{\left(C\right)} & =\int di\cdot p_{i}\ket{\psi_{i}^{S}}\bra{\psi_{i}^{S}}_{S}\otimes\bigotimes_{j=1}^{N}\ket{x_{E_{j}|i}}\bra{x_{E_{j}|i}}_{E_{j}},
\end{align}
and all other conditions given in Theorem \ref{thm:perfect_objectivity_discrete}.
\end{cor}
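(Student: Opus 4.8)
The plan is to reduce the corollary to Theorem~\ref{thm:perfect_objectivity_discrete} by arguing that nothing in the proof of that theorem actually used the countability of the index set $i$, so that the same reasoning carries over verbatim when $\sum_i$ is replaced by $\int di$. Concretely, I would first recall the structure of the transformed state $\rho_{SCE_2\cdots E_N}^{(E_1)}$ obtained by applying Eq.~(\ref{eq:transform}) to the continuous state Eq.~(\ref{eq:continuous_state}); this is the obvious continuous analogue of the discrete computation, with the environment-$E_1$ matrix elements $t(x_{E_1},x_{E_1}'|i,1)$ appearing as weights that get ``spread out'' over the new laboratory register $C$ via the substitution $x_{E_1}\mapsto -q_C$, $x_{E_1}'\mapsto -q_C'$. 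The key observation is that the reduced system spectrum in the $(E_1)$ frame, before any tracing, involves a continuum of conditional states indexed by $(i,x_{E_1})$ (or its coherent analogue), and demanding that the objective information remain exactly $\{p_i\}_i$ forces, just as in the discrete case, that the $E_1$-conditional states be pure — i.e.\ $t(x_{E_1},x_{E_1}'|i,1) = \overline{r(x_{E_1}|i)}\,r(x_{E_1}'|i)$ cannot contribute a genuine mixture, which collapses $\rho_{E_1|i}$ to a rank-one projector $\ket{x_{E_1|i}}\bra{x_{E_1|i}}$. Running the same argument in every environment frame $E_j$ gives purity of all $\rho_{E_j|i}$, hence the claimed reduced form.

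Next I would extract the distinguishability conditions. Once all conditional environment states are pure points $\ket{x_{E_j|i}}$, the transformed state in frame $E_1$ is a (continuous) mixture over $i$ of the shifted system state $\ket{\tilde\psi_{i,1}^S}$ tensored with $\ket{-x_{E_1|i}}_C$ and the shifted environment points $\ket{x_{E_j|i}-x_{E_1|i}}_{E_j}$. For this to be SBS with spectrum $\{p_i\}_i$ one needs exactly: mutual orthogonality of the $\ket{\tilde\psi_{i,1}^S}$ for $i\neq i'$ — i.e.\ Eq.~(\ref{eq:general_discrete_system_shifts}) with $j=1$ — and mutual orthogonality (non-degeneracy) of the shifted environment positions $x_{E_j|i}-x_{E_1|i}$ — i.e.\ Eq.~(\ref{eq:general_discrete_environment_nondegen}) with $k=1$. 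Repeating over all choices of reference environment $E_k$ yields the full set of conditions for all $j\neq k$. The converse direction is immediate: if the state has the stated reduced form and all listed orthogonality conditions hold, then Eq.~(\ref{eq:transform}) maps it to a manifest SBS state in every frame with spectrum $\{p_i\}_i$.

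The step that needs genuine care — and which I expect to be the main obstacle — is the measure-theoretic bookkeeping of ``perfect objectivity'' and ``same objective information'' when $i$ is a continuous label. In the discrete case, ``spectrum $\{p_i\}_i$'' is unambiguous; in the continuous case one must say precisely what it means for the transformed state to carry ``the same objective information $\{p_i\}_i$'' rather than a genuinely finer continuum of information (as happens in the blurred examples of Sec.~\ref{sec:Continuous-spread-and-blurred-objectivity}, where the extra $q_C$ continuum destroys objectivity). The cleanest route is to define objectivity in the continuous setting exactly as in Appendix~\ref{app:Objectivity-with-continuous} (via injectivity of the relevant position maps and perfect distinguishability of the conditional fibres), and then to note that any would-be mixture in a conditional environment state, or any degeneracy in relative positions, manifests in the transformed state as an \emph{additional} continuous degree of freedom over which one cannot distinguish — formally, the conditional states at $q_C$ and $q_C+\delta$ have fidelity $1$ in the limit $\delta\to 0$, so their overlap integral is strictly positive on any set of positive measure, contradicting perfect distinguishability. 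With that lemma in hand, the ``only if'' direction goes through, and the rest is a routine transcription of Appendix~\ref{app:proof_theorem} with sums replaced by integrals, Kronecker deltas by Dirac deltas, and countable orthogonality by orthogonality almost everywhere in $i$.
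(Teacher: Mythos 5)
Your proposal is correct and follows essentially the same route as the paper, whose own proof of this corollary is the one-line instruction to take the continuous limit of the sums and rerun the argument of Theorem~\ref{thm:perfect_objectivity_discrete}, with perfect objectivity again collapsing the conditional environment states to localised pure position states. Your additional attention to the measure-theoretic meaning of ``same objective information'' for continuous $i$ (via the indistinguishability of the $q_C$ versus $q_C+\delta$ fibres) is a more careful elaboration of a point the paper glosses over, not a different method.
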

\begin{proof}
Take the continuous limit on the sums on the system and environment states from the discrete SBS state, $\sum_{x_{S}}\rightarrow\int dx_{S}$, $\sum_{x_{E_{j}}}\rightarrow\int dx_{E_{j}}$ and follow the same proof: perfect objectivity collapses those sums to discrete states and all other conditions follow.
\end{proof}
\begin{table}
\caption{Summary of the minimal specialised SBS state structure required for perfect objectivity in other quantum reference frames (QRFs), aside from detailed distinguishability conditions.\label{tab:Summary-of-the-environment-structure}}
\vspace{0.5em}
\centering{}%
\begin{tabular}{|>{\centering}p{4cm}|>{\centering}p{4cm}|}
\hline
Objectivity type & State structure requirement \tabularnewline
\hline 
Objective in all QRFs, with the same classical information $\left\{ p_{i}\right\} $  & All environment conditional states are pure in $x$ basis and localised (Thm. \ref{thm:perfect_objectivity_discrete},  Sec. \ref{subsec:GHZ-like-objective-states})\tabularnewline
\hline 
Objective in all QRFs, but with different objective information & All environment conditional states are incoherent and mixed in $x$ basis (Prop. \ref{prop:A-discrete-SBS_with_new_information}, Sec. \ref{subsec:Perfectly-localised-SBS}, Sec. \ref{subsec:Incoherent-unmixed-SBS})\tabularnewline
\hline 
A \textbf{reduced} state is objective in all QRFs, with different objective information & Environment conditional states can be coherent (Cor. \ref{corr:reduced_state_objective}, Sec. \ref{subsec:Coherent-SBS})\tabularnewline
\hline 
\end{tabular}
\end{table}

If we relax the requirement that same objective information appears, then we can relax the conditionally pure environment states to incoherent environment states in the $x$ basis:
\begin{prop}
\label{prop:A-discrete-SBS_with_new_information}A discrete SBS state $\rho_{SE_{1}E_{2}\cdots E_{N}}^{\left(C\right)}$ of the following form can be perfectly objective in all frames ($C$, $E_{j}$), albeit with \emph{different} objective information:
\begin{align}
\rho_{SE_{1}\cdots E_{N}}^{\left(C\right)} & =\sum_{i}p_{i}\ket{\psi_{S|i}}\bra{\psi_{S|i}}_{S}\nonumber \\
 & \quad\otimes\bigotimes_{j=1}^{N}\sum_{x_{E_{j}}}t\left(x_{E_{j}}\Big|i,j\right)\ket{x_{E_{j}}}\bra{x_{E_{j}}}_{E_{j}},
\end{align}
provided it satisfies the perfect distinguishability conditions in the original lab frame:
\begin{align}
\braket{\psi_{S|i}|\psi_{S|i^{\prime}}} & =0,\quad\forall i\neq i^{\prime}\\
\rho_{E_{j}|i}\rho_{E_{j}|i^{\prime}} & =0,\quad\forall i\neq i^{\prime}
\end{align}
 and all environment frames:
\begin{align}
\braket{\tilde{\psi}_{(i,q_{C|i})}^{\left(E_{j}\right)}|\tilde{\psi}_{(i^{\prime},q_{C|i^{\prime}}^{\prime})}^{\left(E_{j}\right)}} & =0,\quad\forall (i,q_{C|i}) \neq (i^{\prime},q_{C|i^{\prime}}^{\prime}),\\
\rho_{E_{k}|(i,q_{C|i})}^{\left(E_{j}\right)}\rho_{E_{k}|(i^{\prime},q_{C|i^{\prime}}^{\prime})}^{\left(E_{j}\right)} & =0,\quad\forall(i,q_{C|i})\neq(i^{\prime},q_{C|i^{\prime}}^{\prime}),\forall k\neq j,
\end{align}
where 
\begin{align}
\ket{\tilde{\psi}_{\left(i,q_{C|i}\right)}^{\left(E_{j}\right)}} & =\sum_{q_{S}}\psi\left(q_{S}-q_{C|i}\Big|i\right)\ket{q_{S}}_{S}\\
\rho_{E_{k}|\left(i,q_{C|i}\right)}^{\left(E_{j}\right)} & =\sum_{q_{E_{j}}}t\left(q_{E_{j}}-q_{C|i}\Big|i,j\right)\ket{q_{E_{j}}}\bra{q_{E_{j}}}_{E_{j}}.
\end{align}
Note that the values $q_{C|i}=q_{C|i}^{\left(E_{1}\right)}$ can take depends on the index $i$ and the original states on $E_{1}$.
\end{prop}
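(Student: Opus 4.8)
The plan is to follow the strategy of the proof of Theorem~\ref{thm:perfect_objectivity_discrete} (Appendix~\ref{app:proof_theorem}): push the lab-frame state through the transformation Eq.~(\ref{eq:transform}) into an environment frame and read off what is needed for the image to again be an SBS state. Since Eq.~(\ref{eq:transform}) treats every $E_j$ symmetrically, it suffices to carry out the $E_1$ frame and then relabel. The new feature, compared with Theorem~\ref{thm:perfect_objectivity_discrete}, is that the conditional environment states $\rho_{E_j|i}=\sum_{x_{E_j}}t(x_{E_j}|i,j)\ket{x_{E_j}}\bra{x_{E_j}}_{E_j}$ are now only required to be \emph{diagonal} in the position basis, rather than rank one.

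First I would substitute the stated form of $\rho_{SE_1\cdots E_N}^{(C)}$ into Eq.~(\ref{eq:transform}). Because $\rho_{E_1|i}$ is diagonal, the matrix element in the old $E_1$ coordinate carries a $\delta$-factor which, after the coordinate change, forces $q_C=q_C'$; hence, in contrast with the coherent case of Sec.~\ref{subsec:Coherent-SBS}, no entanglement with the new laboratory subsystem is generated, and $C$ becomes a classical register carrying $\sum_{q_C}t(-q_C|i,1)\ket{q_C}\bra{q_C}_C$. Moreover the system conditional state stays \emph{pure}, merely displaced: $\ket{\psi_{S|i}}\mapsto\ket{\tilde{\psi}_{(i,q_{C|i})}^{(E_1)}}$, and each $\rho_{E_k|i}$ with $k\geq2$ is displaced by the same amount, becoming $\rho_{E_k|(i,q_{C|i})}^{(E_1)}$, exactly as written in the Proposition. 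The outcome is that in the $E_1$ frame the state is a classical mixture over the refined label $(i,q_{C|i})$ with weights $\{p_i\,t(-q_{C|i}|i,1)\}$ (a continuous distribution when $\rho_{E_1|i}$ is a continuous mixture), whose $i$-marginal returns the original $\{p_i\}_i$; this refined distribution is the new objective information.

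Next I would verify that this transformed state has SBS form (now with the label $(i,q_{C|i})$ possibly running over a continuum) under the stated hypotheses. The register $C$ turns out to be automatically a perfect pointer: the lab-frame condition $\rho_{E_1|i}\rho_{E_1|i'}=0$ together with diagonality forces the supports of $t(\cdot|i,1)$ and $t(\cdot|i',1)$ to be disjoint, so distinct refined labels carry orthogonal position eigenstates on $C$, which is why the Proposition imposes no extra condition on $C$. What is left to check is precisely the two stated environment-frame requirements: orthogonality of the shifted conditional system states $\ket{\tilde{\psi}_{(i,q_{C|i})}^{(E_1)}}$, and perfect distinguishability of each remaining conditional environment state $\rho_{E_k|(i,q_{C|i})}^{(E_1)}$ with $k\geq2$. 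Granting these, $\rho_{SCE_2\cdots E_N}^{(E_1)}$ has SBS form with pointer basis $\{\ket{\tilde{\psi}_{(i,q_{C|i})}^{(E_1)}}\}$ and spectrum $\{p_i\,t(-q_{C|i}|i,1)\}$; relabelling $E_1\to E_j$ yields the same conclusion in every environment frame, while objectivity in the lab frame is the starting assumption, guaranteed by the two lab-frame conditions.

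The step I expect to be the main obstacle is the orthogonality of the \emph{shifted} system states $\ket{\tilde{\psi}_{(i,q_{C|i})}^{(E_1)}}$. Unlike Theorem~\ref{thm:perfect_objectivity_discrete}, where the conditional system states are localised and a displacement can spoil orthogonality only by making two point supports exactly coincide, here $\psi(\cdot|i)$ has genuine spread and the $i$-dependent shift $q_{C|i}$ can slide two originally orthogonal profiles into overlap (cf.\ Fig.~\ref{fig:system_states_distinguishability}); this therefore has to be carried as a hypothesis rather than deduced from the lab-frame data. The remaining care is purely technical: handling the continuum over $q_C$ so that the decomposition over the refined labels $(i,q_{C|i})$ is a legitimate (continuous) SBS decomposition, and checking that the disjoint-support argument for $C$ survives that limit.
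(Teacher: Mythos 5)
Your proposal is correct and follows essentially the same route as the paper's proof in Appendix~\ref{app:proof_proposition}: transform to the $E_1$ frame, use diagonality of $\rho_{E_1|i}$ to force $q_C=q_C^{\prime}$ so that $C$ becomes an automatically distinguishable classical register, identify the refined objective information $\left\{ p_i t(-q_{C|i}|i,j=1)\right\}_{(i,q_{C|i})}$, and impose the stated distinguishability conditions on the shifted system and environment states. The only difference is one of direction\textemdash you verify sufficiency starting from the assumed diagonal form, whereas the paper derives that diagonality as a necessary consequence of demanding the chosen refined spectrum and SBS structure\textemdash but the underlying computation is the same.
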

The proof is given in Appendix \ref{app:proof_proposition}. 
Note that this proposition is not an \emph{if-and-only-if}: we have chosen that the objective information in frame $E_{1}$, for example, is $\left\{ p_{i}t(-q_{C|i},-q_{C|i}\Big|i,j=1)\right\} _{(i,q_{C|i})}$, leading to the conditions in the proposition. An example of Proposition \ref{prop:A-discrete-SBS_with_new_information} is depicted in Fig. \ref{fig:prop1_successful_example}.

In the continuous case, this proposition will hold only up to some error, e.g. $\braket{\tilde{\psi}_{\left(i,q_{C|i}\right)}^{\left(E_{j}\right)}|\tilde{\psi}_{\left(i^{\prime},q_{C|i^{\prime}}^{\prime}\right)}^{\left(E_{j}\right)}}=\delta>0$. With continuous environments states\textemdash even if they are incoherent\textemdash will result in a reduced distinguishability as given in Fig.~\ref{fig:smooth_continuum_of_peaks}.

\begin{figure*}
\includegraphics[width=0.85\textwidth]{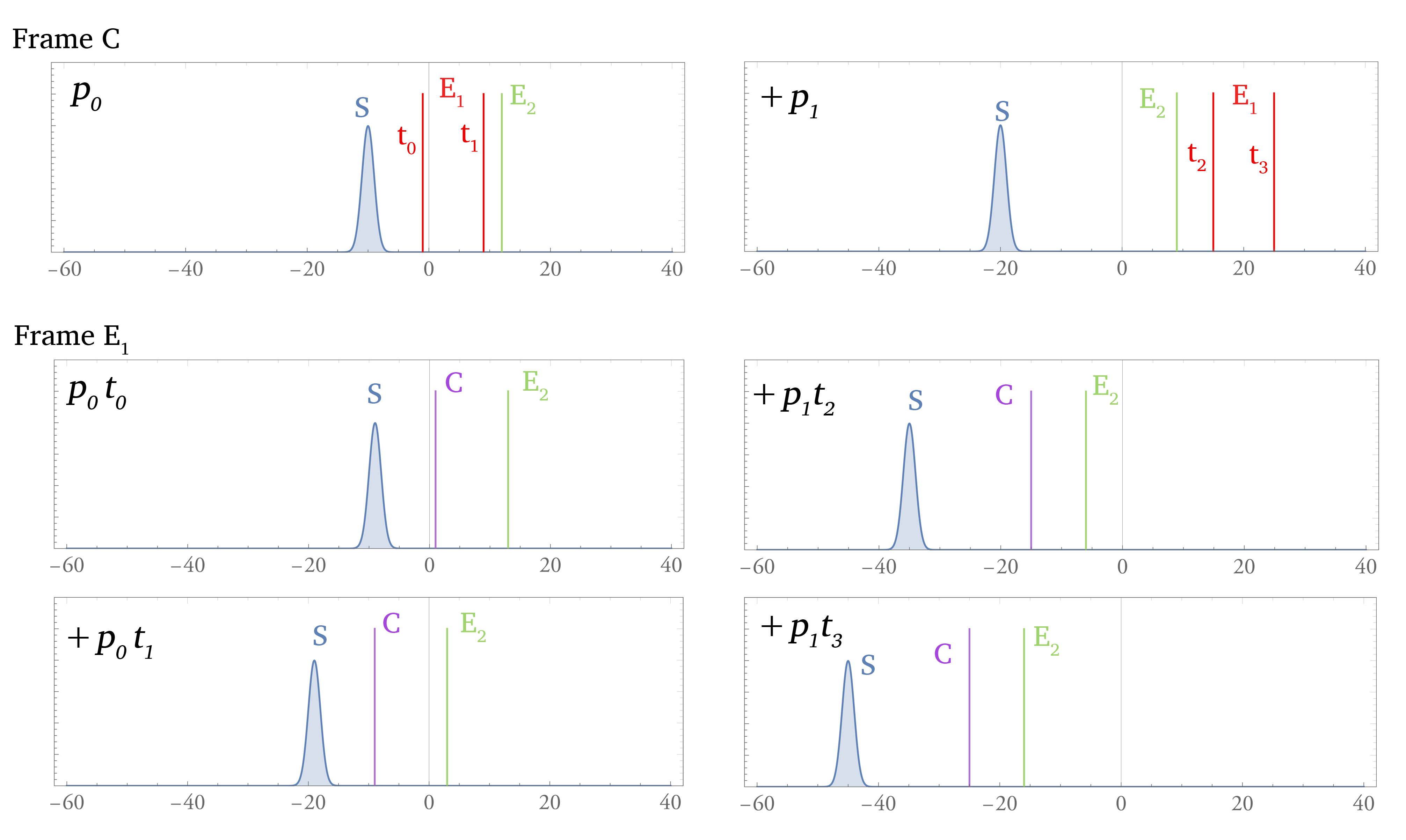}

\caption{(Color on-line) Example of an objective SBS state described by Proposition~\ref{prop:A-discrete-SBS_with_new_information}. Top: depiction of the state in frame $C$. The original objective information is $\{p_0,p_1\}$. The environment $E_{1}$ has mixed, incoherent states conditioned on $i=0$ ($p_{0}$, top left), and $i=1$ ($p_{1}$, top right). Bottom:  When moving into the quantum reference frame of $E_{1}$, the system states have been shifted such that they still remain distinguishable. The new objective information is $\{p_0 t_0, p_0 t_1, p_1 t_2, p_1 t_3\}$.
\label{fig:prop1_successful_example}}
\end{figure*}

\begin{cor}
\label{corr:reduced_state_objective}Consider the general discrete SBS state $\rho_{SE_{1}E_{2}\cdots E_{N}}^{\left(C\right)}$ from Eq.~(\ref{eq:general_discrete_SBS}). If we allow for \emph{partial trace} in other frames, the reduced state in those frames may be objective, provided the reduced state satisfies distinguishability conditions.
\end{cor}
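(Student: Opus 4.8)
The plan is to run the same reduction that already appeared in Sec.~\ref{subsec:Coherent-SBS}, but now carried out for the fully general discrete SBS state of Eq.~(\ref{eq:general_discrete_SBS}). Transform into the frame of a chosen environment, say $E_{1}$ without loss of generality, using Eq.~(\ref{eq:transform}); identify precisely where the entanglement created by the transformation lives; show that a single partial trace over the transformed laboratory subsystem $C$ removes it, leaving a classical--quantum state in the system; and then read off that this reduced state is SBS (hence objective in the sense of Definition~\ref{def:SBS}) exactly when its conditional states are mutually orthogonal.

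First I would diagonalise each conditional environment state, $\rho_{E_{1}|i}=\sum_{k_{1}}t_{k_{1},i}\ket{\varphi_{i,k_{1}}^{E_{1}}}\bra{\varphi_{i,k_{1}}^{E_{1}}}$ with $\ket{\varphi_{i,k_{1}}^{E_{1}}}=\sum_{a}r_{a,i,k_{1}}\ket{x_{a,k_{1}|i}^{E_{1}}}$, and apply the transformation. As in Sec.~\ref{subsec:Perfectly-localised-SBS}, the position labels $-x^{E_{1}}$, $-x^{E_{1}\prime}$ that used to sit on $E_{1}$ are now carried by $C$ as $\ket{-x_{a,k_{1}|i}^{E_{1}}}\bra{-x_{a^{\prime},k_{1}|i}^{E_{1}}}_{C}$, while the system and the remaining environments are shifted by the same amount; the off-diagonal terms $a\neq a^{\prime}$ are exactly the entanglement between $C$ and $SE_{2}\cdots E_{N}$.

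Next I would take the partial trace over $C$. Since $C$ enters only through $\ket{q_{C}}\bra{q_{C}^{\prime}}$, the trace enforces $q_{C}=q_{C}^{\prime}$, killing every $a\neq a^{\prime}$ contribution and decohering the $C$-entanglement into classical correlations. The surviving state is classical--quantum with respect to the system,
\begin{align}
\rho_{SE_{2}\cdots E_{N}}^{\left(E_{1}\right)} & =\sum_{i,k_{1},a}p_{i}t_{k_{1},i}\left|r_{a,i,k_{1}}\right|^{2}\ket{\tilde{\psi}_{i,k_{1},a}^{S}}\bra{\tilde{\psi}_{i,k_{1},a}^{S}}_{S}\nonumber \\
 & \qquad\otimes\bigotimes_{j=2}^{N}\tilde{\rho}_{E_{j}|i,k_{1},a},
\end{align}
with new weights $P_{i,k_{1},a}=p_{i}t_{k_{1},i}\left|r_{a,i,k_{1}}\right|^{2}$ whose marginal over $(k_{1},a)$ returns $\{p_{i}\}_{i}$. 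If the reduced conditional states obey $\braket{\tilde{\psi}_{i,k_{1},a}^{S}|\tilde{\psi}_{i^{\prime},k_{1}^{\prime},a^{\prime}}^{S}}=0$ and $\tilde{\rho}_{E_{j}|i,k_{1},a}\tilde{\rho}_{E_{j}|i^{\prime},k_{1}^{\prime},a^{\prime}}=0$ for all distinct combined labels and all $j\geq2$, this matches Definition~\ref{def:SBS} and the reduced state is objective, with objective information $\{P_{i,k_{1},a}\}$; these are precisely the ``distinguishability conditions'' referred to in the statement, and whether they hold depends on the particular positions and spreads involved (hence ``may be'').

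I expect the main obstacle to be verifying that the single partial trace over $C$ genuinely removes \emph{all} the system--environment entanglement, i.e.\ that every cross term surviving the trace is diagonal in the combined label $(i,k_{1},a)$. This relies on the fact that in the original SBS state the internal decomposition indices of $\rho_{E_{1}|i}$ are tied to one system branch $i$, so coherences between distinct $i$ never arise; making this bookkeeping airtight for a general coherent $\rho_{E_{j}|i}$ on the other environments is the delicate point. Two caveats are also worth recording: strong independence (Definition~\ref{defn:strongIndependence}) among $E_{2},\ldots,E_{N}$ need not survive the transformation, so ``objective'' is meant in the more general sense of Ref.~\citep{Le2019} unless the $\tilde{\rho}_{E_{j}|i,k_{1},a}$ remain product; and in the continuous analogue the orthogonality conditions hold only up to a fidelity error, as in the discussion following Proposition~\ref{prop:A-discrete-SBS_with_new_information}.
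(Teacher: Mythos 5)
Your proposal is correct and follows essentially the same route as the paper's proof: transform to the $E_{1}$ frame, trace out the transformed laboratory subsystem $C$ (which enforces $q_{C}=q_{C}^{\prime}$ and decoheres the induced entanglement), obtain a classical--quantum state on $S E_{2}\cdots E_{N}$, and observe that it has SBS exactly when the new conditional states satisfy the orthogonality/distinguishability conditions. The only difference is bookkeeping: you label the new branches by the eigendecomposition indices $(i,k_{1},a)$ of $\rho_{E_{1}|i}$, whereas the paper labels them directly by the position $q_{C|i}$ via the diagonal kernel $t\left(-q_{C|i},-q_{C|i}|i,j=1\right)$; the two are equivalent up to relabelling when the relevant positions are distinct.
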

That is, we have no particular state-structure restrictions from the general SBS state [Eq.~(\ref{eq:general_discrete_SBS})] (unlike in Theorem~\ref{thm:perfect_objectivity_discrete} and Proposition~\ref{prop:A-discrete-SBS_with_new_information}).

If we move from frame $C$ to frame $E_{k}$, then trace out the subsystem $C$, $\Tr_{C}\left[\rho_{SCE_{1}\ldots E_{N}}^{\left(E_{k}\right)}\right]$, then the reduced state could be objective (provided it satisfies the nontrivial distinguishability conditions). We did not need to restrict the conditional environment states to be localised or incoherent\textemdash sometimes, simply tracing out a subsystem can give an SBS state. For example, the GHZ state $(\ket{0000} + \ket{1111})/\sqrt{2}$ is entangled and not SBS, but tracing out a single subsystem and we are left with an SBS state, $(\ket{000}\bra{000} + \ket{111}\bra{111})/2$.

This corollary implies that the conditional environment states \emph{can} have coherences, and also shows how important, intricate, and nontrivial the distinguishability conditions are to the objectivity of a state, and emphasises our focus on the indistinguishability of conditional states in other parts of this paper.
\begin{proof}
From the general discrete SBS state in the frame of $E_{1}$, Eq.~(\ref{eq:general_system_state_E1_frame}), we trace out the $C$-subsystem:
\begin{align}
\rho_{SE_{2}\cdots E_{N}}^{\left(E_{1}\right)} & =\sum_{i,q_{C|i}}p_{i}t\left(-q_{C|i},-q_{C|i}\Big|i,j=1\right)\nonumber \\
 & \quad\times\ket{\tilde{\psi}^S_{\left(i,q_{C|i}\right)}}\bra{\tilde{\psi}^S_{\left(i,q_{C|i}\right)}}_{S}\otimes\bigotimes_{j=2}^{N}\rho_{E_{j}|i,q_{C|i}}^{\left(E_{1}\right)}\\
\ket{\tilde{\psi}^S_{\left(i,q_{C|i}\right)}} & =\sum_{q_{S}}\psi\left(q_{S}-q_{C|i}\Big|i\right)\ket{q_{S}}_{S}\\
\rho_{E_{j}|i,q_{C|i}}^{\left(E_{1}\right)} & =\sum_{q_{E_{j}},q_{E_{j}}^{\prime}}t\left(q_{E_{j}}-q_{C|i},q_{E_{j}}^{\prime}-q_{C|i}\Big|i,j\right) \nonumber\\
& \qquad\qquad \qquad\times\ket{q_{E_{j}}}\bra{q_{E_{j}}^{\prime}}_{E_{j}}.
\end{align}
By choosing $\left\{ p_{i}t\left(-q_{C|i},-q_{C|i}\Big|i,j=1\right)\right\} _{\left(i,q_{C|i}\right)}$
as the new objective information, this reduced state will have SBS provided it satisfies distinguishability conditions:
\begin{align}
\braket{\tilde{\psi}^S_{\left(i,q_{C|i}\right)}|\tilde{\psi}^S_{\left(i^{\prime},q_{C|i}^{\prime}\right)}} & =0,\quad\forall\left(i,q_{C|i}\right)\neq\left(i^{\prime},q_{C|i}^{\prime}\right),\label{eq:trace_out_system_orthogonality_condition}\\
\rho_{E_{j}|i,q_{C|i}}^{\left(E_{k}\right)}\rho_{E_{j}|i^{\prime},q_{C|i}^{\prime}}^{\left(E_{k}\right)} & =0,\quad\forall\left(i,q_{C|i}\right)\neq\left(i^{\prime},q_{C|i}^{\prime}\right),\,\forall j\neq k.
\end{align}
\end{proof}

Table~\ref{tab:Summary-of-the-environment-structure} summaries the results in this section.

\section{Objectivity in a dynamic system and two environments \label{sec:Numerical-simulations}}

In the prior sections, we focused primarily on calculating the distinguishability of conditional system and environment states. This distinguishability forms a lower bound to an ideal objective state; however, it is missing a quantification of the non-objective correlations between the system and environments. In this section, we consider a numerical model that allows us to fully explore the divergence from an ideal objective state with the SBS. 

We analyse the broadcast probabilities which show that the information is different in different reference frames. The investigation of mutual information between environments illustrates that strong independence of environments is also not conserved between reference frames. We consider a couple of cases as illustration of phenomena occurring when changing between different reference frames in an information broadcasting scenario. We performed a series of numerical experiments. Note that a computer's memory cannot store an infinite number of data needed to fully describe quantum systems in a continuum of space coordinates.

To provide an conceptual image of the dynamical scenario, we consider a toy-model where the coordinate system is discrete and organized as a ring of size $D$ with all coordinates from the finite set $\{0, \cdots, D - 1\}$ with the metric of the finite field $Z_D$. This coordinate simplification is similar in spirit to the lattice Ising model.

The process of information propagation is governed by relevant Hamiltonians describing the time evolution of interacting subsystems. Here we consider a simple scenario with a central system $S$, interacting with two environments $E_1$ and $E_2$ observed from the point of view of a non-interacting laboratory frame $C$. The reference frame transformation shifts from the point of view of $C$ to the point of view of $E_1$.

The general interaction $H$ between subsystems $S$ and $\{E_i\}_{i=1}^{N}$ can be decomposed into several terms:
\begin{equation}
	\begin{aligned}
		H &= \underbrace{\sum_{i=1}^N \sum_{s=0}^{D-1} \ket{s} \bra{s}_S \otimes H_{E_i}^{(s)}}_\text{central interaction} + \underbrace{\sum_{i=1}^N H_{E_i}}_\text{self-evolution} \\
		& + \underbrace{\sum_{i \neq j = 1}^N H_{E_i, E_j}}_\text{environment interaction} + \underbrace{H_{S, E_1, \cdots E_N}}_\text{global interaction},\label{eq:general_interaction_H}
	\end{aligned}
\end{equation}
where sub-indices enumerates subsystems on which the given part of the total Hamiltonian acts. We note that the form of the central interaction part ensures that the evolution of each of the environments depends on the state of the central system and thus is responsible for imprinting information about it.

In a typical measurement scenario one usually assumes that the evolution is dominated by the central interaction, and then the so-called generalized von Neumann measurement is performed~\cite{Mello2013,Turek2014}. It is reasonable to assume that this part is acting only for a limited period of time, as one expect the measurement to occur after a finite number of time units.

We define the time unit $t=1$ as the time over which the central interaction is active. We also define the energy scale as relative to the strength of the central interaction. We assume that the self-evolution and the interaction of environments is of two orders weaker and the global interaction (that is in most cases a sort of environmental noise) to be weaker of three orders than the central interaction. Since in this paper the Hilbert space is assumed to form a coordinate basis it is natural to pay a particular attention to environment interactions with strengths depending on the distance of subsystems.

To be more specific, the central interaction $H_{S, E_i}^{(s)}$ is defined in a way that after a unit of time the state $\ket{k}_{E_i}$ is transformed to $\ket{k \oplus_D s}_{E_i}$, where $\oplus_D$ is the addition modulo $D$. The environment interaction Hamiltonian $H_{E_i, E_j}$ is defined in a way that propagates jumps of states of a pair of interacting subsystems towards each other with rate of the jumps given by $\frac{0.01}{1 + r}$, where $r$ is the distance between subsystems, and $0.01$ is the coupling constant (two order of magnitudes less than the self-evolution and measurement interaction). A self-evolution of environments allows for jumps towards neighbouring states, leading to a slow spread of the localization.

Since global interaction is conceptualized as being caused by unintended jumps beyond control, the rate of each possible jump is regarded as a uniform random number between $0$ and the coupling constant equal to $0.001$ to model the assumption that this kind of force is of three orders weaker than the measurement interaction.

It has been observed~\cite{Zwolak2009} that the capacity of an environment to receive information about the central system depends on its purity: the higher is the entropy of the subsystem, the less additional information it can gain. In particular one expects that the completely mixed state is not able to perceive the observed entity.

In our investigation we consider various joint states of the central system with two environments. The joint state that maximizes the information flow, and thus is most interesting, is the state:
\begin{equation}
	\label{eq:rho_mpp}
	\rho_\text{mpp} \coloneqq \rho_{\text{mix} S} \otimes \ket{0}\bra{0}_{E_1} \otimes \ket{0}\bra{0}_{E_2},
\end{equation}
where $\rho_\text{mix} \coloneqq \frac{1}{D} \sum_{i=0}^{D-1} \ket{i}\bra{i}$ is the maximally mixed state on the $D$-dimensional ring. To see how mixedness of environments influences information flow we consider a system with slightly blurred environments:
\begin{equation}
	\label{eq:rho_mbb}
	\rho_\text{mbb} \coloneqq \rho_{\text{mix} S} \otimes \rho_{\text{blur} E_1} \otimes \rho_{\text{blur} E_2},
\end{equation}
where
\begin{equation}
	\rho_\text{blur} \coloneqq 0.8 \cdot \ket{0}\bra{0} + 0.1 \cdot \ket{1}\bra{1} + 0.1 \cdot \ket{D - 1}\bra{D - 1}.
\end{equation}
We consider also the cases when only one of the environments is mixed:
\begin{subequations}
	\begin{equation}
		\label{eq:rho_mmp}
		\rho_\text{mmp} \coloneqq \rho_{\text{mix} S} \otimes \rho_{\text{mix} E_1} \otimes \ket{0}\bra{0}_{E_2},
	\end{equation}
	\begin{equation}
		\label{eq:rho_mpm}
		\rho_\text{mpm} \coloneqq \rho_{\text{mix} S} \otimes \ket{0}\bra{0}_{E_1} \otimes \rho_{\text{mix} E_2}.
	\end{equation}
\end{subequations}
Another case that we find interesting to investigate is the situation when the environments are maximally entangled, as this case revealed new phenomena when changing frames in Ref.~\cite{Giacomini2019}. We consider the state:
\begin{equation}
	\label{eq:rho_mEE}
	\rho_\text{mEE} \coloneqq \rho_{\text{mix} S} \otimes \ket{\Phi}\bra{\Phi}_{E_1 E_2},
\end{equation}
where $\ket{\Phi}_{E_1 E_2} \coloneqq \frac{1}{\sqrt{D}} \sum_{i=0}^D \ket{i}_{E_1} \ket{i}_{E_2}$.

We summarize all cases we investigate in the dynamical scenario in the Tab.~\ref{tab:cases}.
\begin{table}[bpt!]
    \caption{Considered dynamical scenarios for a system interacting with two environments $E_1,E_2$. Interaction details for $H_{E_1}$, $H_{E_2}$, and  $H_{E_1, E_2}$ are in main text (following Eq.~\eqref{eq:general_interaction_H}).  The various initial states are given in the main text from Eqs.~\eqref{eq:rho_mpp} to \eqref{eq:rho_mEE}, where the labelling $\rho_{SE_1 E_2}$ denotes where that subsystem is mixed (m), pure (p), blurred/partially mixed (b), or entangled (E).
	\label{tab:cases}}
    \vspace{0.5em}
	\begin{tabular}{|c|c|c|c|c|}
		\hline 
		\begin{tabular}{@{}c@{}}Case \\ label\end{tabular} & self-evolution & \begin{tabular}{@{}c@{}}environment \\ interaction\end{tabular} & \begin{tabular}{@{}c@{}}global \\ interaction\end{tabular} & \begin{tabular}{@{}c@{}}initial \\ state\end{tabular} \\ 
		\hline 
		1.1 & - & - & - & $\rho_\text{mpp}$ \\ 
		\hline 
		1.2 & - & - & - & $\rho_\text{mbb}$ \\ 
		\hline 
		1.3 & - & - & - & $\rho_\text{mEE}$ \\ 
		\hline 
		1.4 & - & - & - & $\rho_\text{mmp}$ \\ 
		\hline 
		1.5 & - & - & - & $\rho_\text{mpm}$ \\ 
		\hline 
		2.1 & $H_{E_1} + H_{E_2}$ & - & - & $\rho_\text{mpp}$ \\ 
		\hline 
		2.2 & random & - & - & $\rho_\text{mpp}$ \\ 
		\hline 
		3.1 & - & $H_{E_1, E_2}$ & - & $\rho_\text{mpp}$ \\ 
		\hline 
		3.2 & - & random & - & $\rho_\text{mpp}$ \\ 
		\hline 
		4 & $H_{E_1} + H_{E_2}$ & $H_{E_1, E_2}$ & random & $\rho_\text{mpp}$ \\ 
		\hline 
	\end{tabular}
	
\end{table}

From the perspective of external observer $C$, the time-dependent tripartite state consists of the central object $S$, and two environments $E_1$ and $E_2$. From frame of the first environment, $E_1$, the relevant state consists of the central object, $S$, the external observer, $C$, and the second environment, $E_2$. We refer to these states as $\rho_{S E_1 E_2}^{(C)}$ and $\rho_{S C E_2}^{(E_1)}$, respectively.

The core part of the SBS is the spectrum of the probability distribution that is broadcast from system to environments. This spectrum is given by
\begin{subequations}
	\begin{equation}
		p_i^{(C)} \coloneqq \bra{i}_S \Tr_{E_1 E_2} \left( \rho_{S E_1 E_2}^{(C)} \right) \ket{i}_S,
	\end{equation}
	\begin{equation}
		p_i^{(E_1)} \coloneqq \bra{i}_S \Tr_{C E_2} \left( \rho_{S C E_2}^{(E_1)} \right) \ket{i}_S,
	\end{equation}
\end{subequations}
where $\Tr_{E_1 E_2}$ and $\Tr_{C E_2}$ denotes partial trace over subsystems $E_1$ and $E_2$, and $C$ and $E_2$, respectively.

The conditional states, c.f. Eq.~\eqref{eq:condStates}, are
\begin{subequations}
	\begin{equation}
		\rho_{E_1|i}^{(C)} \coloneqq \left( 1 / p_i^{(C)} \right) \cdot \bra{i}_S \Tr_{E_2} \left( \rho_{S E_1 E_2}^{(C)} \right) \ket{i}_S,
	\end{equation}
	\begin{equation}
		\rho_{E_2|i}^{(C)} \coloneqq \left( 1 / p_i^{(C)} \right) \cdot \bra{i}_S \Tr_{E_1} \left( \rho_{S E_1 E_2}^{(C)} \right) \ket{i}_S,
	\end{equation}
	\begin{equation}
		\rho_{C|i}^{(E_1)} \coloneqq \left( 1 / p_i^{(E_1)} \right) \cdot \bra{i}_S \Tr_{E_2} \left( \rho_{S C E_2}^{(E_1)} \right) \ket{i}_S,
	\end{equation}
	\begin{equation}
		\rho_{E_2|i}^{(E_1)} \coloneqq \left( 1 / p_i^{(E_1)} \right) \cdot \bra{i}_S \Tr_{C} \left( \rho_{S C E_2}^{(E_1)} \right) \ket{i}_S,
	\end{equation}
\end{subequations}
where $\Tr_{E_1}$, $\Tr_{E_2}$ and $\Tr_{C}$ denotes partial trace over relevant subsystems. For the conditional states $\{\rho_{\cdot|i}^{(\cdot)}\}_{i=0}^{D-1}$ we calculate their two averages, weighted, c.f. Eq.~\eqref{eq:errBound}:
\begin{equation}
	\sum_{i \neq j = 0}^{D - 1} \sqrt{p_i^{(\cdot)} p_{j}^{(\cdot)}}\left\Vert \sqrt{\rho_{\cdot|i}^{(\cdot)}}\sqrt{\rho_{\cdot|j}^{(\cdot)}}\right\Vert _{1},
\end{equation}
and unweighted:
\begin{equation}
	\frac{1}{D(D-1)} \sum_{i \neq j = 0}^{D - 1} \left\Vert \sqrt{\rho_{\cdot|i}^{(\cdot)}}\sqrt{\rho_{\cdot|j}^{(\cdot)}}\right\Vert _{1}.
\end{equation}
For the sake of clarity, cf. Eq.~\eqref{eq:B}, we denote the fidelity terms as:
\begin{subequations}
	\begin{equation}
		B_{E_1}^{(C)}(i,j) \coloneqq \left\Vert \sqrt{\rho_{E_1|i}^{(C)}}\sqrt{\rho_{E_1|j}^{(C)}}\right\Vert _{1},
	\end{equation}
	\begin{equation}
		B_{E_2}^{(C)}(i,j) \coloneqq \left\Vert \sqrt{\rho_{E_2|i}^{(C)}}\sqrt{\rho_{E_2|j}^{(C)}}\right\Vert _{1},
	\end{equation}
	\begin{equation}
		B_{C}^{(E_1)}(i,j) \coloneqq \left\Vert \sqrt{\rho_{C|i}^{(E_1)}}\sqrt{\rho_{C|j}^{(E_1)}}\right\Vert _{1},
	\end{equation}
	\begin{equation}
		B_{E_2}^{(E_1)}(i,j) \coloneqq \left\Vert \sqrt{\rho_{E_2|i}^{(E_1)}}\sqrt{\rho_{E_2|j}^{(E_1)}}\right\Vert _{1}.
	\end{equation}
\end{subequations}

In order to contrast strong versus weak independence between observing subsystems we also calculate the mean  conditional quantum mutual information (Definition~\ref{defn:strongIndependence}):
\begin{subequations}
	\begin{equation}
		I_\text{mean}^{(C)} \coloneqq \sum_{i=0}^{D-1} p_i^{(C)} \left[ H_2\left(\rho_{E_1|i}^{(C)}\right) + H_2\left(\rho_{E_2|i}^{(C)}\right) - H_2\left(\rho_{E_1 E_2|i}^{(C)}\right) \right],
	\end{equation}
	\begin{equation}
		I_\text{mean}^{(E_1)} \coloneqq \sum_{i=0}^{D-1} p_i^{(E_1)} \left[ H_2\left(\rho_{C|i}^{(E_1)}\right) + H_2\left(\rho_{E_2|i}^{(E_1)}\right) - H_2\left(\rho_{C E_2|i}^{(E_1)}\right) \right],
	\end{equation}
\end{subequations}
where $H_2(\cdot)$ is von Neumann entropy and the conditional states are:
\begin{subequations}
	\begin{equation}
		\rho_{E_1 E_2|i}^{(C)} \coloneqq \left( 1 / p_i^{(C)} \right) \cdot \bra{i}_S \rho_{S E_1 E_2}^{(C)} \ket{i}_S,
	\end{equation}
	\begin{equation}
		\rho_{C E_2|i}^{(E_1)} \coloneqq \left( 1 / p_i^{(E_1)} \right) \cdot \bra{i}_S \rho_{S C E_2}^{(E_1)} \ket{i}_S.
	\end{equation}
\end{subequations}
There are dynamical situations when the mean mutual information reaches some value and does not deviate significantly (up to some fluctuations) from it further in time. We refer to this value as the \textit{value of saturation}:
\begin{subequations}
	\begin{equation}
		I_\text{sat}^{(C)} \coloneqq \lim_{T \to \infty} \frac{1}{T} \int_0^T I_\text{mean}^{(C)}(t) dt,
	\end{equation}
	\begin{equation}
		I_\text{sat}^{(E_1)} \coloneqq \lim_{T \to \infty} \frac{1}{T} \int_0^T I_\text{mean}^{(E_1)}(t) dt,
	\end{equation}\label{eq:mutual_info_saturation_E1}
\end{subequations}
from the point of view of $C$ and $E_1$, respectively. The \textit{fluctuations of the mean mutual information} are defined as:
\begin{subequations}
	\begin{equation}
		\sigma_I^{(C)} \coloneqq \left[ \lim_{T \to \infty} \frac{1}{T} \int_0^T \left( I_\text{mean}^{(C)}(t) - I_\text{sat}^{(C)} \right)^2 dt \right]^{\frac{1}{2}},
	\end{equation}
	\begin{equation}
		\sigma_I^{(E_1)} \coloneqq \left[ \lim_{T \to \infty} \frac{1}{T} \int_0^T \left( I_\text{mean}^{(E_1)}(t) - I_\text{sat}^{(E_1)} \right)^2 dt \right]^{\frac{1}{2}},
	\end{equation}\label{eq:mutual_info_fluctuations}
\end{subequations}
for reference frames of $C$ and $E_1$, respectively. We also define the \textit{time of saturation} $t_\text{sat}^{(C)}$ from the perspective $C$ ($t_\text{sat}^{(E_1)}$ from the perspective $E_1$), as a time when the mean mutual information reaches the value $I_\text{sat}^{(C)} - \sigma_I^{(C)}$ ($I_\text{sat}^{(E_1)} - \sigma_I^{(E_1)}$) for the first time.

We performed a series of numerical simulations in order to investigate how well the SBS form is preserved in both the frames of the external observer, $C$, and the first environment, $E_1$. For the cases discussed below we take $D = 12$. This dimension has been chosen as a compromise between computational effort and modelling the dependence of behaviours of subsystems on their spatial separations. The long time averages are calculated over time points between $50000$ and $1000000$ with time step $50000$.

The reference frame transformation $\hat{S}_{\text{position}}^{\left(C\rightarrow E_1\right)}$ satisfying Eq.~\eqref{eq:transform} is, in this case, a permutation of indices of rows and columns of a tripartite $D^3$ dimensional density matrix. As such the spectrum of the initial density matrix remains invariant. For the case of dimension $D=12$ we have found that the unitary transformation over $D^3 = 1728$ dimensional joint tri-coordinate space has character (trace) $144$, contains $144$ irreducible subspaces of dimension $1$, and $792$ irreducible subspaces of dimension $2$. We have directly checked that $\chi \left[ \hat{S}_{\text{position}}^{\left(C\rightarrow E_1\right)} \right] = D^2$, where $\chi[\cdot]$ is the character (trace) of a transformation, for $D \leq 25$ and conclude that this is a general property of reference frame transformations as~$D \to \infty$.

Below we summarize how the properties relevant for perceived objectivity behave in our dynamical model. The change in the broadcast spectrum of probabilities (Sec.~\ref{subsec:probabilities}), the dynamics and overall volume of mutual information (Sec.~\ref{subsec:mutual_info}), the varying distinguishability of subsystems (Sec.~\ref{subsec:Distinguishability}), and the Holevo and quantum mutual information between central system and subenvironments (Sec.~\ref{subsec:Holevo_and_mutual_info}) in different frames provides a concrete illustration of the main premise of this paper: what is objective from one point of view may not be \textit{objective} from another point of view. All the cases referred to in the following section have been labelled in Tab.~\ref{tab:cases}. The full details are described in  Appendix~\ref{app:dynamic}.

\subsection{Probabilities\label{subsec:probabilities}}

\begin{figure}
	\centering
	\includegraphics[width=0.95\linewidth]{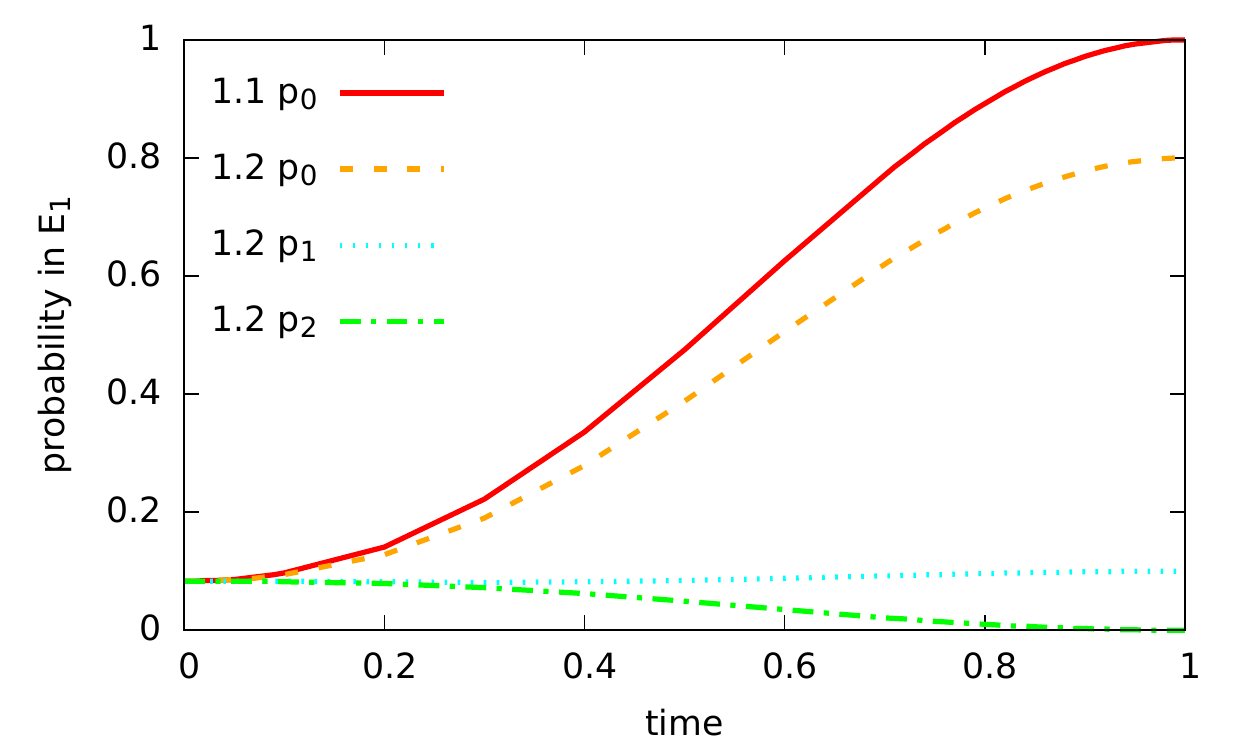}
	\caption{(Color on-line) Selected probabilities in the reference frame $E_1$ for Cases 1.1 and 1.2. Note that for Cases 1.3 and 1.4 the probability $p_0^{(E_1)} = \frac{1}{D} \approx 0.083$ and is constant in time. For the Case 1.5 $p_0^{(E_1)}$ is identical as in the Case 1.1. (Cases given in Tab.~\ref{tab:cases})}
	\label{fig:plotprobpie1t1}
\end{figure}

\begin{figure}
	\centering
	\includegraphics[width=0.95\linewidth]{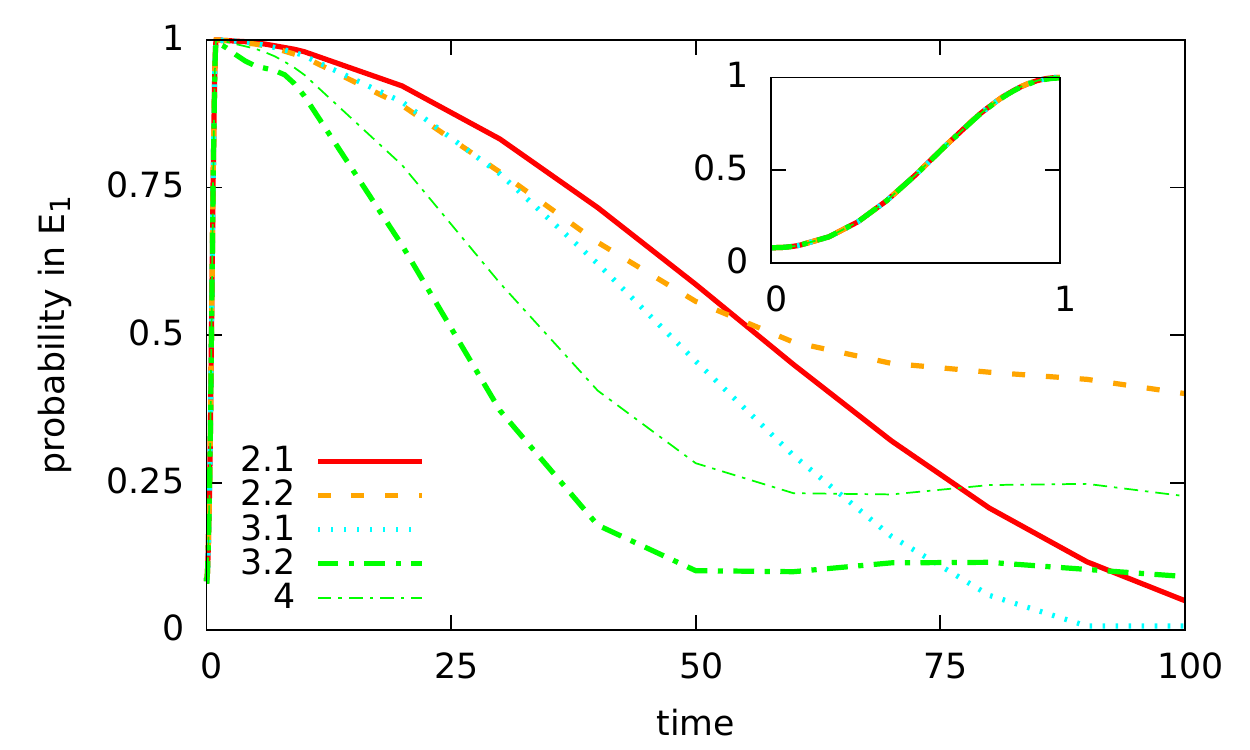}
	\caption{(Color on-line) Probabilities $p_0^{(E_1)}$ for cases from the groups 2, 3, and 4 in the reference frame of $E_1$, which have various interactions in addition to central measurement Hamiltonian. Inset: probabilities during time $[0,1]$ where the measurement-like Hamiltonian dominates.  (Cases given in Tab.~\ref{tab:cases})}
	\label{fig:plotprobp1e1tm}
\end{figure}

The probabilities of the central system are key for the spectrum in an objective state. From our numerical calculations, we find that all probabilities (of the central system spectrum) $\{p_i^{(C)}\}_{i=0}^{D-1}$ from laboratory $C$'s point of view are constant and uniform over time, except in Case 4 (cases given in Tab. \ref{tab:cases}) where the added global interaction influences the central subsystem and thus modifies its spectrum.

The system-environment generalised measurement interaction has been designed such that the effective measurement occurs by time~$t=1$. This occurs without disturbance in Cases 1.1 and 1.5 (Tab.~\ref{tab:cases}), where there are no other interactions, and where the initial state of $E_1$ in $C$'s frame of reference, $\rho_{E_1}^{(C)}(t) \coloneqq \Tr_{S E_2} \left( \rho_{S E_1 E_2}^{(C)}(t) \right)$, is pure (see Fig.~\ref{fig:plotprobpie1t1}). In these two cases, $p_0^{(E_1)} \approx 1$  at time~$t=1$, \emph{i.e.} implying that the central system state is close to pure at the end of the measurement, which corresponds to a trivial kind of objectivity in the frame of $E_1$ (much like Example \ref{eg:degenerate_localised}). Alternatively, we can say that $E_1$'s capacity was not \textit{used}.

If the initial environment state $\rho_{E_1}^{(C)}$ is slightly mixed in Case 1.2, the value of the probability $p_0^{(E_1)}$ in the frame of $E_1$ diverges from $1$, proportional to the mixedness in the original $E_1$ state. In Cases 1.3 and 1.4, when $p_0^{(E_1)}$ was either maximally mixed or a part of a maximally entangled state (which means that the local state of $E_1$ is identical) the values of probabilities $\{p_i^{(E_1)}\}_{i=0}^{D-1}$ remain uniform at the time~$t=1$, meaning there was no information transfer at all.

The probabilities in the frame of $E_1$ for all other cases behave very similarly to the Case 1.1 up to time~$t=1$, despite their interactions, since their evolution on the short time scale is still dominated by the measurement interaction (see Fig.~\ref{fig:plotprobp1e1tm}).

\subsection{Strong independence and the conditional mutual information\label{subsec:mutual_info}}

The mean mutual information between the conditional environment informs us on whether the subsystems have strong or weak independence, the former of which is a condition of spectrum broadcast structure: a small (ideally zero) mean mutual information denotes strong independence.

In Figs~\ref{fig:it1} and~\ref{fig:ie1tm}, we give the plots for how the mean mutual information $I_\text{mean}^{(E_1)}$ behaves over time, in the frame of $E_1$. We find that the mean mutual information typically starts from the value $I_\text{mean}=0$, reaches local maximum close to $1.7$ exactly at the time $t=0.5$ and returns close to $0$ for the time~$t=1$. The only exception from this behaviour is the Case 1.4 (where the initial state of $E_1$ was mixed in the lab frame $C$), and we are observing the system $\rho_{S C E_2}^{(E_1)}$ from a point of view of a completely random observer. In that case the mutual information starts with the maximal value equal $3.585$ and gradually drops. For all cases, $I_\text{mean}^{(E_1)}(t=1) \approx 0$ (with very small non-zero value in the cases with random interactions). 

\begin{figure}
	\centering
	\includegraphics[width=0.95\linewidth]{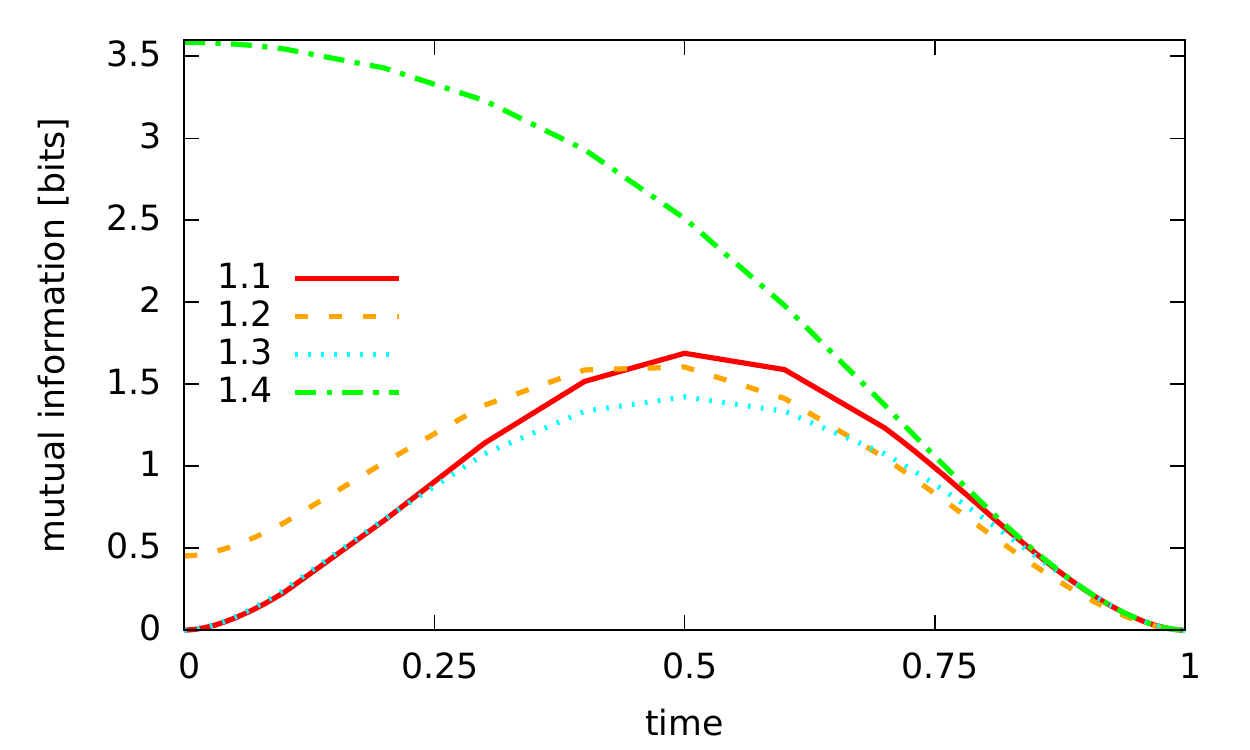}
	\caption{(Color on-line) Plot of $I_\text{mean}^{(E_1)}$ (the mean mutual information seen from the reference frame of the first environment, $E_1$) for Cases 1.1, 1.2, 1.3 and 1.4. In the Case 1.5 the function is constant and equal to $0$. This shows the cases when $I_\text{mean}^{(E_1)}(t=1) = 0$. The plot for the Case 1.1 is identical for Cases 2.1 and 3.1. (Cases given in Tab.~\ref{tab:cases}). When the mutual information is low, the environments have strong independence.}
	\label{fig:it1}
\end{figure}

\begin{figure}
	\centering
	\includegraphics[width=0.95\linewidth]{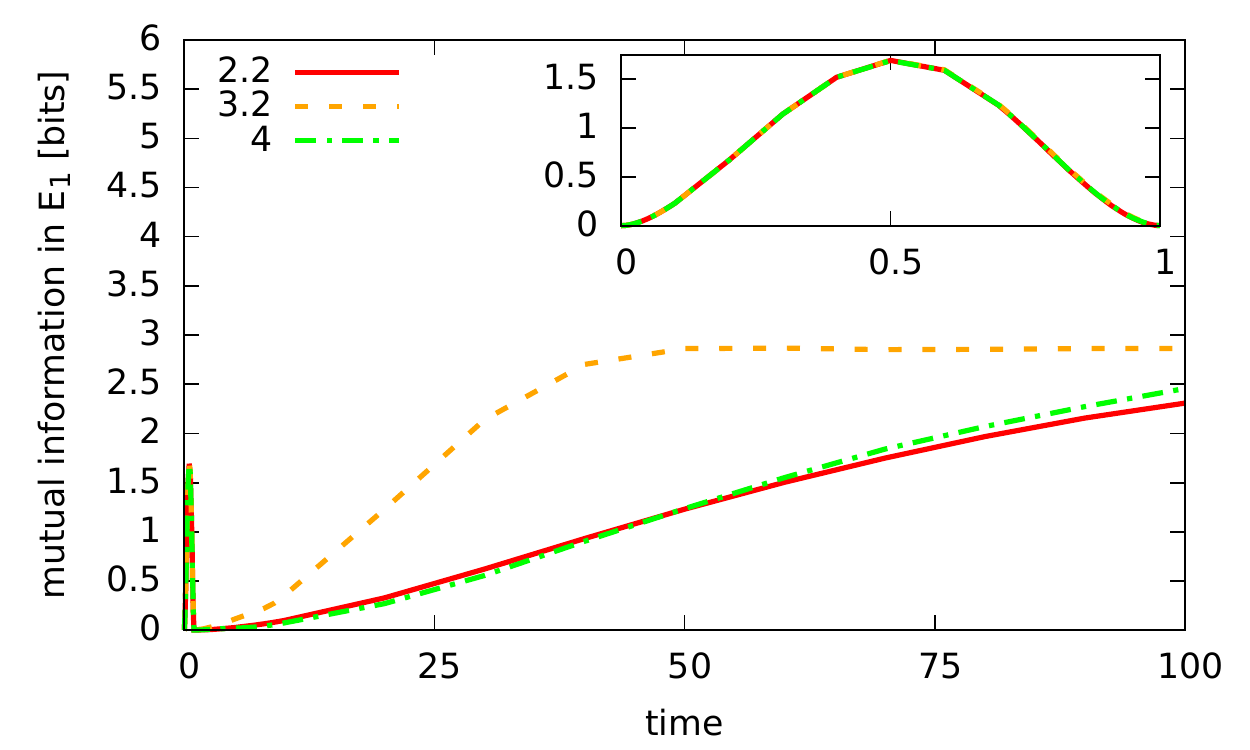}
	\caption{(Color on-line) Plot of mean mutual information in the reference frame of $E_1$, $I_\text{mean}^{(E_1)}$ for Cases 2.2, 3.2, and 4, cf. Tab.~\ref{tab:MI_E1}.  Inset: mean mutual information during times $t=[0,1]$ where the measurement-like Hamiltonian dominates. (Cases given in Tab.~\ref{tab:cases}). As the mutual information is large as time increases, the environments do not have strong independence.}
	\label{fig:ie1tm}
\end{figure}

The Cases 1.1, 2.1, and 3.1, which all have the same initial states, have very similar values of $I_\text{mean}^{(E_1)}$ in each moment of time, which can be seen in Fig.~\ref{fig:it1}. Furthermore, the mutual information is $0$ after time~$t=1$ in reference frame $E_1$. Thus, while the observing subsystems $C$, $E_2$ initially develop conditional correlations during times $t=[0,1]$, they satisfy strong independence thereafter, a necessary condition for ideal objectivity (Definition~\ref{defn:strongIndependence}).

The second common pattern is shown in Fig.~\ref{fig:ie1tm}, where mean mutual information in the reference frame of $E_1$, after approaching the value $0$ at the time~$t=1$, will gradually increase to some saturation level, with slight fluctuations. This happens in Cases 2.2, 3.2, and 4, where some random Hamiltonian is present (cf.~Tab.~\ref{tab:MI_E1}). In all other cases the mean mutual information in reference frame $E_1$ is $0$ after the time~$t=1$.

\begin{table}[bpt!]
\caption{Dynamics of mean mutual information $I_\text{sat}^{(E_1)}$ in the reference of frame of the first environment, $E_1$; $\sigma_I^{(E_1)}$ describes the fluctuation and $t_\text{sat}^{(E_1)}$ is the time of saturation (cf.~Fig.~\ref{fig:ie1tm} and Eqs.~\eqref{eq:mutual_info_saturation_E1},~\eqref{eq:mutual_info_fluctuations}).}\label{tab:MI_E1}
	\begin{tabular}{|c|c|c|c|c|c|c|}
		\hline 
		& 2.2 & 3.2  & 4 \\ \hline
		$I_\text{sat}^{(E_1)}$ & $2.513$ & $2.867$ & $3.508$ \\ \hline
		$\sigma_I^{(E_1)}$ & $0.168$ & $0.022$ & $0.004$  \\ \hline
		$t_\text{sat}^{(E_1)}$ & $150$ & $50$ & $500$ \\ \hline
	\end{tabular}
	
\end{table}

On the other hand, from the point of view of the external laboratory $C$, the mean mutual information is constant when there is no interaction between environments (Case groups 1 and 2). In this reference frame, cases with inter-environmental interaction (groups 3 and 4) lead to  the mean mutual information gradually increasing (without local maxima) till the point of saturation, $t_\text{sat}^{(C)}$. The parameters of this behaviour are given in Tab.~\ref{tab:MI_C}.

We see that the tripartite states typically have strong independence while in the lab frame $C$, but this weakens when moving to the environment frame $E_1$.

\begin{table}[bpt!]
\caption{Dynamics of mean mutual information in the frame of reference of the external laboratory $C$. Values at the times $t=0.5,1$ shows the gradual increase (cf.~Fig.~\ref{fig:ictm} in the Appendix~\ref{app:dynamic}).}\label{tab:MI_C}
	{\small
	\begin{tabular}{|c|c|c|c|c|c|c|c|c|}
		\hline 
		& 3.1 & 3.2 & 4 \\ \hline 
		$I_\text{mean}^{(C)}(0.5)$ & $0.002$ & $0.023$ & $0.004$ \\ \hline 
		$I_\text{mean}^{(C)}(1)$ & $0.006$ & $0.070$ & $0.012$ \\ \hline 
		$I_\text{sat}^{(C)}$ & $2.769$ & $5.740$ & $3.521$  \\ \hline 
		$\sigma_I^{(C)}$ & $0.949$ & $0.032$ & $0.004$ \\ \hline 
		$t_\text{sat}^{(C)}$ & $30$ & $50$ & $60$ \\ \hline 
	\end{tabular}
	}
\end{table}

We observe that in the short time range, before the interaction of environments with central system has fully occurred, environments are more independent in the laboratory's frame, whereas after the interaction the independence is stronger for environmental frame.

For long time scale we distinguish two situations: If the environmental interaction decreases with distance (Case 3.1) then the strong independence between $C$ and $E_2$ occurs in the environment frame $E_1$, but not between $E_1$ and $E_2$ in frame $C$. Meanwhile, if the environment interaction is random and does not decrease with distance (Case 3.2), there is no strong independence in either frames. In this second situation, the mutual information happens to be almost exactly twice as large (differing by $2\%$ at most) in $C$'s frame compared to $E_1$'s frame, suggesting that there is greater independence in frame $E_1$. When global interaction is present, then the mutual information is the same in both frames (differing by $0.1\%$ at most) for all times.

Overall, in the cases we consider, strong independence can be maintained in original frame $C$, but not the environment frame $E_1$. Furthermore, environment-environment interactions destroy strong independence.

\subsection{Distinguishability\label{subsec:Distinguishability}}

Distinguishability is crucial for observers to determine the central system's spectrum. In the reference frame of $C$ the upper bound that describes the error in distinguishing conditional states, Eq.~\eqref{eq:errBound}, is equal $0$ at the time~$t=1$ for Cases 1.1, 2.1, and close to $0$ for Cases 2.2 and 3.1. It is also $0$ for the subsystem $E_2$ in Case 1.4 (when $E_2$ is pure and $E_1$ is maximally mixed) and for subsystem $E_1$ in Case 1.5 (when $E_1$ is pure and $E_2$ is maximally mixed). That is, the distinguishability error is (close to zero) for situations without random inter-environmental interactions.

For the other cases which \emph{do} have random interactions, error the bound is significantly higher, \textit{viz.} for the Case 3.2 the bound is between $0.38$ and $0.39$, equal to $0.169$ for the Case 4.

\begin{table}[bpt!]
\caption{Upper bound to distinguishability error, Eq.~\eqref{eq:errBound}, at the time~$t=1$ for cases without random inter-environmental interactions, in the reference frame of $C$ for subsystems $E_1$ and $E_2$ and in the reference frame of $E_1$ for subsystems $C$ and $E_2$.}\label{tab:errC} \label{tab:errE1}
	\begin{tabular}{|c|c|c|c|c|c|c|}
		\hline 
		frame & subsystem & 2.1& 2.2 & 3.1  \\ \hline 
		C &$E_1^{(C)}$ & $0.000$ & $0.28$ & $0.050$  \\ \hline 
		C &$E_2^{(C)}$ & $0.000$ & $0.025$ & $0.050$  \\ \hline 
		$E_1$ & $C^{(E_1)}$ & $0.040$ & $0.109$ & $0.051$ \\ \hline 
		$E_1$ & $E_2^{(E_1)}$ & $0.000$ & $0.000$ & $0.011$  \\ \hline 
	\end{tabular}
	
\end{table}

Similarly, from the perspective of $E_1$, the distinguishability error upper bound Eq.~\eqref{eq:errBound} is low (below $0.05$ for at least one subsystem $C$ or $E_2$) when there are no random interactions and the initial state of $E_1$ is pure: these correspond to Cases 1.1, 2.1, 2.2, and 3.1, and also 1.5 (that was $0$ for $E_1$ and above $1$ for the initially maximally mixed $E_2$). This is summarised in Tab.~\ref{tab:errE1}. 

Overall, the effect of moving frame is thus: while it is easier to distinguish $E_1$ in lab $C$ frame, while $E_2$ has the lower distinguishability error when in the $E_1$ frame.

\subsection{Holevo information and mutual information between central system and environments\label{subsec:Holevo_and_mutual_info}}

The analysis in the previous subsections have shown that the SBS structure is generally not preserved between reference frames. Another aspect of quantum Darwinism and SBS is the information propagation measured by the  quantum mutual information, and by the (classical) Holevo information transfer from the central system to environments. To this end, we calculate the quantum mutual information, as well as the Holevo information by considering a hypothetical measurement of the central system in its eigenbasis, and the ensemble of steered states of each of the environments.

Fig.~\ref{fig:QMI_S_E1} shows  the quantum mutual information between the central system $S$ and subsystem $E_1$ ($C$) in the reference frame of $C$ ($E_1$). Both Holevo information (figure given in Appendix~\ref{app:dynamic} as it is extremely similar) and quantum mutual information of both sub-environments are equal in frame $C$ and equal for the environment $E_2$ in frame $E_1$ if there is no global interaction (Cases 1, 2, 3), thus satisfying one requirement for SBS. The low quantum mutual information in frame $E_1$ compared to high quantum mutual information in frame $C$ also suggests that the central system and $C$ are trivially objective, i.e. that only one probability in the spectrum is substantial (cf. Sec.~\ref{subsec:probabilities}).

On the other hand, the Holevo and quantum mutual information differ for the subsystem $C$ in frame $E_1$ in most Cases. For the Case 1.3 (with an initially entangled environment state) the quantum mutual information is constant and is equal to the half of the mutual information of a maximally entangled state, and Holevo information is $0$. If there is self-evolution or environment interaction (Cases 2 and 3), the quantum mutual information is also significantly higher that Holevo information, with exception for the Case 2.1 when they are equal. In general, the state $C$ in frame $E_1$ develops quantum correlations with the central system, reflecting the results from the prior sections.

\begin{figure}
	\centering
	\includegraphics[width=0.99\linewidth]{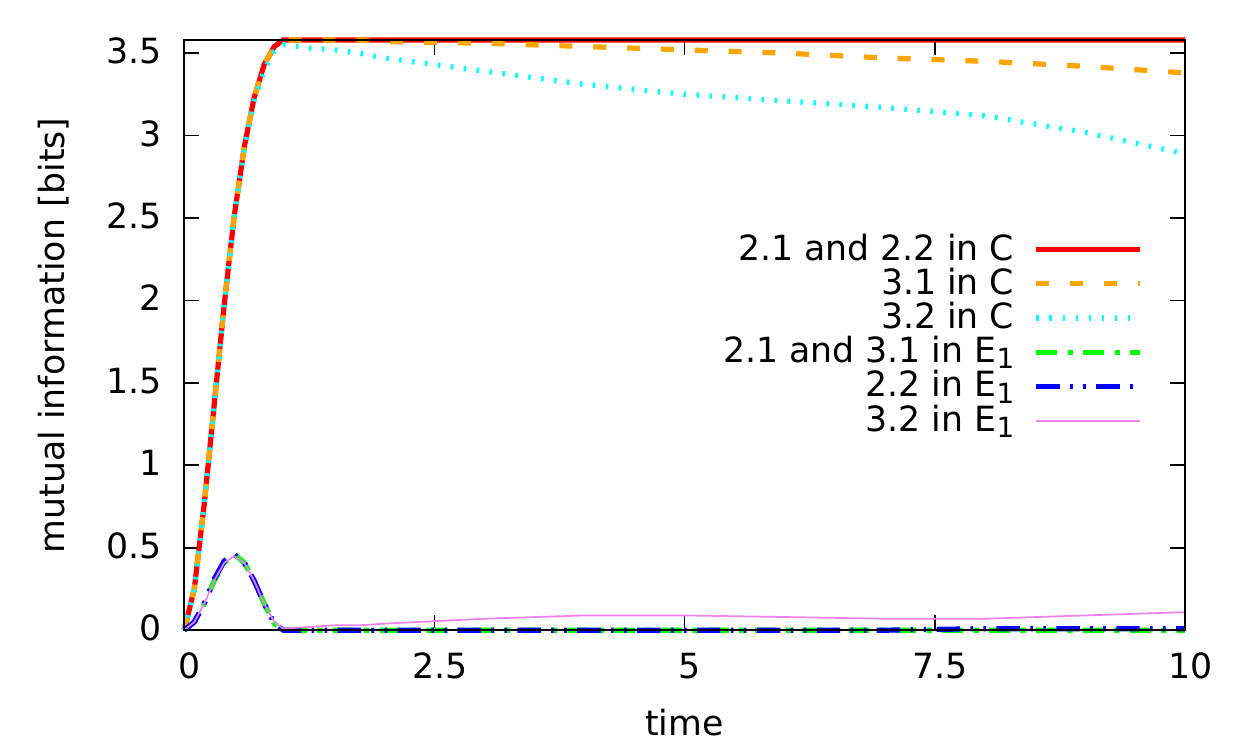}
	\caption{(Color on-line) The quantum mutual information between the central system $S$ and subsystem $E_1$ ($C$) in the reference frame of $C$ ($E_1$). A high quantum mutual information often suggests objectivity. (Cases given in Tab.~\ref{tab:cases})}
	\label{fig:QMI_S_E1}
\end{figure}

\section{Conclusion\label{sec:Conclusion}}

We examined the transformation of objectivity in different quantum reference frames. We used the quantum reference frame formalism of \citet{Giacomini2019}, in which local and global properties are frame-dependent and can interchange, and analysed the structure of objective states in the frame of their environments.

Under perfect localisation, we showed how non-degenerate relative positions is the key factor in ensuring that objectivity is consistent across different quantum reference frames. This can be done by randomly choosing all required positions from a continuous interval\textemdash as the probability of two random real numbers being equal is zero. Environment-state coherences introduce entanglement between system and environments, which then require discarding of environments to remove. Meanwhile, environment-state statistical mixedness, \emph{i.e.} internal classical noise, introduces new classical correlations in addition to the original classical objective correlations. Thus, in general, each reference frame has a different set of objective information that involves the original statistics of the environment associated with the reference frame. Nevertheless, the original system information is recoverable by taking the appropriate marginal of the information distribution. Hence, the objective information of the system in the original laboratory frame is unique and exists consistently in all frames---although typically the system-environment state no longer has an objective state form in other frames.

We then considered systems and environments with a non-zero spread over position. Distinguishability requires either sufficiently separated (non-degenerate) relative positions, or large differences in spread of different conditional states. Objectivity becomes ``blurred'' in different frames due to the continuous spread in the environment states, and the greater the blurring, the large the macrofractions are required in the new frame in order to recover a form of objectivity.

We found that the best candidate of objective information finds the system in conditional mixed states that are distinguishable, rather than the conditional pure system states strictly required of quantum Darwinism. This suggests a generalised objectivity, \emph{i.e.} where the system is conditionally mixed and perfectly distinguishable like the environments. Furthermore, only a strict subset of all objective states are perfect robust across quantum reference frames\textemdash when its observed environments are also objective, that is having an invariant SBS \citep{Le2019b}. This suggests that objective states is consistent under different frames only if it system and environments have similar structure, \emph{i.e.} all conditionally pure, or all conditionally mixed.

We then proved that perfect objectivity holds in all lab and environment frames if and only if the environment states are pure and perfectly localised in the position basis conditioned on $i$ of the system information. An open question is how states close to this specialised form of SBS behave under quantum reference frame transformations.

Finally, we examined the dynamical emergence of objectivity in different quantum reference frames, explicitly demonstrating how changing frames affects the objective probabilities and degrades spectrum broadcast structure by weakening strong independence and reducing distinguishability.

Quantum Darwinism is an approach towards understanding the quantum-to-classical transition, examining how hypothetical observers may acquire objective information about a common system. Observers may have different reference frames, which subsequently affects the information they can obtain. We have shown that the system's objective information is recoverable in all frames, despite the interchangeability of the local and global statistics. At the same time, our work demonstrates the rise of extra frame-dependent objective information, and the robustness (or otherwise) of various objective state structures. In the strictest sense, objectivity is subjective across quantum reference frames. This work opens up the pathway to understanding quantum Darwinism and its intersection with relativity. On a more philosophical level the considerations are tightly related to the problem of Wigner's friend~\cite{Brukner2015} gedanken experiment.

After the completion of this manuscript, we were made aware of independent work on a similar topic: Ref.~\cite{Tuziemski2020} examines dynamical aspects of quantum reference frame transformations and objectivity, and thus their results are complementary to our results presented here.

\section*{Acknowledgements}

The support by the Foundation for Polish Science through IRAP project co-financed by the EU within the Smart Growth Operational Programme (contract no. 2018/MAB/5) is acknowledged. This work was also supported by the Engineering and Physical Sciences Research Council {[}grant number EP/L015242/1{]}.

\appendix

\section{Objectivity with continuous probabilities \label{app:Objectivity-with-continuous}}

In the main paper, the branch structure of the objective state is typically discrete, indexed by $\left\{ i\right\} $ and summed. However, we can consider the more general situation in which this $\left\{ i\right\} $ becomes continuous, even in the laboratory frame. A GHZ-like continuous objective state is the following:
\begin{align}
\rho_{SE_1E_2\cdots E_N}^{(C)}= & \int dx_S\psi\left(x_S\right)\ket{x_S}\bra{x_S}_S\nonumber \\
 & \otimes\ket{\phi_{1}\left(x_S\right)}\bra{\phi_{1}\left(x_S\right)}_{E_1}\nonumber \\
 & \otimes\cdots\otimes\ket{\phi_{N}\left(x_S\right)}\bra{\phi_{N}\left(x_S\right)}_{E_N},\label{eq:GHZ_continuous_probabilities}
\end{align}
where $\psi\left(x_S\right)$ are the continuous probabilities, and $\left\{ \phi_i\right\} _i$ are bijective (one-to-one and onto) functions. One-to-oneness is necessary so that the environments $E_i$ positions are unique for any different system position $x_S$, and hence are always perfectly correlated with the system position $x_S$. For simplicity, we will also take $\left\{ \phi_i\right\}_i$ being onto, which ensures that the inverse is also one-to-one.

In the quantum reference frame of the environment $E_1$, the joint state has form:
\begin{align}
 & \rho_{SC E_2\cdots E_N}^{\left(E_1\right)}\nonumber \\
 & =\int dq_C\psi\left(\phi_{1}^{-1}\left(-q_C\right)\right)\nonumber \\
 & \phantom{=}\times\ket{\phi_{1}^{-1}\left(-q_C\right)+q_C}\bra{\phi_{1}^{-1}\left(-q_C\right)+q_C}_S\otimes\ket{q_C}\bra{q_C}_C\nonumber \\
 & \phantom{=}\otimes\bigotimes_{j=2}^{N}\ket{\phi_{j}\left(\phi_{1}^{-1}\left(-q_C\right)\right)+q_C}\bra{\phi_{j}\left(\phi_{1}^{-1}\left(-q_C\right)\right)+q_C}_{E_j}.
\end{align}
The original probabilities $\psi\left(\cdot\right)$ still exist---as the original conditional environment states are pure in the position basis (c.f. Thm.~\ref{thm:perfect_objectivity_discrete}). In the frame $E_1$, there is an integral over $q_C$ rather than $x_S$. In this particular scenario, objectivity requires distinguishability for different $q_C$. Hence $\phi_{1}^{-1}\left(-q_C\right)+q_C$ and $\left\{ \phi_{j}\left(\phi_{1}^{-1}\left(-q_C\right)\right)+q_C\right\} _{j}$ must also be one-to-one functions\textemdash so that none of the new positions become degenerate (and subsequently reduce conditional distinguishability).

Note that the composition of one-to-one functions is one-to-one; but sum of one-to-one functions is not necessarily one-to-one: a sufficient but not necessary condition is that $\dfrac{d}{dx}\phi_i\left(\phi_{1}^{-1}\left(x\right)\right)>1$ or that $\dfrac{d}{dx}\phi_i\left(\phi_{1}^{-1}\left(x\right)\right)<1$ for all $x$. Hence the state given in Eq.~(\ref{eq:GHZ_continuous_probabilities}) is \emph{not} always objective for any set of one-to-one functions $\psi(\cdot)$, $\phi_j(\cdot)$.

\section{Proof of conditions for perfect objectivity\label{app:perfect_proofs}}

\subsection{Proof of Theorem \ref{thm:perfect_objectivity_discrete}\label{app:proof_theorem}}

Here, we prove the state structure and conditions for an SBS state to be objective, with the same objective information, in all laboratory and environment quantum reference frames.

\begin{proof}
In the new environment frame $E_{1}$, the system-environment state has general form:
\begin{align}
 & \rho_{SCE_{2}\cdots E_{N}}^{\left(E_{1}\right)}\nonumber \\
 & =\sum_{i}p_{i}\sum_{q_{C},q_{C}^{\prime}}\sum_{q_{S},q_{S}^{\prime}}\psi\left(q_{S}-q_{C}|i\right)\psi^{*}\left(q_{S}^{\prime}-q_{C}^{\prime}|i\right) \ket{q_{S}} \bra{q_{S}^{\prime}}_{S} \nonumber \\
 & \quad\otimes t\left(-q_{C},-q_{C}^{\prime}|i,j=1\right)\ket{q_{C}}\bra{q_{C}^{\prime}}_{C}\nonumber \\
 & \quad\otimes\bigotimes_{j=2}^{N}\sum_{q_{E_{j}},q_{E_{j}}^{\prime}}t\left(q_{E_{j}}-q_{C},q_{E_{j}}^{\prime}-q_{C}^{\prime}\Big|i,j\right)\ket{q_{E_{j}}}\bra{q_{E_{j}}^{\prime}}_{E_{j}},\label{eq:general_system_state_E1_frame}
\end{align}
where note that in $t(-q_C,-q_C^\prime |i,j=1)$, only $j=1$ is fixed. The general reduced system state is:
\begin{align}
\rho_{S}^{\left(E_{1}\right)} & =\sum_{i}p_{i}\sum_{q_{C}}t\left(-q_{C},-q_{C}|i,j=1\right)\nonumber \\
 & \phantom{=}\times\sum_{q_{S},q_{S}^{\prime}}\psi\left(q_{S}-q_{C}|i\right)\psi^{*}\left(q_{S}^{\prime}-q_{C}|i\right)\ket{q_{S}}\bra{q_{S}^{\prime}}_{S}.\label{eq:general_reduced_system_state_discrete}
\end{align}
If the \emph{same }initial objective information $\left\{ p_{i}\right\} _{i}$ is maintained, the system must have $\left\{ p_{i}\right\} _{i}$ as its spectrum, even in the frame $E_{1}$. Hence, we must be able to decompose it as $\rho_{S}^{\left(E_{1}\right)}\overset{!}{=}\sum_{i}p_{i}\ket{\tilde{\psi}_{i}^{S}}\bra{\tilde{\psi}_{i}^{S}}_{S}$ , where $\ket{\tilde{\psi}_{i}^{S}}=\sum_{q_{S}}T\left(q_{S}\right)\ket{q_{S}}_{S}$ are the new eigenstates of $S$ in the frame $E_{1}$ with some coefficients
$T\left(q_{S}\right)$:
\begin{align}
\rho_{S|i}^{\left(E_{1}\right)} & =\sum_{q_{C}}t\left(-q_{C},-q_{C}|i,j\right)\nonumber \\
 & \phantom{=}\times\sum_{q_{S},q_{S}^{\prime}}\psi\left(q_{S}-q_{C}|i\right)\psi^{*}\left(q_{S}^{\prime}-q_{C}|i\right)\ket{q_{S}}\bra{q_{S}^{\prime}}_{S}\\
 & =\left(\sum_{q_{S}}\ket{q_{S}}_{S}\right)\left(\sum_{q_{S}^{\prime}}\bra{q_{S}^{\prime}}_{S}\right)\nonumber \\
 & \qquad\times\sum_{q_{C}}\left(\begin{array}{r}
\psi\left(q_{S}-q_{C}|i\right)\psi^{*}\left(q_{S}^{\prime}-q_{C}|i\right)\\
\times t\left(-q_{C},-q_{C}|i,j=1\right)
\end{array}\right)\label{eq:temporary_eq}\\
 & \overset{!}{=}\sum_{i}p_{i}\ket{\tilde{\psi}_{i}^{S}}\bra{\tilde{\psi}_{i}^{S}}_{S}\\
 & =\left(\sum_{q_{S}}T\left(q_{S}\right)\ket{q_{S}}_{S}\right)\left(\sum_{q_{S}^{\prime}}T^{*}\left(q_{S}^{\prime}\right)\bra{q_{S}^{\prime}}_{S}\right).
\end{align}
The coefficient $\sum_{q_{C}}(\cdots)$ term in Eq.~(\ref{eq:temporary_eq}) must be factorisable into \emph{independent} terms involving $q_{S}$ and $q_{S}^{\prime}$:
\begin{align}
T\left(q_{S}\right)T^{*}\left(q_{S}^{\prime}\right) & \overset{!}{=}\sum_{q_{C}}\psi\left(q_{S}-q_{C}|i\right)\psi^{*}\left(q_{S}^{\prime}-q_{C}|i\right)\nonumber \\
 & \qquad\qquad\times t\left(-q_{C},-q_{C}|i,j=1\right).
\end{align}
The only way to factorise this is if there is only \emph{one} nonzero term in the sum, i.e. if and only if
\begin{equation}
t\left(-q_{C},-q_{C}|i,j=1\right)\overset{!}{=}\delta\left(-q_{C}-x_{E_{j=1}|i}\right).
\end{equation}
Thus, the original conditional environment states $\rho_{E_{1}|i}$ of environment $E_1$ are pure \emph{and} incoherent in the $x$ basis: 
\begin{align}
\rho_{E_{1}|i}^{(C)} & =\sum_{x_{E_{1}},x_{E_{1}}^{\prime}}t\left(x_{E_{1}},x_{E_{1}}^{\prime}|i,j=1\right)\ket{x_{E_{1}}}\bra{x_{E_{1}}^{\prime}}_{E_{1}}\nonumber \\
 & \overset{!}{=}\ket{x_{E_{1}|i}}\bra{x_{E_{1}|i}}.
\end{align}
This holds analogously for objectivity in any other environment quantum reference frame. Hence, we require that all conditional environment states are pure, leaving us with system-environment state structure in the original laboratory frame and in the environment $E_1$ frame respectively:
\begin{align}
\rho_{SE_{1}\cdots E_{N}}^{\left(C\right)} & =\sum_{i}p_{i}\ket{\psi_{i}^{S}}\bra{\psi_{i}^{S}}_{S}\otimes\bigotimes_{j=1}^{N}\ket{x_{E_{j}|i}}\bra{x_{E_{j}|i}}_{E_{j}}.\\
\rho_{SCE_{2}\cdots E_{N}}^{\left(E_{1}\right)} & =\sum_{i}p_{i}\ket{\tilde{\psi}_{i}^{S}}\bra{\tilde{\psi}_{i}^{S}} \otimes\ket{-x_{E_{1}|i}}\bra{-x_{E_{1}|i}}_{C}\nonumber \\
 & \quad\otimes\bigotimes_{j=2}^{N}\ket{x_{E_{j}|i}-x_{E_{1}|i}}\bra{x_{E_{j}|i}-x_{E_{1}|i}}_{E_{j}},\\
\ket{\tilde{\psi}_{i,(j=1)}^{S}} & \coloneqq\sum_{q_{S}}\psi\left(q_{S}+x_{E_{j=1}|i}\Big|i\right)\ket{q_{S}}_{S}.
\end{align}
The conditional states already have strong independence, and so the final requirement of SBS objectivity is perfect distinguishability:
\begin{align}
\braket{x_{E_{j}|i}-x_{E_{k}|i}|x_{E_{j}|i^{\prime}}-x_{E_{k}|i^{\prime}}} & =0, \quad\forall i\neq i^{\prime}, j\neq k,\\
\braket{\tilde{\psi}_{i,j}^{S}|\tilde{\psi}_{i^{\prime},j}^{S}} & =0, \quad\forall i\neq i^{\prime}.
\end{align}
We should already have that $\braket{\psi_{i}^{S}|\psi_{i^{\prime}}^{S}}=0,$
$\braket{x_{E_{j}|i}|x_{E_{j}|i^{\prime}}}=0$ $\forall$ $i\neq i^{\prime}$
as this is necessary for the state in the lab frame to be objective.
\end{proof}

\subsection{Proof of Proposition \ref{prop:A-discrete-SBS_with_new_information}\label{app:proof_proposition}}

Here, we prove that a particular SBS state structure---with relevant distinguishability conditions---will be objective in all laboratory and environment frames, albeit with different objective information.

\begin{proof}
Recall the general reduced state of the system in the frame $E_{1}$, Eq.~(\ref{eq:general_reduced_system_state_discrete}). If we now have a new objective information, then there will be some eigendecomposition of the system with new probabilities $\tilde{p}_{\tilde{i}}$:
\begin{equation}
\rho_{S}^{\left(E_{1}\right)}=\sum_{\tilde{i}}\tilde{p}_{\tilde{i}}\ket{\tilde{\psi}_{\tilde{i}}}\bra{\tilde{\psi}_{\tilde{i}}}.
\end{equation}
The reduced system state is
\begin{align}
\rho_{S}^{\left(E_{1}\right)} & =\sum_{i}p_{i}\sum_{q_{C}}t\left(-q_{C},-q_{C}|i,j=1\right)\nonumber \\
 & \quad\times\sum_{q_{S},q_{S}^{\prime}}\psi\left(q_{S}-q_{C}|i\right)\psi^{*}\left(q_{S}^{\prime}-q_{C}|i\right)\ket{q_{S}}\bra{q_{S}^{\prime}}_{S}.
\end{align}
Note that $j=1$ is fixed, while $i$ is not fixed. Recall how in the situation when we had the same objective information, the coefficient $t(-q_{C},-q_{C}|i,j=1)$ was a dirac delta. However, as we are not requiring the \emph{same} objective information here, we can relax this condition.

Instead, we can identify $\sum_{q_{S}}\psi\left(q_{S}-q_{C}|i\right)\ket{q_{S}}_{S}\eqqcolon\ket{\tilde{\psi}_{\tilde{i}}}$ as the conditionally pure states, and impose that they are orthogonal, such that the system decomposition is 
\begin{align}
\rho_{S}^{\left(E_{1}\right)} & =\sum_{i,q_{C|i}\in C_{E_{1}|i}}\tilde{p}\left(i,q_{C|i}\right)\ket{\tilde{\psi}_{\left(i,q_{C|i}\right)}}\bra{\tilde{\psi}_{\left(i,q_{C|i}\right)}}\\
\tilde{p}\left(i,q_{C|i}\right) & =p_{i}t\left(-q_{C|i},-q_{C|i}|i,j=1\right)\\
\ket{\tilde{\psi}_{\left(i,q_{C|i}\right)}} & =\sum_{q_{S}}\psi\left(q_{S}-q_{C|i}|i\right)\ket{q_{S}}_{S},
\end{align}
with new objective information $\left\{ p_{i}t(-q_{C|i},-q_{C|i}|i,j=1)\right\} _{\left(i,q_{C|i}\right)}$, and where $q_{C}=q_{C|i}^{(E_1)}$ has a dependence on $i$ and on the environment frame (here, $E_1$). Hence, the sets $C_{E_1|i}$ describe the possible values $q_C$ can take given index $i$ and environment frame $E_1$. The conditional system states need to be distinguishable for this to be a valid decomposition:
\begin{align}
\braket{\tilde{\psi}_{(i,q_{C|i})}|\tilde{\psi}_{(i^{\prime},q_{C|i}^{\prime})}} & =0,\quad\forall\left(i,q_{C|i}\right)\neq(i^{\prime},q_{C|i}^{\prime}).
\end{align}
Returning to the full system-environment-lab state in frame $E_{1}$, we now have that 
\begin{align}
 & \rho_{SCE_{2}\cdots E_{N}}^{\left(E_{1}\right)} =\sum_{i}p_{i}\sum_{q_{C|i},q_{C|i}^{\prime}}t(-q_{C|i},-q_{C|i}^{\prime}|i,j=1)\nonumber \\
 & \qquad \times \ket{\tilde{\psi}_{\left(i,q_{C|i}\right)}}\bra{\tilde{\psi}_{\left(i,q_{C|i}^{\prime}\right)}}\otimes\ket{q_{C|i}}\bra{q_{C|i}^{\prime}}_{C}\nonumber \\
 & \qquad\otimes\bigotimes_{j=2}^{N}\sum_{q_{E_{j}},q_{E_{j}}^{\prime}}t(q_{E_{j}}-q_{C|i},q_{E_{j}}^{\prime}-q_{C|i}^{\prime}|i,j)\ket{q_{E_{j}}}\bra{q_{E_{j}}^{\prime}}_{E_{j}}.
\end{align}
For SBS structure, the system cannot have extra coherences beyond the $\left\{ \ket{\tilde{\psi}_{\left(i,q_{C|i}\right)}}\right\} _{i,q_{C}}$ basis we have established, and the conditional environment states must be distinguishable. Hence, $t(-q_{C|i},-q_{C|i}^{\prime}|i,j)\overset{!}{=}0$ if $q_{C|i}\neq q_{C|i}^{\prime}$, \emph{i.e the environment $E_{1}$ is incoherent in the $x$ basis. }This means that \emph{all} the environments must be incoherent in the $x$ basis in order for this to hold in all environment frames,
\begin{align}
\rho_{SCE_{2}\cdots E_{N}}^{\left(E_{1}\right)} & =\sum_{i}\sum_{q_{C|i}\in C_{E_1|i}}p_{i}t\left(-q_{C|i}|i,j=1\right)\nonumber \\
 & \quad\times \ket{\tilde{\psi}_{\left(i,q_{C|i}\right)}}\bra{\tilde{\psi}_{\left(i,q_{C|i}\right)}}\otimes\ket{q_{C|i}}\bra{q_{C|i}}_{C}\nonumber \\
 & \quad\otimes\bigotimes_{j=2}^{N}\sum_{q_{E_{j}}}t\left(q_{E_{j}}-q_{C|i}|i,j\right)\ket{q_{E_{j}}}\bra{q_{E_{j}}}_{E_{j}},
\end{align}
where we have written $t\left(q,q|i,j\right)=t\left(q|i,j\right)$. Note that $q_{C}=q_{C|i}^{\left(E_{1}\right)}$ is dependent on $i$ and dependent on the original environment $E_{1}$ states: this is encoded in the coefficients $t(-q_{C|i}|i,j=1)$, as well as the sets $C_{E_1|i}$ for extra clarity. 

Lastly, we impose the distinguishability conditions on the environment states:
\begin{align}
\rho_{E_{j}|\left(i,q_{C}\right)}^{\left(E_{1}\right)}\rho_{E_{j}|\left(i^{\prime},q_{C}^{\prime}\right)}^{\left(E_{1}\right)} & =0,\quad\forall\left(i,q_{C}\right)\neq\left(i^{\prime},q_{C}^{\prime}\right).
\end{align}
Note that this implies any specific $q_{C}$ value is assigned only to one unique index $i$, since we must have $\ket{q_{C|i}}$ being distinguishable for each nonzero combination of $\left(i,q_{C|i}\right)$. In other words, knowing $q_{C}$ automatically gives knowledge of $i$, like a many-to-one function. (Equally, the sets $\{C_{E_1|i}\}_i$ are disjoint across $i$.)
\end{proof}

\section{Detailed discussion of numerical results of dynamic system and two environments\label{app:dynamic}}

In this Appendix we provide a detailed discussion of numerical experiments of~Sec.~\ref{sec:Numerical-simulations}. Subsec.~\ref{app:subsec:measurement_limit_Hamiltonian} considers when the measurement-limit Hamiltonian is the only interaction present. Subsec. \ref{app:subsec:self-evolution_environments} considers additional self-evolution in the environment. Subsec. \ref{app:subsec:mutual_information_transfer} considers environment-environment interactions (without self-evolution). 
Finally, Subsec. \ref{app:subsec:global_evolution} considers the effect of adding a more general global interaction.

\begin{figure}
	\centering
	\includegraphics[width=0.95\linewidth]{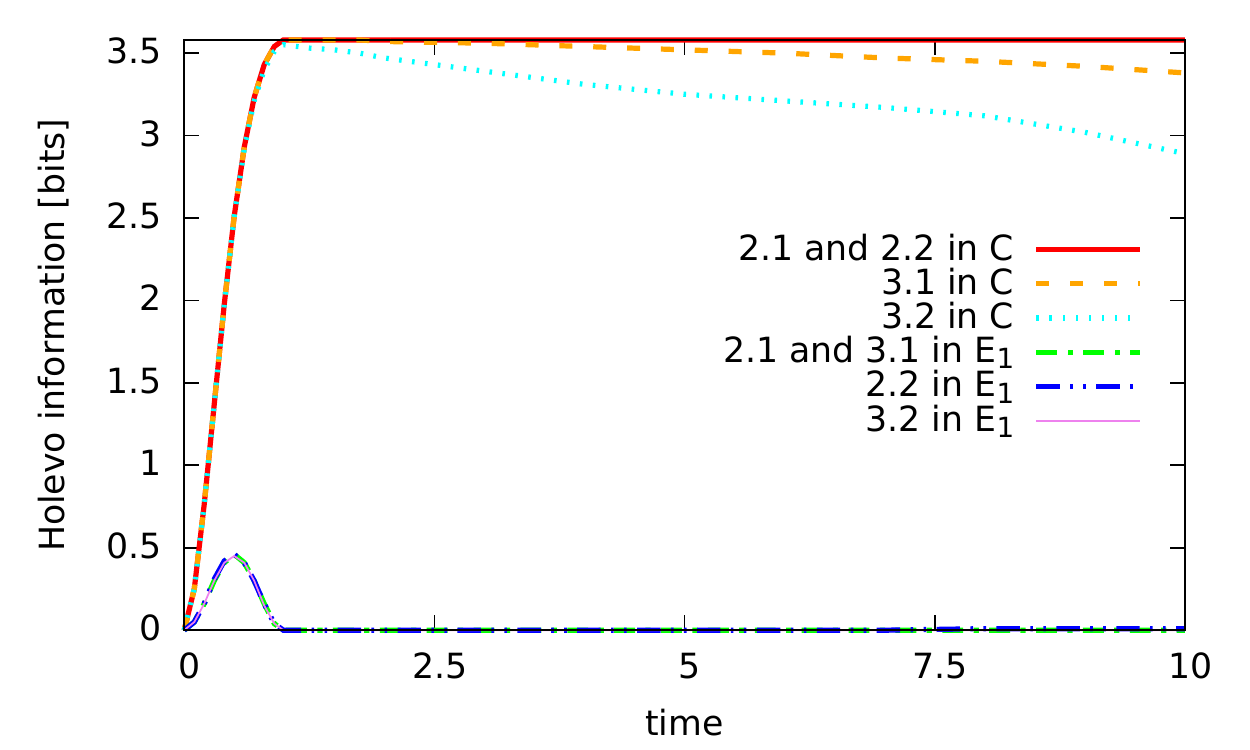}
	\caption{(Color on-line) The Holevo information between central system $S$ and subsystem $E_1$ ($C$) in the reference frame of $C$ ($E_1$). Compared with Fig.~\ref{fig:QMI_S_E1}, the Holevo information is very similar to the quantum mutual information, aside from case 3.2 in frame $E_1$.}
	\label{fig:holevo}
\end{figure}

\subsection{Measurement limit Hamiltonian\label{app:subsec:measurement_limit_Hamiltonian}}

We first consider the case when only the central interaction is present, as described in the main text. We investigate how the system-to-environment information transfer varies given different initial states (Cases 1.1 to 1.5), and we see how the SBS form is preserved when we transform from the reference frame of the external observer, $C$, to the reference frame of the first environment, $E_1$.

Note that this interaction Hamiltonian acts only till time~$t=1$, and there is no further evolution afterwards.

When there is only the central interaction, we find that in frame $C$,  all probabilities remain unchanged, and there is strong independence at all times, except for Case 1.3 (where the initial environment states are entangled).

\subsubsection*{Case 1.1 - information broadcasting with pure environments}

We start with a scenario when the initial state is given by Eq.~\eqref{eq:rho_mpp}, \emph{i.e.} when both environments are pure in state $\ket{0}$, and thus has the largest capacity to obtain information about the central system $S$. In both reference frames $C$ and $E_1$, the states are fully SBS---though note that in the frame of $E_1$, this SBS is \textit{degenerated} and thus trivial, since it has only one non-zero probability, $p_0^{(E_1)} = 1$ at time~$t=0$.

\paragraph{External laboratory perspective.}

The numerical calculations showed that in frame $C$, all conditional environments states, $E_1$ and $E_2$, are initially indistinguishable. As the system interacts with the environments, the fidelity of all conditional states of environments decreases monotonically to $0$ at the time $t=1$, meaning that the conditional states are now distinguishable. Since both environments are symmetric, the fidelity is identical for both of them.

\paragraph{First environment's perspective.}

Fig.~\ref{fig:plotprobpie1t1} shows how all probabilities different from $p_0$ converge monotonically to $0$ until time~$t=1$, corresponding to a trivial objectivity.

The plot of conditional environment-environment mutual information $I_\text{mean}^{(E_1)}$ is show in~Fig.~\ref{fig:it1} in the main text, where a high value of $I_\text{mean}^{(E_1)}$ denotes \emph{lack} of strong independence. In frame $E_1$, the subsystems $C^{(E_1)}$ and $E_2^{(E_1)}$ initially gain correlations amongst themselves up to  time $t=0.5$, before decaying back to strong-independent state by time $t=1$.

The fidelity of individual conditional states of $C$ remains large, but less than $1$. Despite this, the upper bound Eq.~\eqref{eq:errBound} to the distinguishability error is small, and is zero at the time moment~$t=1$. The fidelity of individual conditional states of $E_2$ remains small, but greater than $0$. Comparing the upper bounds Eq.~\eqref{eq:errBound} over both subsystems $C$ and $E_2$ we see that the bound is slightly smaller for the latter.

\subsubsection*{Case 1.2 - information broadcasting with slightly blurred environments}

The second case we consider deals with environmental states that are not pure, but has a small admixture of positions other than the dominant one, given in Eq.~\eqref{eq:rho_mbb}.

\paragraph{External laboratory perspective.}

In this case the final fidelity $B_{E_1}^{(C)}$ is non-zero between the states close to each other, and near to zero for the others. For example at the time $t=0.999$ when the information propagation process is almost done $B_{E_1}^{(C)}(0,1) \approx 0.566$ whereas for $B_{E_1}^{(C)}(0,2) \approx 0.101$ and $B_{E_1}^{(C)}(0,3) \approx 0.001$.

\paragraph{First environment's perspective.}

The probability distribution tends towards distribution corresponding to the initial distribution of the environment, \emph{i.e.} $p_0 = 0.8$, $p_1 = p_{D-1} = 0.1$, c.f. Eq.~\eqref{eq:rho_mbb}. The exact shape of the function is shown in~Fig.~\ref{fig:plotprobpie1t1}.

Again, mutual information is slightly higher for the central system's state $0$. The value of $I_\text{mean}$ is again the largest at the time $t=0.5$ and converges to $0$ at the time moment~$t=1$ for all Cases 1. The plot of $I_\text{mean}^{(E_1)}$ is shown in~Fig.~\ref{fig:it1}.

The fidelity of conditional states of both $C$ and $E_2$ remain non-zero all the time, but fidelity in the case of $E_2$ is smaller. At the time moment~$t=1$, fidelity of states $0$, $1$ and $D-1$ is equal to $1$. The average fidelity for $C$ is at the time $0.999$ equal to $0.786$, and for $E_2$ is equal to $0.122$. Note that for the time moment~$t=1$ the probabilities different than $0$, $1$ and $D-1$ are equal to $0$, and thus $\rho_{C|i}^{(E_1)}$ and $\rho_{E_2|i}^{(E_1)}$ for other values of $i$ are undefined.

\subsubsection*{Case 1.3 - information broadcasting with maximally entangled environments}

The Case 1.3 investigates the initial joint state with maximally entangled environments, given in  Eq.~\eqref{eq:rho_mEE}.

\paragraph{External laboratory perspective.}

Here, the mutual information is constant in time, and equal to $7.170$, such as it results from the maximally entangled state in the dimension under consideration, $D = 12$. The fidelity of conditional states is also constant in time and equal to $1$ between all states. This shows that there is no information transfer in this scenario.

\paragraph{First environment's perspective.}

All probabilities $p_i^{(E_1)}$ are identical, equal to $\frac{1}{D} \approx 0.083$ and constant over time. The fidelity all the time equal to $1$ between all conditional states.

The mutual information is identical for all central system's states. At the time moments $t=0, 1$, the mutual information  is equal to $0$, while the maximum of $1.423$ is attained at the time moment~$t=0.5$. We note that $I_\text{mean}^{(E_1)}$ behaves similarly like in the Case 1.1, but has different values. The plot of $I_\text{mean}^{(E_1)}$ is shown in~Fig.~\ref{fig:it1}.

\subsubsection*{Case 1.4 - information broadcasting with first environment maximally mixed}

In this case the first environment $E_1$ is initially maximally mixed, and the second one pure. One may expect that there is no information transfer to $E_1$.

\paragraph{External laboratory perspective.}

As expected, conditional states on $E_1$, $\{\rho_{E_1|i}^{(C)}\}$ are completely indistinguishable. On the other hand, the conditional states on $E_2$ are fully distinguishable at the time moment~$t=1$.

\paragraph{First environment's perspective.}

From the point of view of $E_1$, all probabilities $p_i^{(E_1)}$ are identical and constant over time. This supports the conclusion that from this perspective no knowledge regarding the central system $S$ is obtained. The mutual information is identical for all states of the central system; it is initially quite large ($3.585$ at the time $t=0$), but monotonically decreasing to $0$ at the time moment~$t=1$. The plot of $I_\text{mean}^{(E_1)}$ is shown in~Fig.~\ref{fig:it1}.

The conditional states of $C$ are completely indistinguishable. The conditional states of $E_2$ are initially indistinguishable, but monotonically tends to the full distinguishability.

\subsubsection*{Case 1.5 - information broadcasting with second environment maximally mixed}

This case is similar to the previous one, where one of the environments is initially maximally mixed, but this time, the second environment, $E_2$ is initially maximaly mixed, as given in Eq.~\eqref{eq:rho_mpm}.

\paragraph{External laboratory perspective.}

As implicated by the symmetry of the scenario, fidelity is identical in values as in the Case 1.4 from $C$'s reference frame, with swapped environments.

\paragraph{First environment's perspective.}

In this case all probabilities $\{p_i^{(E_1)}\}$ are initially identical, and tend to $1$ for $p_0$ and $0$ for the others at the time moment~$t=1$. Furthermore, the values of probabilities are the same as in the Case 1.1 (from $E_1$'s point of view). This means that the different state of the second environment $E_2$ did not influence the perceptions of $E_1$.

The mutual information is $0$ all the time for all central system's states. The fidelity of all conditional states remain quite high for both $C$ and $E_2$ ($0.640$ and $1$, respectively), nonetheless the error probability upper bound decreases to $0$ for both $C$ and $E2$. This is caused not by the distinguishability of conditional states but by the degeneracy of the SBS when only one probability is non-zero.

\subsection{Self-evolution of environments\label{app:subsec:self-evolution_environments}}

In the previous subsection, we dealt with the situation when the only time evolution was caused by the interaction of the environments with the central system. In the second group of cases we add a self-evolution Hamiltonian to each  environment. As described in the main text, the self-evolution Hamiltonians allows for jumps between neighbouring states in the Case 2.1, and for random jumps with random rates in the Case 2.2. For both cases the initial joint state $\rho_\text{mpp}$ is given in Eq.~\eqref{eq:rho_mpp}.

\subsubsection*{Case 2.1 - self-evolution of environments}

\paragraph{External laboratory perspective.}

From symmetry it follows both environments $E_1$ and $E_2$ are identical. The value of mean mutual information, $I_\text{mean}^{(C)}=0$ for all moments of time. The fidelities decreases monotonically from the value $1$ at the time moment~$t=0$ to the value $0$ at the time~$t=1$. Afterwards the fidelities remain constantly $0$. The last observation is a direct consequence of the fact that the fidelity is invariant under unitary rotations, such like the self-rotations of the environments.

\paragraph{First environment's perspective.}

Initially in the $E_1$ reference frame, all probabilities $\{p_i^{(E_1)}\}$ are identical and equal to $\frac{1}{D}$, and tend to $1$ for $p_0^{(E_1)}$ at the time moment~$t=1$. However, for the times of order of $10$, the probabilities adjacent to $p_0^{(E_1)}$ (\emph{i.e.} $\{p_1^{(E_1)}\}$ and $\{p_{D-1}^{(E_1)}\}$) begin to be non-negligible. At the times of order $100$, each probability becomes significant, and $p_0^{(E_1)}$ drops to the value $0.05$ (and will revive). We can see this in Fig.~\ref{fig:plotprobp1e1tm}. With long-term averaging all probabilities are the same.

The conditional mutual information,
\begin{equation}
\label{eq:conditionalMutualInformation}
H_2\left(\rho_{E_1|i}^{(C)}\right) + H_2\left(\rho_{E_2|i}^{(C)}\right) - H_2\left(\rho_{E_1 E_2|i}^{(C)}\right),
\end{equation}
is the largest for the state $i=0$ of the central system. The mean mutual information $I_\text{mean}^{(E_1)}$ is zero at the beginning and at the time~$t=1$, while its maximum of $1.690$ occurs at time~$t=0.5$. After the time moment~$t\geq1$, the error upper bound is constantly equal to $0$. The difference between $I_\text{mean}^{(E_1)}$ in this case and in the Case 1.1 do not exceed $0.0016$ (cf.~Fig.~\ref{fig:it1}).

The fidelities of $C$ are initially equal to $1$. The upper bound of the guessing probability in Eq.~\eqref{eq:errBound}  drops to a small value of $0.04$ at the time moments close to $1$, but for large times it is high and becomes trivial at the time $t\approx20$. On the other hand, for $E_2$ fidelities are also initially equal to $1$, but afterwards they fall quite quickly to the value of $0$ at the time moments close to $1$ and remain equal to $0$ for all time moments after $1$. We note that before time moment~$30$ most of the probabilities $\{p_i^{(E_1)}\}$ are close to $0$ and their fidelity does not influence the upper bound, but after this moment their fidelities are well defined and equal to $0$.

\subsubsection*{Case 2.2 - random self-evolution of environments}

Since in this case the jumps are not distance-dependent, it is expected that locality-related phenomena does not occur in this case, in contrast to the previous Case 2.1.

\paragraph{External laboratory perspective.}

In the $C$'s perspective, the mutual information is always equal to $0$. The fidelities of both $E_1$ and $E_2$ are very similar (up to random fluctuations), and are at a similar level as Case 2.1 (from perspective $C$), but does not converge to $0$ at the time~$t=1$, but instead  to low values between $0.001 - 0.004$ (depending on the choice of the pair of conditional states). For the same reason as in the previous case, they are constant after the time~$t\geq 1$.

\paragraph{First environment's perspective.}

From the point of view of $E_1$ all the probabilities initially identical. $p_0^{(E_1)}$ tend to $1$ for the time $t=1$. Comparing to the Case 2.1 (from $E_1$ perspective), other probabilities becomes non-negligible faster, already at the time of order $5$, see~Fig.~\ref{fig:plotprobp1e1tm}. Another difference is that, as expected, all of these probabilities behave similarly, unlike the local spread to neighbours in the Case 2.1.

The mean mutual information starts with the value $0$, with a local maximum $1.693$ at the time $t=0.5$, and drops back down to $0$ at the time moment~$t=1$. Afterwards it increases and reaches saturation level of about $2.5$ at the time $t\approx 150$---see~Fig.~\ref{fig:ie1tm} in the main text.

Subsystem $C$'s fidelities remains very high all the time, although for the time moments close to $1$ the upper bound on the probability of error in Eq.~\eqref{eq:errBound} is low ($0.109$) but it increases quickly (and becomes trivial for the time $t=8$). For subsystem $E_2$, the fidelities and error probability upper bound drops to $0$ for the time close to $1$ and begin to increase from time $t=10$, to become trivial at the time of order $30$.

\subsection{Mutual information transfer between environments\label{app:subsec:mutual_information_transfer}}

This group of cases investigates interactions between environments (without self-evolution). Case 3.1 has distance dependent interactions (as described in the main text), and Case 3.2 has random rates of jumps. The initial states are such that  both $E_1$ and $E_2$ located at the same position, given in Eq.~\eqref{eq:rho_mpp}.

\subsubsection*{Case 3.1 - simple mutual information transfer}

\paragraph{External laboratory perspective.}

In Fig.~\ref{fig:ictm}, we see that  the mutual information is initially $0$ and gradually increases to significant values (above $1$) for the time of order $20$, then fluctuates, sometimes reaching even the value $3.8$.

\begin{figure}
	\centering
	\includegraphics[width=0.95\linewidth]{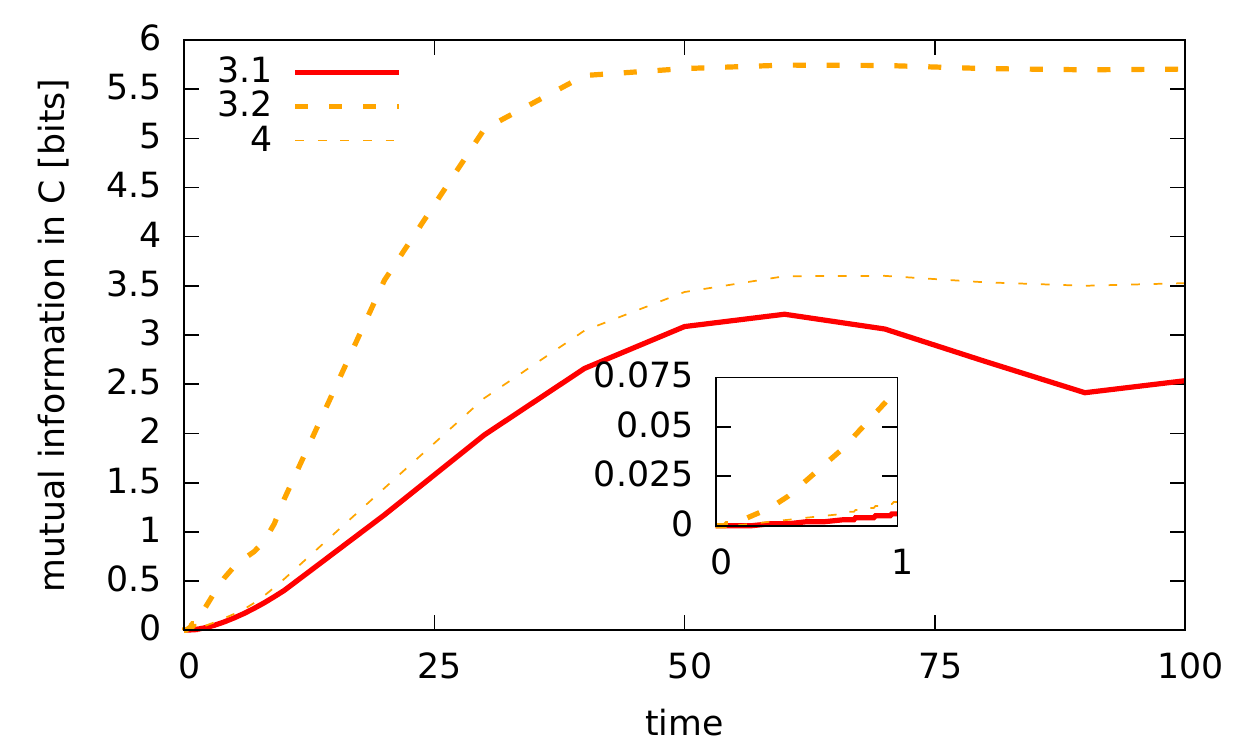}
	\caption{(Color on-line) The values of mean mutual information between environments $I_\text{mean}^{(C)}$ as a function of time for cases from groups 3 and 4 (cf. Tab.~\ref{tab:MI_C}.)}
	\label{fig:ictm}
\end{figure}

As expected from the symmetry, the fidelities are identical between the two environments, and are the greater the closer the states are. The upper bound in Eq.~\eqref{eq:errBound} decreases to a value close to $0$ ($0.05$ at minimum) at the time moments close to $1$, afterwards it increases quickly and becomes trivial at the time close to $20$ and above.

\paragraph{First environment's perspective.}

Here all probabilities initially identical, for the moment~$1$ $p_0^{(E_1)} = 1$. Afterwards at the time of order $2$ the share of $p_1^{(E_1)}$ and $p_{D-1}^{(E_1)}$ (the locally neighbouring states) gradually increases. At the time of the order $40$ some share of $p_2^{(E_1)}$ and $p_{D-2}^{(E_1)}$ (more distant, but still close states) becomes important. The remaining probabilities even for long times (of order $1000000$) stay close to $0$. Detailed plots of these probabilities is shown in Fig.~\ref{fig:plotprobp1e1tm} and Fig.~\ref{fig:plot31pit100}.

\begin{figure}
	\centering
	\includegraphics[width=0.95\linewidth]{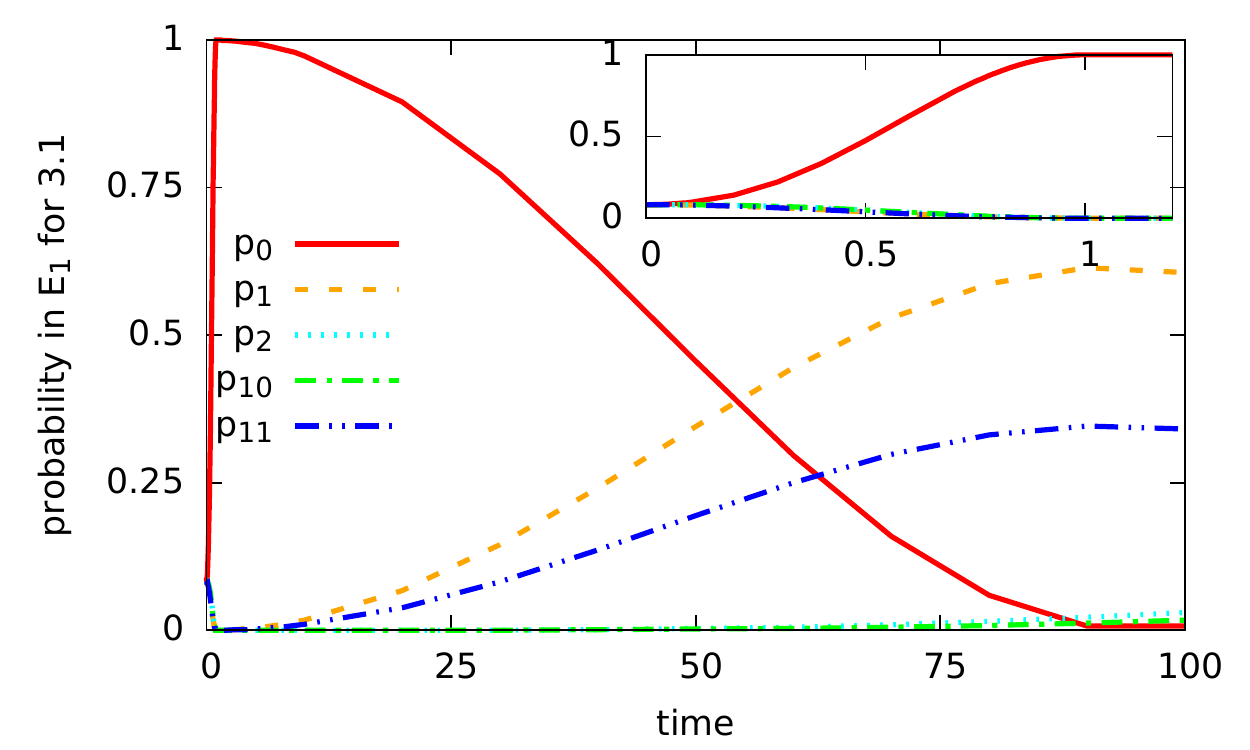}
	\caption{(Color on-line) Probabilities $\{p_i^{(E_1)}\}_{i=0,1,2,10,11}$ for the Case 3.1. $p_0^{(E_1)}$ is dominating at the time~$t=1$, afterwards the probabilities neighbouring at the $12$-dimensional ring, \emph{i.e.} $p_1^{(E_1)}$ and $p_{11}^{(E_1)}$ starts to be non-negligible. After some time also more distant probabilities, $p_2^{(E_1)}$ and $p_{10}^{(E_1)}$, start increasing.}
	\label{fig:plot31pit100}
\end{figure}

Mean mutual information starts with the value $0$, reaches its maximum of $1.689$ at the time $t=0.5$, drops again to $0$ for the time $t\approx 1$ and afterwards remains $0$. The  $I_\text{mean}^{(E_1)}$ in this case and in Case 1.1 are very similar, and never differ more than $0.0009$ (cf.~Fig.~\ref{fig:it1}).

The upper bound on the probability of error  Eq.~\eqref{eq:errBound} drops to a value close to $0$ for the time of order $1$ ($0.051$ and $0.011$ for $C$ and $E_2$, respectively). For the subsystem $C$ it increases quickly, and becomes trivial at the time of order $20$. For the subsystem $E_2$ remains low even for large times (average $0.016$ for a time interval of up to $1000000$).

\subsubsection*{Case 3.2 - mutual information transfer with random Hamiltonian}

\paragraph{External laboratory perspective.}

Mean mutual information is initially $0$ and gradually increases to significant values for the time of order $10$, then reaches saturation and maintains a high constant level of about $5.7$ starting from time about $50$ (and does not drop even for the time of order $1000000$) (see~Fig.~\ref{fig:ictm}).

The error probability upper bound is very high, and thus upper bound is only useful for times close to $1$ (when it is equal to $0.383$), and the bound becomes trivial after time~$t=3$.

\paragraph{First environment's perspective.}

At time $t=0$, all probabilities $p_i^{(E_1)}$ are equal. The probability $p_0^{(E_1)}$ dominates for times close to $1$, then all other probabilities increase to a similar degree. After time of order $50$, all probabilities become very similar (see~Fig.~\ref{fig:plotprobp1e1tm}).

The mean mutual information is initially $0$, the maximum of $1.695$ is attained at the time $t=0.5$, returns close to $0$ for at times close to $1$, gradually increases to saturation level of $2.8$ from time of order $50$ and does not drop (see~Fig.~\ref{fig:ie1tm}).

For $C$'s conditional states the upper bound for error probability Eq.~\eqref{eq:errBound} is very high all the time (giving a non-trivial bound only for the time between $0.9$ and $3$), and minimal (equal to $0.429$) at the time~$t=1$. For $E_2$'s conditional states the bound is also low ($0.136$) only near the moment~$1$, and becomes trivial after time~$t=9$. The fidelity states of both subsystems is high all the time.

\subsection{Slightly disturbed global evolution\label{app:subsec:global_evolution}}

Now, we analyse a case where both self-evolution and environmental interactions are present, with the Hamiltonians are described in the main text (paragraphs following Eq.~\eqref{eq:general_interaction_H}), in conjunction with distortion from a random Hamiltonian over the tripartite state of $C$, $E_1$ and $E_2$. This is the only case where the probabilities $\{p_i^{(C)}\}_{i=0}^{D-1}$ evolve in time and the decoherence factors,
\begin{subequations}
	\begin{equation}
		\Gamma^{(C)} \coloneqq \sum_{i \neq j = 0}^{D-1} \Ab{\bra{i}_S \Tr_{E_1 E_2} \left( \rho_{S E_1 E_2}^{(C)} \right) \ket{j}_S},
	\end{equation}
	\begin{equation}
		\Gamma^{(E_1)} \coloneqq \sum_{i \neq j = 0}^{D-1} \Ab{\bra{i}_S \Tr_{C E_2} \left( \rho_{S C E_2}^{(E_1)} \right) \ket{j}_S},
	\end{equation}
\end{subequations}
have to be taken into account.

\subsubsection*{Case 4 - global evolution with simple self-evolution and mutual information transfer}

\paragraph{External laboratory perspective.}

The fluctuations from the uniform probability distribution of $\{p_i^{(C)}\}_{i=0}^{D-1}$ are only visible from time~$t=50$, but nevertheless the standard deviation of the probabilities at any given time (at least till time $t=1000000$) does not exceed $0.002$.

The mean mutual information between the conditional environments increases from $0$, by the value $0.012$ for time $t=1$, to the value of saturation equal to about $3.5$, and after time $t=50$ it fluctuates around this value, see~Fig.~\ref{fig:ictm}. Hence, the environments do not have strong independence.

The upper bound in Eq.~\eqref{eq:errBound} is large for both $E_1$ and $E_2$, the bound is reasonable only close to time~$t=1$, and becomes trivial before time $t=20$. Fidelities remains low till time around $t=10$. The long time average of $\Gamma^{(E_1)}$ is $0.245$, its value at the time~$t=1$ is $0.003$.

\paragraph{First environment's perspective.}

In Fig.~\ref{fig:plotprobp1e1tm}, the distribution of $\{p_i^{(E_1)}\}_{i=0}^{D-1}$ is initially uniform, and by time~$t=1$ it becomes dominated by $p_0^{(E_1)} \approx 0.999$. The adjacent probabilities $p_1^{(E_1)}$ and $p_{D-1}^{(E_1)}$ begin to increase and become significant for the time $t\approx 4$. From time $t\approx300$ the distribution is again uniform with almost no fluctuation (the standard deviation is $0.002$).

The mean mutual information starts from the value $0$, the local maximum of $1.692$ is attained for the time $t=0.5$, then it decreases to $0.001$ at the time~$t=1$, and afterwards it reaches saturation about $3.5$ for the time $t\approx 300$ and remains at this level. The plot of $I_\text{mean}^{(E_1)}$ is shown in~Fig.~\ref{fig:ie1tm}. Hence, in the long time regime, there is no strong independence.

The error probability upper bound is low only around time~$t=1$ ($0.142$ for $C$ and $0.048$ for $E_2$), and the fidelities are high for both environments, meaning that the conditional states are not very distinguishable. The long time average of $\Gamma^{(E_1)}$ is $0.235$, its value at the time moment~$1$ is $0.015$.

\phantomsection
\addcontentsline{toc}{section}{References}

\bibliographystyle{apsrev4-1}
\bibliography{biblio}

\end{document}